\numberwithin{equation}{section}
\title[Banach Manifold Structure of Causal Fermion Systems]{Banach Manifold Structure and Infinite-Dimensional Analysis for Causal Fermion Systems}
\author[F.\ Finster]{Felix Finster}
\author[M.\ Lottner]{Magdalena Lottner \\ \\  January 2021}
\address{Fakult\"at f\"ur Mathematik  Universit\"at Regensburg  D-93040 Regensburg  Germany}
\email{finster@ur.de, magdalena.lottner@ur.de}
\newtheorem{Def}{Definition}[section]
\newtheorem{Thm}[Def]{Theorem}
\newtheorem{Prp}[Def]{Proposition}
\newtheorem{Lemma}[Def]{Lemma}
\newtheorem{Remark}[Def]{Remark}
\newcommand{\Thanks}{\vspace*{.5em} \noindent \thanks}
\newcommand{\beq}{\begin{equation}}
\newcommand{\eeq}{\end{equation}}
\newcommand{\Proof}{\begin{proof}}
	\newcommand{\QED}{\end{proof} \noindent}
\newcommand{\QEDrem}{\ \hfill $\Diamond$}
\newcommand{\la}{\langle}
\newcommand{\ra}{\rangle}
\newcommand{\Sl}{\mbox{$\prec \!\!$ \nolinebreak}}
\newcommand{\Sr}{\mbox{\nolinebreak $\succ$}}
\newcommand{\C}{\mathbb{C}}
\newcommand{\R}{\mathbb{R}}
\newcommand{\1}{\mbox{\rm 1 \hspace{-1.05 em} 1}}
\newcommand{\N}{\mathbb{N}}
\DeclareMathOperator{\tr}{tr}
\renewcommand{\L}{{\mathcal{L}}}
\newcommand{\Sact}{{\mathcal{S}}}
\newcommand{\U}{\text{\rm{U}}}
\renewcommand{\H}{\mathscr{H}}
\newcommand{\Shil}{\mycal{S}}
\newcommand{\Lin}{\text{\rm{L}}}
\newcommand{\F}{{\mathscr{F}}}
\newcommand{\reg}{\text{\rm{reg}}}
\DeclareMathOperator{\Symm}{\mbox{\rm{Symm}}}
\DeclareMathOperator{\re}{Re}
\DeclareMathOperator{\supp}{supp}
\newcommand{\itemD}{\item[{\raisebox{0.125em}{\tiny $\blacktriangleright$}}]}
\newcommand{\J}{\mathfrak{J}}
\newcommand{\s}{\mathfrak{s}}
\newcommand{\Jdiff}{\mathfrak{J}^\text{\rm{\tiny{diff}}}}
\newcommand{\Jtest}{\mathfrak{J}^\text{\rm{\tiny{test}}}}
\newcommand{\Gdiff}{\Gamma^\text{\rm{\tiny{diff}}}}
\newcommand{\Gtest}{\Gamma^\text{\rm{\tiny{test}}}}
\newcommand{\Ctest}{C^\text{\rm{\tiny{test}}}}
\newcommand{\fu}{\mathfrak{u}}
\newcommand{\fv}{\mathfrak{v}}
\newcommand{\bitem}{\begin{itemize}[leftmargin=2.5em]}
\newcommand{\eitem}{\end{itemize}}
\newcommand{\E}{\mathscr{E}}
\newcommand{\id}{\text{id}}
\newcommand{\symm}{\text{\tiny{symm}}}
\newcommand{\bu}{{\mathbf{u}}}
\newcommand{\bv}{\mathbf{v}}
\newcommand{\bw}{\mathbf{w}}
\DeclareFontFamily{OT1}{rsfso}{}
\DeclareFontShape{OT1}{rsfso}{m}{n}{ <-7> rsfso5 <7-10> rsfso7 <10-> rsfso10}{}
\DeclareMathAlphabet{\mycal}{OT1}{rsfso}{m}{n}
\begin{document}
\maketitle
\begin{abstract}
A mathematical framework is developed for the analysis of causal fermion systems in the
infinite-dimensional setting.
It is shown that the regular spacetime point operators form a Banach manifold endowed with a canonical
Fr\'echet-smooth Riemannian metric.
The so-called expedient differential calculus is introduced with the purpose of treating
derivatives of functions on Banach spaces which are differentiable only in certain directions.
A chain rule is proven for H\"older continuous functions which are differentiable
on expedient subspaces.
These results are made applicable to causal fermion systems by proving 
that the causal Lagrangian is H\"older continuous.
Moreover, H\"older continuity is analyzed for the integrated causal Lagrangian.
\end{abstract}
\tableofcontents

\section{Introduction}
The theory of {\em{causal fermion systems}} is a recent approach to fundamental physics
(see the basics in Section~\ref{secprelim}, the reviews~\cite{ nrstg, review, dice2014}, the textbook~\cite{cfs}
or the website~\cite{cfsweblink}).
In this approach, spacetime and all objects therein are described by a measure~$\rho$
on a set~$\F$ of linear operators of rank at most~$2n$ on a Hilbert space~$(\H, \la .|. \ra_\H)$. 
The physical equations are formulated via the so-called {\em{causal action principle}},
a nonlinear variational principle where an action~$\Sact$ is minimized under variations of the measure~$\rho$.
If the Hilbert space~$\H$ is {\em{finite-dimensional}}, the set~$\F$ is a locally compact topological
space. Making essential use of this fact, it was shown in~\cite{continuum} that the causal action principle
is well-defined and that minimizers exist. Moreover, as is worked out in detail in~\cite{gaugefix},
the interior of~$\F$ (consisting of the so-called {\em{regular points}}; see Definition~\ref{defregular})
has a smooth manifold structure.
Taking these structures as the starting point, {\em{causal variational principles}} were formulated and studied
as a mathematical generalization of the causal action principle, where an action of the form
\[ \Sact = \int_\F d\rho(x) \int_\F d\rho(y)\: \L(x,y) \]
is minimized for a given lower-semicontinuous Lagrangian~$\L : \F \times \F \rightarrow \R^+_0$
on an (in general non-compact) manifold~$\F$ under variations of~$\rho$ within the class of
regular Borel measures, keeping the total volume~$\rho(\F)$ fixed. We refer the reader interested
in causal variational principles to~\cite[Section~1 and~2]{noncompact} and the references therein.

This article is devoted to the case that the Hilbert space~$\H$ is {\em{infinite-dimensional}}
and separable. While the finite-dimensional setting seems suitable for describing physical spacetime
on a fundamental level (where spacetime can be thought of as being discrete on a microscopic length
scale usually associated to the Planck length), an infinite-dimensional Hilbert space arises in
mathematical extrapolations where spacetime is continuous and has infinite volume.
Most notably, infinite-dimensional Hilbert spaces come up in the examples of causal fermion systems
describing Minkowski space (see~\cite[Section~1.2]{cfs} or~\cite{oppio}) or a globally hyperbolic
Lorentzian manifold (see for example~\cite{nrstg}), and it is also needed for analyzing the limiting
case of a classical interaction (the so-called continuum limit; see~\cite[Section~1.5.2 and Chapters~3-5]{cfs}).
A workaround to avoid infinite-dimensional analysis
is to restrict attention to locally compact variations, as is done in~\cite[Section~2.3]{dirac}.
Nevertheless, in view of the importance of the examples and physical applications, it is a task of
growing significance to analyze causal fermion systems systematically in the infinite-dimensional
setting. It is the objective of this paper to put this analysis on a sound mathematical basis.

We now outline the main points of our constructions and explain our main results.
Extending methods and results in~\cite{gaugefix} to the infinite-dimensional setting, we endow the
set of all regular points of~$\F$ with the structure of a {\em{Banach manifold}}
(see Definition~\ref{defregular} and Theorem~\ref{thmfrechet}). To this end, we construct an
atlas formed of so-called {\em{symmetric wave charts}} (see Definition~\ref{defswc}).
We also show that the Hilbert-Schmidt norm on finite-rank operators on~$\H$ gives rise to
a Fr{\'e}chet-smooth {\em{Riemannian metric}} on this Banach manifold.
More precisely, in Theorems~\ref{thm311} and~\ref{thmriemann}
we prove that~$\F^\reg$ is a smooth Banach submanifold of the
Hilbert space~$\Shil(\H)$ of selfadjoint Hilbert-Schmidt operators, with the Riemannian metric
given by
\[ 
g_x \::\: T_x^\Shil \F^\reg \times T_x^\Shil \F^\reg \rightarrow \R \:,\qquad
g_x(A,B) := \tr(AB) \:. \]

In order to introduce {\em{higher derivatives}} at a regular point~$p \in \F$, our strategy is to
always work in the distinguished symmetric wave chart around this point.
This has the advantage that we can avoid the analysis of differentiability properties under
coordinate transformations. The remaining difficulty is that the causal Lagrangian~$\L$ and other
derived functions are not differentiable. Instead, directional derivatives exist only in certain directions.
In general, these directions do not form a vector space. As a consequence, the derivative
is not a linear mapping, and the usual product and
chain rules cease to hold. On the other hand, these computation rules
are needed in the applications, and it is often sensible to assume that they do hold.
This motivates our strategy of looking for a {\em{vector space}} on which the function under consideration
is differentiable. Clearly, in this way we lose information on the differentiability in certain directions
which do not lie in such a vector space. But this shortcoming is outweighted by the
benefit that we can avoid the subtleties of non-smooth analysis, which, at least for most applications
in mind, would be impractical and inappropriately technical.
Clearly, we want the subspace to be as large as possible, and moreover it should be defined
canonically without making any arbitrary choices. These requirements lead us to the
notion of {\em{expedient subspaces}} (see Definition~\ref{defexpedient}).
In general, the expedient subspace is neither dense nor closed.
On these expedient subspaces, the function is G{\^a}teaux differentiable, the derivative is a linear
mapping, and higher derivatives are multilinear.

The differential calculus on expedient subspaces is compatible with the chain rule in the following sense:
If~$f$ is locally H\"older continuous, $\gamma$ is a smooth curve whose derivatives up to sufficiently high
order lie in the expedient differentiable subspace of~$f$, then the composition~$f \circ \gamma$
is differentiable and the chain rule holds (see Proposition~\ref{prpchain}), i.e.\
\[ (f\circ \gamma)'(t_0) = D^{\E}f|_{x_0}\, \gamma'(t_0) \:, \]
where the index~$\E$ denotes the derivative on the expedient subspace.
We also prove a chain rule for higher derivatives (see Proposition~\ref{prpchain2}).
The requirement of H\"older continuity is a crucial assumption needed in order to control the
error term of the linearization. The most general statement is Theorem~\ref{thmchaingen} where H\"older continuity is
required only on a subspace which contains the curve~$\gamma$ locally.

We also work out how the differential calculus on expedient subspaces applies to the setting of causal
fermion systems. In order to establish the chain rule, we prove that the causal Lagrangian is indeed
locally H\"older continuous with uniform H\"older exponent (Theorem~\ref{thmhoelder}),
and we analyze how the H\"older constant depends on the base point (Theorem~\ref{thmhoelderglobal}).
Moreover, we prove that for all $x,y \in \F$ there is a neighborhood $U\subseteq \F$ of~$y$ with
(see~\eqref{GlobalHoelder2})
\[ | \L(x,y) - \L(x,\tilde{y}) | \leq c(n, y) \|x\|^2\, \| \tilde{y}-y\|^{\frac{1}{2n-1}}
	\qquad \text{for all~$\tilde{y}\in U$} \]
(where~$2n$ is the maximal rank of the operators in~$\F$).
Relying on these results, we can generalize the jet formalism as introduced in~\cite{jet}
for causal variational principles to the infinite-dimensional setting (Section~\ref{secjet}).
We also work out the chain rule for the Lagrangian (Theorem~\ref{thmLchain})
and for the function~$\ell$ obtained by
integrating one of the arguments of the Lagrangian (Theorem~\ref{thmlchain}),
\beq \label{elldef}
\ell(x) = \int_M \L(x,y)\: d\rho(y) - \s
\eeq
(where~$\s$ is a positive constant).

The paper is organized as follows. Section~\ref{secprelim} provides the necessary preliminaries
on causal fermion systems and infinite-dimensional analysis.
In Section~\ref{secFreg} an atlas of symmetric wave charts is constructed, and it is shown
that this atlas endows the regular points of~$\F$ with the structure of a Fr{\'e}chet-smooth Banach manifold.
Moreover, it is shown that the Hilbert-Schmidt norm induces a Fr{\'e}chet-smooth Riemannian metric.
In Section~\ref{secexpedient} the differential calculus on expedient subspaces is developed.
In Section~\ref{secnosmooth}, this differential calculus is applied to causal fermion systems.
Appendix~\ref{appfrechet} gives some more background information on the Fr{\'e}chet derivative.
Finally, Appendix~\ref{appsymm} provides details on how the Riemannian metric looks like in
different charts.

We finally point out that, in order to address a coherent readership,
concrete applications of our methods and results for example to physical spacetimes have
not been included here. The example of causal fermion systems in Minkowski space
will be worked out separately in~\cite{lagrange-hoelder}.

\section{Preliminaries} \label{secprelim}
\subsection{Causal Fermion Systems and the Causal Action Principle} \label{seccfs}
We now recall the basic definitions of a causal fermion system and the causal action principle.

\begin{Def} \label{defcfs} (causal fermion system) {\em{ 
Given a separable complex Hilbert space~$\H$ with scalar product~$\la .|. \ra_\H$
and a parameter~$n \in \N$ (the {\em{``spin dimension''}}), we let~$\F \subseteq \Lin(\H)$ be the set of all
selfadjoint operators on~$\H$ of finite rank, which (counting multiplicities) have
at most~$n$ positive and at most~$n$ negative eigenvalues. On~$\F$ we are given
a positive measure~$\rho$ (defined on a $\sigma$-algebra of subsets of~$\F$), the so-called
{\em{universal measure}}. We refer to~$(\H, \F, \rho)$ as a {\em{causal fermion system}}.
}}
\end{Def} \noindent
A causal fermion system describes a spacetime together
with all structures and objects therein.
In order to single out the physically admissible
causal fermion systems, one must formulate physical equations. To this end, we impose that
the universal measure should be a minimizer of the causal action principle,
which we now introduce. 

For any~$x, y \in \F$, the product~$x y$ is an operator of rank at most~$2n$. 
However, in general it is no longer a selfadjoint operator because~$(xy)^* = yx$,
and this is different from~$xy$ unless~$x$ and~$y$ commute.
As a consequence, the eigenvalues of the operator~$xy$ are in general complex.
We denote these eigenvalues counting algebraic multiplicities
by~$\lambda^{xy}_1, \ldots, \lambda^{xy}_{2n} \in \C$
(more specifically,
denoting the rank of~$xy$ by~$k \leq 2n$, we choose~$\lambda^{xy}_1, \ldots, \lambda^{xy}_{k}$ as all
the nonzero eigenvalues and set~$\lambda^{xy}_{k+1}, \ldots, \lambda^{xy}_{2n}=0$).
We introduce the Lagrangian and the causal action by
\begin{align}
\text{\em{Lagrangian:}} && \L(x,y) &= \frac{1}{4n} \sum_{i,j=1}^{2n} \Big( \big|\lambda^{xy}_i \big|
- \big|\lambda^{xy}_j \big| \Big)^2 \label{Lagrange} \\
\text{\em{causal action:}} && \Sact(\rho) &= \iint_{\F \times \F} \L(x,y)\: d\rho(x)\, d\rho(y) \:. \label{Sdef}
\end{align}
The {\em{causal action principle}} is to minimize~$\Sact$ by varying the measure~$\rho$
under the following constraints:
\begin{align}
\text{\em{volume constraint:}} && \rho(\F) = \text{const} \quad\;\; & \label{volconstraint} \\
\text{\em{trace constraint:}} && \int_\F \tr(x)\: d\rho(x) = \text{const}& \label{trconstraint} \\
\text{\em{boundedness constraint:}} && \iint_{\F \times \F} 
|xy|^2
\: d\rho(x)\, d\rho(y) &\leq C \:, \label{Tdef}
\end{align}
where~$C$ is a given parameter, $\tr$ denotes the trace of a linear operator on~$\H$, and
the absolute value of~$xy$ is the so-called spectral weight,
\[ |xy| := \sum_{j=1}^{2n} \big|\lambda^{xy}_j \big| \:. \]
This variational principle is mathematically well-posed if~$\H$ is finite-dimensional.
For the existence theory and the analysis of general properties of minimizing measures
we refer to~\cite{discrete, continuum, lagrange}.
In the existence theory, one varies in the class of regular Borel measures
(with respect to the topology on~$\Lin(\H)$ induced by the operator norm),
and the minimizing measure is again in this class. With this in mind, here we always assume that
\[ 
\text{$\rho$ is a regular Borel measure}\:. \]

Let~$\rho$ be a {\em{minimizing}} measure. {\em{Spacetime}}
is defined as the support of this measure,
\[ 
M := \supp \rho \:. \]
Thus the spacetime points are selfadjoint linear operators on~$\H$.
These operators contain a lot of additional information which, if interpreted correctly,
gives rise to spacetime structures like causal and metric structures, spinors
and interacting fields. We refer the interested reader to~\cite[Chapter~1]{cfs}.

The only results on the structure of minimizing measures
which will be needed here concern the treatment of the
trace constraint and the boundedness constraint.
As a consequence of the trace constraint, for any minimizing measure~$\rho$
the local trace is constant in spacetime, i.e.\
there is a real constant~$c \neq 0$ such that (see~\cite[Proposition~1.4.1]{cfs})
\[ 
\tr x = c \qquad \text{for all~$x \in M$} \:. \]
Restricting attention to operators with fixed trace, the trace constraint~\eqref{trconstraint}
is equivalent to the volume constraint~\eqref{volconstraint} and may be disregarded.
The boundedness constraint, on the other hand, can be treated with a Lagrange multiplier.
Indeed, as is made precise in~\cite[Theorem~1.3]{lagrange}, for every minimizing measure~$\rho$, 
there is a Lagrange multiplier~$\kappa>0$ such that~$\rho$ is a local minimizer
of the causal action
with the Lagrangian replaced by
\[ \L_\kappa(x,y) := \L(x,y) + \kappa\, |xy|^2 \:, \]
leaving out the boundedness constraint. 

\subsection{Fr{\'e}chet and G{\^a}teaux Derivatives}
We now recall a few basic concepts from the differential calculus on normed vector spaces.
In what follows, we let~$(E, \|.\|_E)$ and~$(F, \|.\|_F)$ be real normed vector spaces.
The most common concept is that of the Fr{\'e}chet derivative.
\begin{Def} \label{deffrechet}
Let~$U \subseteq E$ be open and~$f : U \rightarrow F$ be an $F$-valued function on~$U$.
The function~$f$ is {\bf{Fr{\'e}chet-differentiable}} in~$x_0 \in U$ if there is a
bounded linear mapping~$A \in \Lin(E, F)$ such that
\[ f(x) = f(x_0) + A\, (x-x_0) + r(x) \:, \]
where the error term~$r : U \rightarrow F$ goes to zero faster than linearly, i.e.
\[ \lim_{x \rightarrow x_0, x \neq x_0} \frac{\|r(x)\|_F}{\|x-x_0\|_E} = 0 \:. \]
The linear operator~$A$ is the {\bf{Fr{\'e}chet derivative}}, also denoted by~$Df|_{x_0}$.
A function is Fr{\'e}chet-differentiable in~$U$ if it is Fr{\'e}chet-differentiable at every point of~$U$.
\end{Def} \noindent
The Fr{\'e}chet derivative is uniquely defined. Moreover, the concept can be iterated to
define higher derivatives. Indeed, if~$f$ is differentiable in~$U$, its derivative~$Df$ is a mapping
\[ Df \::\: U \rightarrow \Lin(E,F) \:. \]
Since~$\Lin(E,F)$ is a normed vector space (with the operator norm), we can apply
Definition~\ref{deffrechet} once again to define the second derivative at a point~$x_0$ by
\[ D^2f|_{x_0} = D\big( Df \big) \big|_{x_0} \;\in\; \Lin\big( E, \Lin(E,F) \big) \:. \]
The second derivative can also be viewed as a bilinear mapping from~$E$ to~$F$,
\[ D^2f|_{x_0} : E \times E \rightarrow F\:,\qquad D^2f|_{x_0}(u,v) := \Big( D\big( Df \big) \big|_{x_0} u, v \Big) \:. \]
It is by definition bounded, meaning that there is a constant~$c>0$ such that
\[  \big\| D^2f|_{x_0}(u,v) \big\|_F \leq c \, \|u\|_E\, \|v\|_E \qquad \text{for all~$u, v \in E$}\:. \]
By iteration, one obtains similarly the Fr{\'e}chet derivatives of order~$p \in \N$
as multilinear operators
\[ D^pf|_{x_0} : \underbrace{E \times \cdots \times E}_{\text{$p$ factors}} \rightarrow F \:. \]
A function is {\em{Fr{\'e}chet-smooth}} on~$U$ if it is Fr{\'e}chet-differentiable
to every order.

\begin{Lemma} \label{SymmF-deriv} If the function~$f : U \subseteq E \rightarrow F$ is $p$ times Fr{\'e}chet-differentiable
in~$x_0 \in U$, then its $p^\text{th}$ Fr{\'e}chet derivative is symmetric, i.e.\
for any~$u_1, \ldots, u_p \in E$ and any permutation~$\sigma \in {\mathcal{S}}_p$,
\[ D^pf|_{x_0}\big(u_1, \ldots, u_p \big) = D^pf|_{x_0}\big(u_{\sigma(1)}, \ldots, u_{\sigma(p)} \big) \:. \]
\end{Lemma}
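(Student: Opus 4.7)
The proof will proceed by induction on $p$. For the base case $p=2$, the key observation is that the symmetric second-order difference
\[ \Delta(u,v) \::=\: f(x_0+u+v) - f(x_0+u) - f(x_0+v) + f(x_0) \]
is manifestly symmetric under the exchange $u \leftrightarrow v$. Since $Df$ exists on some neighborhood of $x_0$ and is Fr{\'e}chet differentiable at $x_0$, the fundamental theorem of calculus applied to $t \mapsto f(x_0+tu+v)-f(x_0+tu)$ yields
\[ \Delta(u,v) \:=\: \int_0^1 \big( Df|_{x_0+tu+v} - Df|_{x_0+tu} \big)(u)\,dt \:. \]
Substituting the first-order expansion $Df|_{x_0+h} = Df|_{x_0} + D^2f|_{x_0}(h,\cdot) + r(h)$, in which $\|r(h)\| = o(\|h\|_E)$, the linear contribution integrates to $D^2f|_{x_0}(v,u)$ while the remainder is $o\big((\|u\|_E+\|v\|_E)^2\big)$. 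Invoking the built-in identity $\Delta(u,v)=\Delta(v,u)$, replacing $(u,v)$ by $(su,sv)$, and exploiting the bilinearity of $D^2f|_{x_0}$, the limit $s \to 0^+$ forces $D^2f|_{x_0}(u,v)=D^2f|_{x_0}(v,u)$.

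For the inductive step, I assume the lemma for order $p-1$ and suppose $f$ is $p$-times Fr{\'e}chet differentiable at $x_0$. By the iterative definition of higher derivatives, $D^{p-1}f$ is defined throughout some neighborhood $U$ of $x_0$, and $f$ is in particular $(p-1)$-times Fr{\'e}chet differentiable at every point of $U$. Applying the induction hypothesis pointwise on $U$ shows that $D^{p-1}f$ takes values in the closed subspace of bounded symmetric $(p-1)$-multilinear maps $E^{p-1} \to F$; its Fr{\'e}chet derivative $D^p f|_{x_0}$ therefore also lies in this subspace, which translates into symmetry of $D^p f|_{x_0}(v_1,\ldots,v_p)$ in the last $p-1$ arguments. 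To produce the remaining transposition $v_1 \leftrightarrow v_2$, I apply the base case to $g := D^{p-2}f$, regarded as a map from $U$ into the Banach space of bounded $(p-2)$-multilinear maps. This $g$ is twice Fr{\'e}chet differentiable at $x_0$, and a direct identification of iterated derivatives gives $D^2 g|_{x_0}(v_1,v_2) = D^p f|_{x_0}(v_1,v_2,\cdot,\ldots,\cdot)$; the base case applied to $g$ then yields $D^p f|_{x_0}(v_1,v_2,v_3,\ldots,v_p) = D^p f|_{x_0}(v_2,v_1,v_3,\ldots,v_p)$. Since the transposition $(1\,2)$ together with the subgroup ${\mathcal{S}}_{p-1}$ permuting the positions $2,\ldots,p$ generates all of ${\mathcal{S}}_p$, full symmetry follows.

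I expect the main obstacle to lie in the careful bookkeeping of the error term in the base case: the defining property $\|r(h)\|/\|h\|_E \to 0$ as $h \to 0$ must be upgraded to a bound that is uniform in the integration variable $t \in [0,1]$. This is resolved by dominating $\|r(h)\|/\|h\|_E$ on a small ball around $0$ by a monotone modulus $\omega(s)$ with $\omega(s) \to 0$ as $s \to 0^+$; the resulting uniform estimate produces the required $o\big((\|u\|_E+\|v\|_E)^2\big)$ bound on $\Delta(u,v)-D^2f|_{x_0}(v,u)$. Apart from this technicality, the remaining steps are essentially formal, drawing only on the iterative construction of higher Fr{\'e}chet derivatives reviewed earlier in this section and on the elementary fact that ${\mathcal{S}}_p$ is generated by $(1\,2)$ together with ${\mathcal{S}}_{p-1}$.
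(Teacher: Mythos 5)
The paper itself omits the proof of this lemma, referring instead to~\cite[Section~4.4]{coleman}; your argument is the classical one found there and in Dieudonn\'e: symmetry of the second difference~$\Delta(u,v)$, the first-order expansion of~$Df$ at~$x_0$, and the rescaling~$(u,v)\mapsto(su,sv)$ force the symmetry of~$D^2f|_{x_0}$, and your inductive step (pointwise symmetry of~$D^{p-1}f$ near~$x_0$ for the last $p-1$ slots, plus the base case applied to~$g=D^{p-2}f$ for the transposition~$(1\,2)$, which together generate~${\mathcal{S}}_p$) is sound. One small point worth tightening: in this section $E$ and~$F$ are only assumed to be normed, not complete, so the $F$-valued Riemann integral in your base case need not exist; this is avoided by applying the mean value inequality to~$t\mapsto f(x_0+tu+v)-f(x_0+tu)-t\,D^2f|_{x_0}(v,u)$, which gives the same $o\big((\|u\|_E+\|v\|_E)^2\big)$ estimate without integration (your uniform modulus~$\omega$ is exactly what is needed for this bound). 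With that substitution the proof is complete and agrees with the argument the paper cites.
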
 \noindent
We omit the proof, which can be found for example in~\cite[Section~4.4]{coleman}.
For the Fr{\'e}chet derivative, most concepts familiar from the finite-dimensional setting
carry over immediately. In particular, the composition of Fr{\'e}chet-differentiable functions
is again Fr{\'e}chet-differentiable. Moreover, the chain and product rules hold.
We refer for the details to~\cite[Sections~2.2~and~2.3]{coleman} and \cite[Chapter~8]{dieudonne1}\footnote{In this reference, everything is worked out in the case of Banach spaces, but the completeness is not needed for these results.} and Appendix~\ref{appfrechet}.

A weaker concept of differentiability which we will use here is
G{\^a}teaux differentiability.
\begin{Def} \label{defgateaux}
Let~$U \subseteq E$ be open and~$f : U \rightarrow F$ be an $F$-valued function on~$U$.
The function~$f$ is {\bf{G{\^a}teaux differentiable}} in~$x_0 \in U$ in the direction~$u \in E$
if the limit of the difference quotient exists,
\[ d_u f(x_0) := \lim_{h \rightarrow 0, h \neq 0} \frac{f(x_0+h u) - f(x_0)}{h} \:. \]
The resulting vector~$d_u f(x_0) \in F$ is the {\bf{G{\^a}teaux derivative}}.
\end{Def} \noindent
By definition, the G{\^a}teaux derivative is homogeneous of degree one, i.e.
\[ d_{\lambda u} f(x_0) = \lambda\, d_u f(x_0) \qquad \text{for all~$\lambda \in \R$}\:. \]
Moreover, if~$f$ is Fr{\'e}chet-differentiable in~$x_0$, then it is also G{\^a}teaux differentiable
in any direction~$u \in E$ and
\[ d_u f(x_0) = Df|_{x_0} u \:. \]
However, the converse is not true because, even if the G{\^a}teaux derivatives
exist for any~$u \in E$, it is in general not possible to represent them by a
bounded linear operator. As a consequence, 
the chain and product rules in general do not hold for G{\^a}teaux derivatives.
We shall come back to this issue in Section~\ref{secnosmooth}.

\subsection{Banach Manifolds} \label{secbanachmanifold}
We recall the basic definition of a smooth Banach manifold
(for more details see for example~\cite[Chapter~73]{zeidlerFA4}).
\begin{Def} \label{defbanachmanifold}
Let~$B$ be a Hausdorff topological space and~$(E, \|.\|_E)$ a Banach space.
A {\bf{chart}}~$(U, \phi)$ is a pair consisting of an open subset~$U \subseteq B$
and a homeomorphism~$\phi$ of~$U$ to an open subset~$V := \phi(U)$ of~$E$, i.e.
\[ \phi \::\: U \overset{\text{\tiny{open}}}{\subseteq} B \rightarrow V \overset{\text{\tiny{open}}}{\subseteq} E \:. \]
A {\bf{smooth atlas}}~${\mathcal{A}} = \big( \phi_i, U_i, E)_{i \in I}$ is a collection of charts
(for a general index set~$I$) with the properties that the domains of the charts cover~$B$,
\[ B = \bigcup_{i \in I} U_i \]
and that for any~$i, j \in I$, the transition map
\[ \phi_j \circ \phi_i^{-1} \::\: \phi_i \big( U_i \cap U_j \big) \subseteq E \rightarrow \phi_j \big( U_i \cap U_j \big) \]
is Fr{\'e}chet-smooth. Two atlases $\big( \phi_i, U_i, E)_{i \in I}$ and $\big( \psi_i, V_i, E)_{j \in J}$ are called {\bf{equivalent}} if all the transition maps $\psi_j \circ \phi_i^{-1}$ and $\phi_i \circ \psi_j^{-1}$ are Fr{\'e}chet-smooth. We denote the corresponding equivalence class by $[\mathcal{A}]$. The union of the charts of all atlases in $[\mathcal{A}]$ is called {\bf{maximal atlas}} $\mathcal{A}_{\mathrm{max}}$.
The triple~$(B, E, {\mathcal{A}})$ is referred to as a {\bf{smooth Banach manifold}} with differentiable structure provided by $\mathcal{A}_{\mathrm{max}}$.
\end{Def}
\begin{Def}
	Just as in the case of finite-dimensional manifolds, we call a function $f: U\subseteq A \rightarrow B$ between two smooth Banach manifolds $(A, E, \mathcal{A})$ and $(B, G, \mathcal{B})$ (with $U\subseteq A$ open) $n$-times \textbf{(Fr{\'e}chet) differentiable} (resp. \textbf{smooth}) if for all combinations of charts $\phi_a:U_a \rightarrow V_a$ and $\phi_b: U_b \rightarrow V_b$ of some (and thus all) atlases $\tilde{\mathcal{A}}$ in $[\mathcal{A}]$ respectively $\tilde{\mathcal{B}}$ in $[\mathcal{B}]$,
	the mapping $\phi_b \circ f\circ \phi_a^{-1}: V_a\rightarrow V_b$ is $n$-times (Fr{\'e}chet) differentiable (resp. smooth).
\end{Def}
\section{Smooth Banach Manifold Structure of~$\F^\reg$} \label{secFreg}
In the definition of causal fermion systems, the number of positive or negative eigenvalues
of the operators in~$\F$ can be strictly smaller than~$n$.
This is important because it makes~$\F$ a closed subspace of~$\Lin(\H)$ (with respect to the
norm topology), which in turn is crucial for the general existence results for minimizers
of the causal action principle (see~\cite{continuum} or~\cite{intro}).
However, in most physical examples in Minkowski space or in a Lorentzian spacetime,
all the operators in~$M$ do have exactly~$n$ positive and exactly~$n$ negative eigenvalues.
This motivates the following definition (see also~\cite[Definition~1.1.5]{cfs}).
\begin{Def} \label{defregular} {\em{
An operator~$x \in \F$ is said to be {\em{regular}} if it has the maximal possible rank,
i.e.~$\dim x(\H) = 2n$. Otherwise, the operator is called {\em{singular}}.
A causal fermion system is {\em{regular}} if all its spacetime points are regular.}}
\end{Def} \noindent
In what follows, we restrict attention to regular causal fermion systems. Moreover, it is
convenient to also restrict attention to all those operators in~$\F$ which are regular,
\[ 
\F^\reg := \big\{ x \in \F \:|\: \text{$x$ is regular} \big\} \:. \]
$\F^\reg$ is a dense open subset of~$\F$ (again with respect to the
norm topology on~$\Lin(\H)$).

\subsection{Wave Charts and Symmetric Wave Charts}
We now choose specific charts and prove that the resulting atlas
endows~$\F^\reg$ with the structure of a smooth Banach manifold
(see Definition~\ref{defbanachmanifold}).
In the finite-dimensional setting, these charts were introduced in~\cite{gaugefix}.
We now recall their definition and generalize the constructions to the
infinite-dimensional setting.

Given~$x \in \F^\reg$ we denote the image of~$x$ by~$I:=x(\H)$.
We consider~$I$ as a $2n$-dimensional Hilbert space with the scalar
product induced from~$\la .|. \ra_\H$. Denoting its
orthogonal complement by~$J:=I^\perp$, we obtain
the orthogonal sum decomposition
\[ \H = I \oplus J \:. \]
This also gives rise to a corresponding decomposition of operators, like for example
\beq \label{lindecomp}
\Lin(\H, I) = \Lin(I,I) \oplus \Lin(J,I)\:.
\eeq
Given an operator~$\psi \in \Lin(\H, I)$, we denote its adjoint by~$\psi^\dagger
\in \Lin(I, \H)$; it is defined by the relation
\[ \la u \,|\, \psi \,v \ra_I = \la \psi^\dagger \,u \,|\, v \ra_\H \qquad \text{for all~$u \in I$ and~$v \in \H$}\:. \]
We now form the operator
\beq \label{Rpsi}
R_x(\psi) := \psi^\dagger \,x\, \psi \in \Lin(\H) \:.
\eeq
By construction, this operator is symmetric and has at most~$n$ positive and at most~$n$
negative eigenvalues. Therefore, it is an operator in~$\F$. 
Using~\eqref{lindecomp}, we conclude that~$R_x$ is a mapping
\beq \label{Rdef1}
R_x \::\: \Lin(I,I) \oplus \Lin(J,I) \rightarrow \F \:.
\eeq

Before going on, it is useful to rewrite the operator~$R_x(\psi)$ in a slightly different way.
On~$I$, one can also introduce the indefinite inner product
\beq \label{ssp}
\Sl .|. \Sr_x \::\: S_x \times S_x \rightarrow \C \:, \qquad 
\Sl u | v \Sr_x = -\la u | x v \ra_\H \:,
\eeq
referred to as the {\em{spin inner product}}. 
For conceptual clarity, we denote~$I$ endowed with the spin inner product
by~$(S_x, \Sl .|. \Sr_x)$ and refer to it as the {\em{spin space}} at~$x$
(for more details on the spin spaces we refer for example to~\cite[Section~1.1]{cfs}).
It is an indefinite inner product space of signature~$(n,n)$.
We denote the adjoint with respect to the spin inner product by a star.
More specifically, for a linear operator~$A \in \Lin(S_x)$, the adjoint is defined by
\[ \Sl  \phi \,|\, A\, \tilde{\phi} \Sr_x = \Sl A^* \,\phi \,|\, \tilde{\phi} \Sr_x \qquad
\text{for all~$\phi, \tilde{\phi} \in S_x$}\:. \]
Using again the definition of the spin inner product~\eqref{ssp}, we can rewrite this equation as
\[ -\la \phi \,|\, X\,A \tilde{\phi} \ra_\H = -\la A^* \phi \,|\,X \tilde{\phi} \ra_\H \:, \]
where we introduced the short notation
\beq \label{Xdef}
X := x|_{S_x} \::\: S_x \rightarrow S_x \:.
\eeq
Taking adjoints in the Hilbert space~$\H$ gives
\[ -\la X^{-1}\, A^\dagger\,X \phi \,|\,X \tilde{\phi} \ra_\H = -\la A^* \phi \,|\,X \tilde{\phi} \ra_\H \]
(note that the operator~$X$ is invertible because~$S_x$ is
by definition its image). We thus obtain the relation
\beq \label{Astar}
A^* = X^{-1}\, A^\dagger\,X \:.
\eeq
Using such transformations, one readily verifies that, identifying the image of~$\psi$
with a subspace of~$S_x$, the right side of~\eqref{Rpsi} can be written as~$-\psi^* \psi$
(for details see~\cite[Lemma~2.2]{gaugefix}).
Thus, with this identification, the operator~$R_x$ can be written instead of~\eqref{Rpsi} and~\eqref{Rdef1}
in the equivalent form
\beq \label{Rdef2}
R_x \::\: \Lin(I,S_x) \oplus \Lin(J,S_x) \rightarrow \F \:,\qquad
R_x(\psi) = -\psi^* \psi \:,
\eeq
where~$\psi^*$ is the adjoint with respect to the corresponding inner products, i.e.
\[ \Sl \phi \,|\, \psi\,u \Sr_x = \la \psi^* \phi \,|\, u \ra_\H \qquad \text{for~$u \in H$ and~$\phi \in S_x$}\:. \]

We want to use the operator~$R_x$ in order to construct local parametrizations of~$\F^\reg$.
The main difficulty is that the operator~$R_x$ is not injective.
For an explanation of this point in the context of local gauge freedom we refer to~\cite{gaugefix}.
Here we merely explain how to arrange that~$R_x$ becomes injective.
We let~$\Symm(S_x) \subseteq \Lin(S_x)$
be the real vector space of all operators~$A$ on~$S_x$ which are
symmetric with respect to the spin inner product, i.e.\
\[ \Sl  \phi | A \tilde{\phi} \Sr_x = \Sl A \phi | \tilde{\phi} \Sr_x \qquad
\text{for all~$\phi, \tilde{\phi} \in S_x$}\:. \]
We now restrict the operator~$R_x$ in~\eqref{Rdef1} and~\eqref{Rdef2} to
\beq \label{Rsymmdef2}
R_x^\symm := R_x|_{\text{Symm}(S_x) \oplus \Lin(J,S_x)} \::\: \Symm(S_x) \oplus \Lin(J,S_x) \rightarrow \F \:,\qquad R_x(\psi) = -\psi^* \psi \:.
\eeq
We write the direct sum decomposition as
\[ \psi = \psi_I + \psi_J \qquad \text{with} \qquad \psi_I \in \text{Symm}(S_x),\; \psi_J \in \Lin(J,S_x)\:. \]

Extending the analysis in~\cite[Section~6.1]{gaugefix} to the infinite-dimensional setting,
one finds that this mapping is a local parametrization of~$\F^\reg$:
\begin{Thm} \label{thmchart} There is an open neighborhood~$W_x$ of~$(\id_{S_x}, 0) \in \mathrm{Symm}(S_x)\oplus \Lin(J,S_x)$
such that the restriction of~$R_x^\symm$ maps to an open subset~$\Omega_x :=  R_x^\symm(W_x)$ of~$\F^\reg$,
\[ R_x^\symm|_{W_x} \::\: W_x \rightarrow \Omega_x
\overset{\text{\tiny{\rm{open}}}}{\subseteq} \F^\reg \:, \]
and is a homeomorphism to its image (always with respect to the topology induced by the operator norm on~$\Lin(\H)$).
\end{Thm}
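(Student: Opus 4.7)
The plan is to apply the Banach-space inverse function theorem to $R_x^\symm$ at the distinguished base point $(\id_{S_x}, 0)$, which lies in the Banach space $\Symm(S_x) \oplus \Lin(J, S_x)$ (the first summand finite-dimensional, the second carrying the operator norm) and satisfies $R_x^\symm(\id_{S_x}, 0) = x$. The first step is to compute the Fr\'echet derivative $DR_x^\symm|_{(\id_{S_x}, 0)}$ by expanding
\[ R_x^\symm(\id_{S_x} + \delta\psi_I + \delta\psi_J) = -(\id_{S_x} + \delta\psi_I + \delta\psi_J)^*(\id_{S_x} + \delta\psi_I + \delta\psi_J) \]
and separating linear from quadratic contributions. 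With respect to the orthogonal decomposition $\H = I \oplus J$, the linear part takes values in the closed subspace $T_x \subseteq \Lin(\H)$ of self-adjoint operators with vanishing $J \times J$ block, contributing an $I \times I$ term from $\delta\psi_I$ and off-diagonal blocks from $\delta\psi_J$ and $\delta\psi_J^*$. I would then verify that this linear map is a topological isomorphism onto $T_x$: injectivity is immediate from the block structure, and bicontinuity follows from the boundedness of the projections onto the blocks together with the finite-dimensionality of $S_x$.

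The second step is to account for the fact that $R_x^\symm$ is not locally $T_x$-valued: its quadratic part contains a $J \times J$ contribution $-\delta\psi_J^*\delta\psi_J$. I would introduce the bounded block projection $\pi_{T_x}: \Lin(\H) \to T_x$ and apply the classical Banach-space inverse function theorem to the composition $\pi_{T_x} \circ R_x^\symm$, whose derivative at $(\id_{S_x}, 0)$ coincides with the isomorphism from the previous step. This yields an open neighborhood $W_x$ of $(\id_{S_x}, 0)$ on which $\pi_{T_x} \circ R_x^\symm$ is a Fr\'echet diffeomorphism. Since the remaining $J \times J$ block of $R_x^\symm(\psi)$ is recovered uniquely from $\delta\psi_J$ (itself a smooth function of $\pi_{T_x}(R_x^\symm(\psi))$), the full map $R_x^\symm$ is then a homeomorphism from $W_x$ onto its image $\Omega_x \subseteq \Lin(\H)$ with respect to the operator-norm topology.

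The third step is to identify $\Omega_x$ locally with $\F^\reg$. The inclusion $\Omega_x \subseteq \F^\reg$ is straightforward: any operator of the form $-\psi^*\psi$ is self-adjoint with at most $n$ positive and $n$ negative eigenvalues, and has rank exactly $2n$ for $\psi$ close to $(\id_{S_x}, 0)$ by lower semicontinuity of the rank. For the converse, given $y \in \F^\reg$ sufficiently close to $x$ in operator norm, one constructs a preimage $\psi \in W_x$: the $2n$-dimensional image $y(\H)$ depends continuously on $y$ and is thus close to $I$, so $y$ admits a factorization $y = -\psi^*\psi$ along the lines of the finite-dimensional argument of \cite[Section~6.1]{gaugefix}, with the symmetric gauge fixing $\psi_I \in \Symm(S_x)$ yielding uniqueness. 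The main technical obstacle is precisely this last uniqueness-of-factorization step, where the infinite-dimensionality of $J$ could in principle cause trouble; it is tamed by the observation that $\psi_J \in \Lin(J, S_x)$ always has range in the finite-dimensional space $S_x$, so that the entire perturbation analysis reduces to rank-at-most-$2n$ factorizations on $\H$.
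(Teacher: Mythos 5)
Your route is genuinely different from the paper's. The paper does not invoke the inverse function theorem at all: it writes down an explicit candidate inverse, $\phi_x(y) = \big(X^{-1}\pi_x y|_{S_x}\big)^{-1/2} X^{-1}\pi_x y$, with the inverse square root defined by its power series on the ball where $\|\id_{S_x} - X^{-1}\pi_x y|_{S_x}\| < \tfrac12$, checks that it lands in $\Symm(S_x)\oplus\Lin(J,S_x)$, and verifies by direct computation that $\phi_x\circ R_x^\symm = \id$ on $W_x$ and $R_x^\symm\circ\phi_x = \id$ on a neighborhood of $x$ in $\F^\reg$. Your first two steps are correct and give a cleaner conceptual account of the homeomorphism-onto-image part: the derivative $DR_x^\symm|_{(\id_{S_x},0)}(\delta\psi_I,\delta\psi_J)$ has blocks $2X\delta\psi_I$ (using $\delta\psi_I^\dagger X = X\delta\psi_I$ for $\delta\psi_I\in\Symm(S_x)$) and $X\delta\psi_J$, which is indeed a bounded bijection with bounded inverse onto the selfadjoint operators with vanishing $J\times J$ block, and the inverse function theorem applied to $\pi_{T_x}\circ R_x^\symm$ does yield injectivity of $R_x^\symm|_{W_x}$ and continuity of its inverse, since the $J\times J$ block is a continuous function of the $T_x$-part. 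What this buys is a proof of the homeomorphism property without ever exhibiting the inverse; what the paper's explicit formula buys is the surjectivity statement, which your approach still has to supply separately.

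The gap is in your third step, which is where the real content of the theorem lies: the assertion that $\Omega_x = R_x^\symm(W_x)$ is \emph{open} in $\F^\reg$. A homeomorphism onto its image need not have open image, and your inverse-function-theorem argument only controls the projection $\pi_{T_x}\circ R_x^\symm$; openness of $\Omega_x$ amounts to showing that every $y\in\F^\reg$ operator-norm close to a point of $\Omega_x$ admits a factorization $y=-\psi^*\psi$ with $\psi\in W_x$, and this is exactly what you defer to \cite[Section~6.1]{gaugefix} with ``one constructs a preimage along the lines of the finite-dimensional argument.'' To close it you need the continuous right inverse $\phi_x$ defined on an open neighborhood $\tilde{\Omega}_x$ of $x$ in $\F^\reg$ (not merely on the image $R_x^\symm(W_x)$), together with the verification $R_x^\symm\circ\phi_x=\id$ there; then, after shrinking $W_x$ so that $R_x^\symm(W_x)\subseteq\tilde{\Omega}_x$, one obtains $\Omega_x = (\phi_x|_{\tilde{\Omega}_x})^{-1}(W_x)$, which is open. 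As written, your argument establishes only that $x$ is an interior point of $\Omega_x$, and the existence of the factorization for an arbitrary nearby $y\in\F^\reg$ --- the step that in the paper requires the inverse square root formula and the bound~\eqref{invertible} --- is asserted rather than proved. Your closing observation that $\psi_J$ has finite-dimensional range is correct and is indeed the reason the infinite-dimensionality of $J$ is harmless, but it does not by itself produce the preimage.
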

\Proof 
The estimate
\begin{align}
&\| R_x^\symm(\psi) - R_x^\symm(\tilde{\psi}) \|_{\Lin(\H)} \notag \\
&= \big\| \psi^* \psi - \tilde{\psi}^* \tilde{\psi} \big\|_{\Lin(\H)} \leq
\big\| \psi^* \psi - \psi^* \tilde{\psi} \big\|_{\Lin(\H)} + \big\| \psi^* \tilde{\psi} - \tilde{\psi}^* \tilde{\psi} \big\|_{\Lin(\H)} \notag \\
&\leq \| \psi^*\|_{\Lin(\H)} \, \big\| \psi - \tilde{\psi} \big\|_{\Lin(\H)} + \big\| \tilde{\psi}^* - \tilde{\psi}^* \big\|_{\Lin(\H)}
\: \| \tilde{\psi}\|_{\Lin(\H)}  \label{simplees}
\end{align}
shows that~$R_x^\symm$ is continuous. Since the point~$R_x^\symm(\id_{S_x}, 0)=x \in \F^\reg$
is regular, by continuity we may choose an open neighborhood~$W_x$ of~$(\id_{S_x}, 0)$
such that~$R_x$ maps to~$\F^\reg$.

In order to show that~$R_x^\symm$ is bijective, we 
begin with the formula for~$\phi_x$ as derived in~\cite[Proposition~6.6]{gaugefix},
which will turn out to be the inverse of~$R_x^\symm$. It has the form
\beq \label{phixplicit}
\phi_x(y) = \big( P(x,x)^{-1}\, A_{xy}\, P(x,x)^{-1} \big)^{-\frac{1}{2}}\, P(x,x)^{-1}\, P(x,y) \:
\Psi(y) \;\in\; \Lin(\H, S_x) \:,
\eeq
where~$P(x,y)$ (the {\em{kernel of the fermionic projector}}) and~$A_{xy}$ (the {\em{closed chain}}) are defined by
\beq \label{Pxydef}
P(x,y) := \pi_x y|_{S_y} \::\: S_y \rightarrow S_x \:,\qquad A_{xy} := P(x,y)\, P(y,x) \::\: S_x \rightarrow S_x \:.
\eeq
Our task is to show that for a sufficiently small open neighborhood~$\Omega_x$ of~$x$,
this formula defines a continuous mapping
\[ 
\phi_x \::\: \Omega_x \subseteq \F^\reg \rightarrow \Symm(S_x) \oplus \Lin(J, \H) \:, \]
and that the compositions
\beq \label{compose}
\phi_x \circ R_x^\symm|_{W_x} \qquad \text{and} \qquad R_x^\symm \circ \phi_x
\eeq
are both the identity (showing that~$\phi_x$ is indeed the inverse of~$R_x^\symm$).

In preparation, we rewrite the formula~\eqref{phixplicit} as
\beq
\phi_x(y) = \Big(X^{-1} \,\pi_x y\pi_y x X^{-1}\Big)^{-\frac{1}{2}}X^{-1} \,\pi_x y \pi_y = \Big(X^{-1} \,
\pi_x y |_{S_x}\Big)^{-\frac{1}{2}}X^{-1}\,\pi_x y \:, \label{phixy}
\eeq
where we again used the notation~\eqref{Xdef}.
Choosing~$y=x$, the operator~$X^{-1} \,\pi_x y |_{S_x}$ is the identity on~$S_x$.
We first choose an open neighborhood~$\tilde{\Omega}_x$ of $x$ so small such that for any $y \in \tilde{\Omega}_x$,
\beq \label{invertible}
\big\| \id_{S_x}-X^{-1}\pi_x y |_{S_x} \big\|_{\Lin(\mathcal{H})}< \frac{1}{2} \:.
\eeq
Then the square root as well as the inverse square root of $A=X^{-1}\pi_x y$  are well-defined for all $x\in \tilde{\Omega}_x$ by the respective power series,
\begin{flalign*}
A^{\frac{1}{2}}:=\sum_{n=0}^{\infty} (-1)^n \binom{1/2}{n}(\id_{S_x}-A)^n\;,\;\;\;
A^{-\frac{1}{2}}:=\sum_{n=0}^{\infty} (-1)^n \binom{-1/2}{n}(\id_{S_x}-A)^n\;,
\end{flalign*}
with the generalized binomial coefficients given for $\beta \in \mathds{R}$ and $n\in \mathds{N}$ by
\begin{flalign*}
\binom{\beta}{n}:= \left\{ \begin{array}{cl}
\displaystyle \frac{1}{n!}\;\beta \cdot (\beta - 1)\cdots (\beta-n+1)\quad & \text{if~$n>0$} \\
0\quad&\text{if~$n=0$}
\end{array} \right.
\end{flalign*}
as for both power series the radius of convergence equals one. Moreover note that all square roots, inverse square roots, etc.\ appearing in the following are well-defined as they are always applied to operators within their radius of convergence.
We conclude that the mapping~$\phi_x$ is well-defined and continuous on~$\tilde{\Omega}_x$. Now by possibly shrinking $W_x$ we can arrange that $\Omega_x:=R_x^\symm(W_x)$ lies in $\tilde{\Omega}_x$. Note that it now suffices to show that $\phi_x|_{\Omega_x}$ is the inverse of $R_x^\symm|_{W_x}$, because then
the set~$\Omega_x=(\phi_x|_{\tilde{\Omega}_x})^{-1}(W_x)$ is open.

In order to verify that~$\phi_x$ maps into $\mathrm{Symm}(S_x)\oplus \Lin(J,S_x)$,
we restrict~$\phi_x(y)$ to~$S_x$,
\begin{flalign}
\phi_x(y) \big|_I&=\Big( \Big(X^{-1} \,\pi_x \,y \big|_{S_x}\Big)^{-\frac{1}{2}}X^{-1} \,\pi_x \,y \Big)\Big|_I \notag \\
	&= \Big(X^{-1} \,\pi_x \,y \big|_{S_x}\Big)^{-1/2} X^{-1} \,\pi_x \,y|_{S_x} = \Big(X^{-1} \,\pi_x \,y \,\pi_x \big|_{S_x}\Big)^{\frac{1}{2}} \:. \label{phixysymm}
\end{flalign}
A direct computation using~\eqref{Astar} shows that the operator~$X^{-1}\pi_x y \pi_x|_{S_x}$,
and hence also its square root, are symmetric on~$S_x$.

It remains to compute the compositions in~\eqref{compose}. First,
\begin{align*}
\phi_x \circ R_x^\text{\tiny{symm}}(\psi)
&= \phi_x(\psi^{\dagger}X\psi) 
=\Big(X^{-1}\underbrace{\pi_x \psi^{\dagger}X}_{=\psi_I^{\dagger}X} \underbrace{\psi |_{S_x}}_{\psi_I}\Big)^{-\frac{1}{2}}X^{-1}\underbrace{\pi_x \psi^{\dagger}X}_{=\psi_I^{\dagger}X}\psi \\
&= \Big( \underbrace{ X^{-1}\psi_I^{\dagger}X}_{=\psi_I}\psi_I\Big)^{-\frac{1}{2}}\underbrace{ X^{-1}\psi_I^{\dagger}X}_{=\psi_I}\psi  = \big(\psi_I^2 \big)^{-\frac{1}{2}}\,\psi_I\,\psi =\psi \:,
\end{align*}
where in the last line we applied~\eqref{Astar} and used that~$\psi_I$ is symmetric on~$S_x$.
Moreover,
\begin{align*}
R_x^\symm \circ \phi_x(y) &= \phi_x(y)^\dagger X \phi_x(y) \\
&= y\,\pi_x\, X^{-1}\,\Big(\pi_x y \pi_x\:X^{-1}\: \Big)^{-\frac{1}{2}}\:X\: \Big(X^{-1}\,\pi_x y \pi_x|_{S_x} \Big)^{-\frac{1}{2}} X^{-1}\,\pi_x \,y \:.
\end{align*}
Since the spectral calculus is invariant under similarity transformations, we know that for any
invertible operator~$B$ on~$S_x$,
\[ X^{-1} B^{-\frac{1}{2}} X = \Big( X^{-1} B X \Big)^{-\frac{1}{2}} \:. \]
Hence
\begin{align*}
R_x^\symm \circ \phi_x(y) 
&= y\,\pi_x\, \Big(X^{-1}\,\pi_x y \pi_x|_{S_x}\Big)^{-\frac{1}{2}}\:\Big(X^{-1} \,\pi_x y \pi_x|_{S_x} \Big)^{-\frac{1}{2}} X^{-1}\,\pi_x \,y \\
&= y\,\pi_x\, \Big(X^{-1}\,\pi_x y \pi_x|_{S_x}\Big)^{-1}\:X^{-1}\,\pi_x \,y \\
&= y\,\pi_x\, \Big( \pi_x \,y \pi_x |_{S_x} \Big)^{-1}\, \pi_x \,y
= y\,x \Big( \pi_x \,y  x|_{S_x} \Big)^{-1}\, \pi_x \,y \\
&= y\,P(y,x)\:\Big( P(x,y)\, P(y,x) \Big)^{-1} \, P(x,y) = y
\end{align*}
(note that~$P(x,y) : S_y \rightarrow S_x$ is invertible in view of~\eqref{invertible}).
This concludes the proof.
\QED

The mapping~$\phi_x$, which already appeared in the proof of the previous lemma, can also be introduced
abstractly to define the chart.
\begin{Def} \label{defswc}
Setting
\[ \phi_x := R_x^\symm \big|_{W_x}^{-1} \::\: \Omega_x \rightarrow \Symm(S_x) \oplus \Lin(J,S_x) \:, \]
we obtain a chart~$(\phi_x, \Omega_x)$, referred to as the {\bf{symmetric wave chart}} about the point~$x \in \F^\reg$.
\end{Def} \noindent

We remark that more general charts can be obtained by restricting~$R_x$ to another subspace of~$\Lin(I,S_x) \oplus \Lin(J,S_x)$, i.e.\ in generalization of~\eqref{Rsymmdef2},
\[ R_x^E := R_x|_{E \oplus \Lin(J,S_x)} \::\: E \oplus \Lin(J,S_x) \rightarrow \F \:,\qquad R(\psi) = -\psi^* \psi \:, \]
where~$E$ is a subspace of~$\Lin(S_x)$ which has the same dimension as~$\Symm(S_x)$.
The resulting charts~$\phi^E_x$ are obtained by composition with a unitary operator~$U_x$ on~$S_x$, i.e.
\[ \phi^E_x = U_x \circ \phi_x \qquad \text{with} \qquad U_x \in \U(S_x) \]
(for details and the connection to local gauge transformations see~\cite[Section~6.1]{gaugefix}).
Since linear transformations are irrelevant for the question of differentiability, in what follows we may restrict
attention to symmetric wave charts.

\subsection{A Fr{\'e}chet Smooth Atlas}
The goal of this section is to prove that the symmetric wave charts~$(\phi_x, \Omega_x)$
form a smooth atlas of~$\F^\reg$.

\begin{Thm}[{\bf{Symmetric wave atlas}}] \label{thmfrechet}
The collection of all symmetric wave charts on~$\F^\reg$ defines a Fr{\'e}chet-smooth atlas of
$\F^\reg$, endowing~$\F^\reg$ with the structure of a smooth Banach manifold
(see Definition~\ref{defbanachmanifold}).
\end{Thm}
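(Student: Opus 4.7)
The plan is to verify each condition in Definition~\ref{defbanachmanifold} for the collection of symmetric wave charts. The Hausdorff property is inherited from the operator-norm topology on $\Lin(\H)$, and the charts cover $\F^\reg$ because $x = R_x^\symm(\id_{S_x}, 0) \in \Omega_x$ by Theorem~\ref{thmchart}. To exhibit a common model Banach space, I would fix once and for all a splitting $\H = I_0 \oplus J_0$ with $\dim I_0 = 2n$ and set $E := \Symm(I_0) \oplus \Lin(J_0, I_0)$ equipped with the operator norm; this is a Banach space because $\Symm(I_0)$ is finite-dimensional and $\Lin(J_0, I_0)$ is complete. For each $x \in \F^\reg$ one picks a unitary $U_x \in \U(\H)$ mapping $I_x$ onto $I_0$ (and hence $J_x$ onto $J_0$) and post-composes $\phi_x$ with the bounded Banach-space isomorphism $\Symm(S_x) \oplus \Lin(J_x, S_x) \to E$ obtained by conjugation with $U_x$. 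Since these identifications are bounded linear, they preserve Fréchet smoothness, so it is enough to analyze the original symmetric wave charts.

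The core step is to show that, whenever $\Omega_x \cap \Omega_y \neq \emptyset$, the transition map $\phi_y \circ \phi_x^{-1}$ is Fréchet smooth on the open set $\phi_x(\Omega_x \cap \Omega_y)$. Substituting $\phi_x^{-1} = R_x^\symm$ into the explicit formula~\eqref{phixy} for $\phi_y$ yields
\[
(\phi_y \circ \phi_x^{-1})(\psi) = \Bigl( Y^{-1}\, \pi_y\, \psi^\dagger X \psi \big|_{S_y} \Bigr)^{-\frac{1}{2}}\, Y^{-1}\, \pi_y\, \psi^\dagger X \psi \:,
\]
where $X = x|_{S_x}$, $Y = y|_{S_y}$ and $\pi_y$ is the orthogonal projection onto $S_y$. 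The map $\psi \mapsto \psi^\dagger X \psi$ is a continuous quadratic polynomial on the Banach space, and the subsequent pre- and post-compositions with the fixed bounded operators $Y^{-1}$ and $\pi_y$ as well as the restriction to $S_y$ are bounded linear. Each of these building blocks is Fréchet smooth and combines via the chain and product rules, so what remains is to analyze the operator-valued inverse square root.

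The main obstacle is therefore to verify Fréchet smoothness of $A \mapsto A^{-1/2}$ on the open set $\{A \in \Lin(S_y) : \|\id_{S_y} - A\|_{\Lin(S_y)} < 1\}$. By~\eqref{invertible}, this set contains the image of $\psi \mapsto Y^{-1} \pi_y \psi^\dagger X \psi|_{S_y}$ after possibly shrinking $\Omega_x \cap \Omega_y$. For this I would invoke the absolutely convergent power series
\[
A^{-1/2} = \sum_{k=0}^\infty (-1)^k \binom{-1/2}{k}\, (\id_{S_y} - A)^k
\]
already used in the proof of Theorem~\ref{thmchart}. Each partial sum is a polynomial in $A$ and hence Fréchet smooth, and a standard argument from the theory of power series in Banach algebras shows that, on any closed sub-ball strictly inside the unit ball, the series converges uniformly together with all of its termwise Fréchet derivatives. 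Hence $A \mapsto A^{-1/2}$ is Fréchet analytic on its domain, and in particular Fréchet smooth. Assembling the pieces via the chain and product rules, the transition map is Fréchet smooth. This establishes that the symmetric wave charts form a Fréchet-smooth atlas and endow $\F^\reg$ with the structure of a smooth Banach manifold modelled on $E$.
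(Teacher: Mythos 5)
Your proof is correct and follows essentially the same route as the paper: the transition map is decomposed into the bounded quadratic map $\psi \mapsto Y^{-1}\pi_y\,\psi^\dagger X \psi$ (together with its restriction to $S_y$) composed with the operator inverse square root, which is Fr\'echet-analytic near the identity of the Banach algebra $\Lin(S_y)$ by the power-series argument; note that the bound~\eqref{invertible} from the proof of Theorem~\ref{thmchart} already guarantees, without any further shrinking, that the relevant operators stay in the ball of radius $1/2$ around $\id_{S_y}$. The one place where you go beyond the paper is the identification of all chart codomains with a single model space $E$; the paper instead only verifies that each $\Symm(S_x)\oplus\Lin(J_x,S_x)$ is a Banach space, as the kernel of a continuous map on $\Lin(\H,I_x)$. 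Your unitary conjugation does not quite do what you claim: it carries $\Symm(S_x)$ (symmetry with respect to the spin inner product, i.e.\ $A = X^{-1}A^\dagger X$) onto the operators on $I_0$ that are symmetric with respect to the $x$-dependent indefinite inner product $-\la \,\cdot\, | \,U_x X U_x^{-1} \cdot\, \ra$, which is in general not a fixed subspace $\Symm(I_0)$ of $\Lin(I_0)$. Since all of these are real subspaces of $\Lin(I_0)$ of the same finite dimension $(2n)^2$, this is repaired by composing with one further (necessarily bounded) linear isomorphism of the finite-dimensional factor, so the gap is minor; but as written the identification is incorrect.
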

\begin{proof}
We first verify that for any $x\in \F^\reg$, the vector space~$\mathrm{Symm}(S_x)\oplus \Lin(J,S_x)$ together with the operator norm of $\Lin(\H,I)=\Lin(\H,S_x)$ is a Banach space.
To this end, we note that this vector space coincides with the kernel of the mapping~$\psi \mapsto (X^{-1}\psi^\dagger \pi_x X - \psi|_I)$ on~$\Lin(\H, I)$.
Since this mapping is continuous on~$\Lin(\H,I)$ (as one verifies by an estimate similar to~\eqref{simplees}), its kernel is closed. As a consequence, the vector space~$\mathrm{Symm}(S_x)\oplus \Lin(J,S_x)$ is
a closed subspace of~$\Lin(\H,I)$ and thus indeed a Banach space.

We saw in Theorem~\ref{thmchart} that for any $x \in \F^\reg$, $(\phi_x, \Omega_x)$ defines a 
chart on~$\F^\reg$. Since the~$\Omega_x$ clearly cover $\F^\reg$,
it remains to show that all transition mappings are Fr{\'e}chet-smooth. To this end, we first note that
for any~$x,y \in \F^\reg$ and~$\psi \in \phi_x(\Omega_x \cap \Omega_y)$,
	\begin{flalign*}
	\phi_y \circ \phi_x^{-1}(\psi) = \phi_y \big(\psi^{\dagger} \,X\, \psi \big)
	= \Big(Y^{-1} \,\pi_y\, \psi^{\dagger} \,X\, \psi|_{S_y}\Big)^{-\frac{1}{2}}\:Y^{-1} \,\pi_y\,\psi^{\dagger}\, X\, \psi\;.
	\end{flalign*}
Next, we define the mappings
\begin{align*}
&B_{xy}: \mathrm{Symm}(S_x)\oplus \Lin(J,S_x) \rightarrow \Lin(\H,S_y)\:,&&\psi \mapsto Y^{-1} \,\pi_y
\,\psi^{\dagger} \,X\, \psi\;,\\
&\tilde{B}_{xy}: \mathrm{Symm}(S_x)\oplus \Lin(J,S_x) \rightarrow \Lin(S_y)\:,&& \psi \mapsto
Y^{-1} \,\pi_y \,\psi^{\dagger} \,X\, \psi|_{S_y}\;,\\
&W: B_{\frac{1}{2}}(0)\subseteq \Lin(S_y) \rightarrow \Lin(S_y)\:,&&B\mapsto (1+B)^{-\frac{1}{2}} = \sum_{n=0}^{\infty}(-1)^n \binom{-1/2}{n} \:B^n
\end{align*}
(where the radius of the ball~$B_{1/2}(0)$ is taken with respect to the operator norm).

Recall that in the proof of Theorem~\ref{thmchart} (more precisely \eqref{invertible}) we chose~$\Omega_y$ so small that the
operator~$\| \id_{S_y}- Y^{-1}\pi_yz|_{S_y} \| <1/2$ for any $z\in \Omega_y$. Thus, since for any $\psi \in \phi_x(\Omega_x \cap \Omega_y)$ we have~$\psi^{\dagger}X\psi=\phi^{-1}_x(\psi)\in \Omega_y$, we obtain~$\tilde{B}_{xy}(\phi_x(\Omega_x \cap \Omega_y)) \subseteq B_{1/2}(\id_{S_y})$.
Therefore, we can write the transition mapping~$\phi_y \circ \phi_x^{-1}$ as
	\begin{flalign*}
	\phi_y \circ \phi_x^{-1}(\psi) = W\big(\id_{S_y}-\tilde{B}_{xy}(\psi)\big) \circ B_{xy}(\psi)\;.
	\end{flalign*}
Now note that for the Fr{\'e}chet derivative, we consider all vector spaces here as 
a {\em{real}} Banach spaces, but still with the canonical operator norm induced by $\|.\|_{\H}$.
In view of the chain rule for Fr{\'e}chet derivatives (for details see Lemma~\ref{F-derivative conecttinating}
in Appendix~\ref{appfrechet}) and the properties of the Fr{\'e}chet derivative in Lemma~\ref{Properties F-derivative} in Appendix~\ref{appfrechet}, it remains to show that  the mappings~$W$, $B_{xy}$ and~$\tilde{B}_{xy}$ are Fr{\'e}chet-smooth (note that the composition operator of $\mathds{R}$-linear mappings is also always Fr{\'e}chet-smooth as it defines a bounded $\mathds{R}$-bilinear map and the map $\Lin(S_y)\ni y \mapsto \id_{S_y}-y\in \Lin(S_y)$ is clearly Fr{\'e}chet-smooth as well).
	For $W$ this is clear due to \cite[p. 40--42]{hilgert+neeb} (note that $\Lin(S_y)$ obviously defines a finite-dimensional unital Banach-algebra).
	Moreover, the mappings $B_{xy}$ and $\tilde{B}_{xy}$ are obviously $\mathds{R}$-bilinear and bounded and thus Fr{\'e}chet-smooth.
\end{proof}

\subsection{The Tangent Bundle}
Having endowed~$\F^\reg$ with a canonical smooth Banach manifold structure, the next step is to consider its tangent bundle. 
For finite-di\-men\-sio\-nal manifolds, the tangent space can be defined either by equivalence classes of curves or by derivations, and these two definitions coincide (see for example~\cite[Chapter~2]{lee-riemannian}).
In infinite dimensions, however, this does no longer be the case: In general, the derivation-tangent vectors (usually called \textit{operational tangent vectors}) form a larger class of than the curve-tangent vectors (called \textit{kinematic tangent vectors}). There might even be operational tangent vectors that depend on higher-order derivatives of the inserted function (while the kinematic tangent vectors interpreted as directional derivatives only involve the first derivatives); for details on such issues see for example~\cite[Sections~28 and~29]{kriegl+michor} or \cite[p. 3--6]{belticua}. It turns out that for our applications in mind,
it is preferable to define tangent vectors as equivalence classes of curves.
Indeed, as we shall see, with this definition the usual computation rules
remain valid. More specifically, the tangent vectors of $\F^\reg$ are compatible with the Fr{\'e}chet derivative,
and each fiber of the corresponding tangent bundle can be identified with the underlying Banach space
\[ V_x :=\mathrm{Symm}(S_x)\oplus \Lin(J,S_x) \]
with respect to the chart $\phi_x$.

Following~\cite[p.~284]{kriegl+michor}, we begin with the abstract definition of the (kinematic)
tangent bundle, which makes it easier to see the topological structure. Afterward, we will show that this 
notion indeed agrees with equivalence classes of curves.
Given~$x' \in \F^\reg$, we consider the set~$\Omega_{x'} \times V_{x'} \times \{x'\}$
(endowed with the topology inherited from the direct sum of Banach spaces).
We take the disjoint union
\begin{flalign*}
	\bigcup\limits_{x' \in \F^\reg} \Omega_{x'} \times V_{x'} \times \{x'\}
	\end{flalign*}
and introduce the equivalence relation
	\begin{flalign*}
	(x,\bv,x') \sim (y,\bw,y') \qquad \Longleftrightarrow \qquad
	x=y \quad \mathrm{and} \quad (\phi_{x'}\circ \phi_{y'}^{-1})'|_{\phi_{y'}(x)}\bw = \bv\;.
	\end{flalign*}
	For clarity, we point out that the first entry represents the point of the Banach manifold~$\F^\reg$,
	whereas the third entry labels the chart.
\begin{Def} We define the tangent bundle $T\F^\reg$ as the quotient space with respect to this equivalence relation,
	\begin{flalign*}
	T\F^\reg := \Big(\bigcup\limits_{x' \in \F^\reg} \Omega_{x'} \times V_{x'} \times \{x'\} \Big)\Big/ \sim\;.
	\end{flalign*}
The canonical projection is given by
	\begin{flalign*}
	\pi: T\F^\reg \rightarrow \F^\reg\;,\;\;\;\pi([x,\bv,x']) = x\;.
	\end{flalign*}
	For every~$x \in \F^\reg$ the tangent space at~$x$ is defined by
	\begin{flalign*}
	T_x\F^\reg:=\pi^{-1}(x)\;.
	\end{flalign*}
\end{Def}
Note that each $T_x\F^\reg$ has a canonical vector space structure in the following sense: Since all equivalence classes in $T_x\F^\reg$ have a representative of the form $[x,\bv,x]$, this representative can be identified with $\bv \in V_x$. In this way, we obtain an identification of $T_x\F^\reg$ with $V_x$.

The tangent bundle is again a Banach manifold, as we now explain. For any~$x \in \F^\reg$, the mapping
	\begin{flalign*}
	(\phi_x, D\phi_x): \pi^{-1}(W_x) \rightarrow \Omega_x \times V_x\;,\;\;\;
	[y,\bv,z] \mapsto \Big(\phi_x(y), D\big( \phi_x \circ \phi_z^{-1} \big) \big|_{\phi_z(y)}\bv \Big)
	\end{flalign*}
has the inverse
	\begin{flalign*}
	(\phi_x, D\phi_x)^{-1}: \Omega_x \times V_x \rightarrow \pi^{-1}(W_x)\;,\;\;\;
	(\psi,\bv) \mapsto [\phi_x^{-1}(\psi),\bv,x] \:.
	\end{flalign*}
On~$T\F^\reg$ we choose the coarsest topology with the property that the natural
projections of these mappings to~$\Omega_x$ and~$V_x$ are both continuous
(where on~$\Omega_x$ and~$V_x$ we choose the topology induced by the norm topology of~$\Lin(\H)$).
With this topology, the mapping~$(\phi_x, D\phi_x)$
defines a chart of~$T\F^\reg$.
For any~$(\psi,\bv)\in (\phi_y, D\phi_y)\big(\pi^{-1}(\Omega_x)\cap \pi^{-1}(\Omega_y)\big)$, the transition mappings are given by
	\begin{flalign*}
	(\phi_x, D\phi_x) \circ (\phi_y, D\phi_y)^{-1}(\psi, \bv) &=
	(\phi_x, D\phi_x)([\phi_y^{-1}(\psi), \bv,y])\\
	&= \Big((\phi_x \circ \phi_y^{-1})(\psi), D \big(\phi_x \circ \phi_y^{-1} \big) \big|_{\psi}\bv \Big) \:.
	\end{flalign*}
\begin{Prp} $T\F^\reg$ is again a Banach manifold.
\end{Prp}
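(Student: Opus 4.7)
The plan is to verify in turn that the topology on $T\F^\reg$ is Hausdorff, that each $(\phi_x, D\phi_x)$ is a homeomorphism onto an open subset of a Banach space, and that all transition maps are Fréchet-smooth. The model Banach space at the chart $(\phi_x, D\phi_x)$ is $V_x \times V_x$, whose Banach structure follows immediately from that of $V_x$, already established in the proof of Theorem~\ref{thmfrechet}.

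First I would dispose of the topological issues. Two distinct classes in $T\F^\reg$ are either separated by $\pi$, in which case one invokes Hausdorffness of $\F^\reg$ together with continuity of $\pi$ (forced by the coarsest-topology definition), or they project to the same base point $x$; in the latter case the chart $(\phi_x, D\phi_x)$ sends them to two distinct points in the Hausdorff space $V_x \times V_x$, yielding disjoint open preimages. Bijectivity of $(\phi_x, D\phi_x)$ onto $\phi_x(\Omega_x)\times V_x$ is immediate from the explicit inverse written down above the proposition, and continuity in both directions is guaranteed by the choice of topology.

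The substantive step is Fréchet-smoothness of the transition maps. Setting $\tau_{xy} := \phi_x \circ \phi_y^{-1}$, which is Fréchet-smooth by Theorem~\ref{thmfrechet}, the transition map reads
\[
(\phi_x, D\phi_x)\circ(\phi_y, D\phi_y)^{-1}(\psi,\bv)
= \bigl(\tau_{xy}(\psi),\; D\tau_{xy}|_\psi\, \bv\bigr).
\]
Smoothness of the first component is inherited directly from $\tau_{xy}$. For the second I would decompose it as $\psi \mapsto (D\tau_{xy}|_\psi,\, \bv) \in \Lin(V_y, V_x)\times V_y$ followed by the bounded bilinear evaluation map $(T,\bv) \mapsto T\bv$; the latter is Fréchet-smooth since every bounded bilinear map is, and the chain and product rules collected in Appendix~\ref{appfrechet} then assemble the pieces.

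The only nontrivial input, and hence the main obstacle, is that $\psi \mapsto D\tau_{xy}|_\psi$ is itself Fréchet-smooth as a map into $\Lin(V_y, V_x)$. This is handled by the standard fact that iterated Fréchet derivatives of a Fréchet-smooth map between Banach spaces are themselves Fréchet-smooth in the base point with values in the corresponding spaces of bounded multilinear operators, a fact the paper has already invoked via the references in Appendix~\ref{appfrechet}. With this in hand, the remainder of the argument is formal bookkeeping in the quotient construction.
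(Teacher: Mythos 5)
Your proposal is correct and follows essentially the same route as the paper: the paper's proof also reduces everything to Fr\'echet-smoothness of the transition maps and handles the second component by exactly your decomposition, namely $(\psi,\bv)\mapsto \big(D(\phi_x\circ\phi_y^{-1})|_\psi,\bv\big)$ followed by the bounded bilinear evaluation map. The only difference is that you additionally spell out the Hausdorff and homeomorphism checks, which the paper treats as part of the chart construction preceding the proposition rather than of the proof itself.
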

\Proof We need to show that transition maps are Fr{\'e}chet-smooth.
This is clear for the first component because the transition mappings $\phi_x \circ \phi_y^{-1}$ are Fr{\'e}chet-smooth
and fiberwise linear. The second component can be considered as the composition of the insertion map
\[ \Lin(V_y,V_x)\times V_y \ni (A,\bv) \mapsto A(\bv)\in V_x \]
(which is obviously continuous and bilinear and thus Fr{\'e}chet-smooth, for details see Lemma~\ref{Properties F-derivative} in Appendix~\ref{appfrechet}) with the mapping $W_y\times V_y \ni (\psi,\bv) \mapsto ((\phi_x \circ \phi_y^{-1})'|_{\psi},\bv)\in \Lin(V_x,V_y)\times V_y$, which is Fr{\'e}chet-smooth due to the Fr{\'e}chet-smoothness of the transition mappings.
\QED
\noindent 
In what follows, we will sometimes use the notation
	\begin{flalign*}
	D\phi_x([y,\bv,z]) := D\big( \phi_x \circ \phi_z^{-1} \big) \big|_{\phi_z(y)} \,\bv \qquad \forall x\in \F^\reg,\; [y,\bv,z]\in \pi^{-1}(\Omega_x)\;,
	\end{flalign*}
which also clarifies the independence of the choice of representatives.
\begin{Lemma}
For any $x \in \F^\reg$, the mapping
	\begin{flalign*}
	\psi_x: \Omega_x \times V_x \rightarrow \pi^{-1}(\Omega_x)\;,\qquad(y, v) \mapsto [y,\bv,x]
	\end{flalign*}
	is a local trivialization.
\end{Lemma}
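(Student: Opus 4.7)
The plan is to verify the three defining properties of a local trivialization of the vector bundle $\pi: T\F^\reg \rightarrow \F^\reg$: (i) compatibility with $\pi$, (ii) the diffeomorphism property, and (iii) fiberwise linearity. Property (i) is immediate: by construction
\[ \pi \circ \psi_x(y, \bv) \;=\; \pi\bigl([y, \bv, x]\bigr) \;=\; y \;=\; \mathrm{pr}_1(y, \bv). \]

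For (ii), I would exhibit an explicit inverse. The observation is that every class $[y, \bw, z] \in \pi^{-1}(\Omega_x)$ contains a unique representative whose third entry is $x$: by the equivalence relation, it is the triple $\bigl(y, D(\phi_x \circ \phi_z^{-1})|_{\phi_z(y)}\bw, x\bigr)$. This yields the candidate inverse
\[ \psi_x^{-1}([y, \bw, z]) \;=\; \Bigl(y,\; D\big(\phi_x \circ \phi_z^{-1}\big)\big|_{\phi_z(y)}\bw\Bigr), \]
and well-definedness on equivalence classes follows from the very relation that defines $T\F^\reg$. Smoothness is then cleanest to read off in local charts: composing with the product chart $\phi_x \times \mathrm{id}_{V_x}$ on $\Omega_x \times V_x$ and the chart $(\phi_x, D\phi_x)$ on $\pi^{-1}(\Omega_x)$, one gets
\[ (\phi_x, D\phi_x) \circ \psi_x \circ (\phi_x^{-1} \times \mathrm{id}_{V_x})(\eta, \bv) \;=\; \Bigl(\eta,\; D\bigl(\phi_x \circ \phi_x^{-1}\bigr)\big|_{\eta}\,\bv\Bigr) \;=\; (\eta, \bv), \]
since $\phi_x \circ \phi_x^{-1}$ is the identity. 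Hence in these charts $\psi_x$ is literally the identity map, so both $\psi_x$ and $\psi_x^{-1}$ are Fr\'echet-smooth.

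For (iii), fix $y \in \Omega_x$ and consider the restriction $\psi_x\big|_{\{y\} \times V_x}: V_x \rightarrow T_y \F^\reg$, $\bv \mapsto [y, \bv, x]$. Using the canonical identification $T_y\F^\reg \cong V_y$ (by passing to the representative whose third entry is $y$), this map reads
\[ \bv \;\mapsto\; D\bigl(\phi_y \circ \phi_x^{-1}\bigr)\big|_{\phi_x(y)}\,\bv \;\in\; V_y, \]
which is a bounded linear map by the Fr\'echet-smoothness of the transition map established in Theorem~\ref{thmfrechet}, and a linear isomorphism with inverse $D(\phi_x \circ \phi_y^{-1})|_{\phi_y(y)}$ by the chain rule. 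This gives the required fiberwise linearity.

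The only point requiring care, and arguably the main conceptual step, is to keep track of how the trivialization $\psi_x$ interacts with the two natural identifications of $T_y\F^\reg$: the one coming from the chart at $x$ (used implicitly in $\psi_x$) and the one coming from the chart at $y$ (which defines the vector space structure on the fiber). Since these identifications differ precisely by the bounded linear isomorphism $D(\phi_y \circ \phi_x^{-1})|_{\phi_x(y)}$, the three properties above are consistent and the lemma follows.
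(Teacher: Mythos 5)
Your proof is correct and follows essentially the same route as the paper: both verify directly that $\pi\circ\psi_x$ is the projection onto the first factor and that the fiberwise map is, under the identification $T_y\F^\reg\cong V_y$, the linear isomorphism $\bv\mapsto D(\phi_y\circ\phi_x^{-1})|_{\phi_x(y)}\bv$ furnished by the inverse-function property of the Fr\'echet derivative. Your version is in fact slightly more careful than the paper's: you evaluate the transition-map derivative at the correct base point $\phi_x(y)$ (the paper's displayed formula has $\phi_x(x)$, apparently a typo), and you additionally spell out the diffeomorphism property by exhibiting the inverse and checking that $\psi_x$ is the identity in the charts $(\phi_x,D\phi_x)$, which the paper leaves implicit.
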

\Proof We need to verify the properties of a local trivialization. Clearly, the operator~$\pi \circ \psi_x$ is the projection to the first component, and for fixed $y \in \Omega_x$, the mapping $v \mapsto \psi_x(y,\bv)=[y,\bv,x]=[y, (\phi_y \circ \phi_x^{-1})'|_{\phi_x(x)}\bv,y]$ corresponds to $\bv \mapsto (\phi_y \circ \phi_x^{-1})'|_{\phi_x(x)}\bv$ (by the identification of $T_y\F^\reg$ with $V_y$ from before), which is obviously an isomorphism of vector spaces in view of Lemma~\ref{Properties F-derivative}~(vi).
\QED
To summarize, the Banach manifold $\F^\reg$ has similar properties as in the finite-dimen\-sio\-nal case.

We now explain how the above definition of tangent vectors
relates to the equivalence classes of curves (following \cite[p.~285]{kriegl+michor}):
\begin{Remark} {\bf{(equivalence classes of curves)}} {\em{
On curves $\gamma, \tilde{\gamma} \in C^{\infty}(\mathds{R},\F^\reg)$, we consider the equivalence
relation~$\gamma \sim \tilde{\gamma}$ defined by the conditions that~$\gamma(0) = \tilde{\gamma}(0)$
and that in a chart~$\phi_x$ with~$\gamma(0) \in \Omega_x$, the relation~$(\phi_x\circ \gamma)'|_0= (\phi_x\circ \tilde{\gamma})'|_0$ holds.
Note that if the last relation holds in one chart, then
it also holds in any other chart $\phi_y$ with $\gamma(0)\in \Omega_y$ because, due to the chain rule,
	\begin{flalign*}
	(\phi_y\circ \gamma)'|_0 &= (\phi_y \circ \phi_x^{-1} \circ \phi_x \circ \gamma)'|_0 = (\phi_y \circ \phi_x^{-1})'|_{\phi_x	(\gamma(0))}(\phi_x\circ \gamma)'|_0\\ 
	&= (\phi_y \circ \phi_x^{-1})'|_{\phi_x	(\gamma(0))}(\phi_x\circ \tilde{\gamma})'|_0 =(\phi_y \circ \phi_x^{-1} \circ \phi_x \circ \tilde{\gamma})'|_0 = (\phi_y\circ \tilde{\gamma})'|_0\;.
	\end{flalign*}
	Now we can identify $C^{\infty}(\mathds{R},\F^\reg) / \sim$ with $T\F^\reg$ via the mapping
\beq \label{gamma0}
\begin{split}
	C^{\infty}(\mathds{R},\F^\reg) / \sim &\rightarrow T\F^\reg \\
	[\gamma] &\mapsto \Big[\gamma(0)\,,\, (\phi_{\gamma(0)}\circ \gamma)'|_0\,,\, \gamma(0)\Big]\;, 
\end{split}
\eeq
	which bijective with inverse (for details see~\cite[p. 285]{kriegl+michor})
	\begin{flalign*}
	[x,\bv,x'] \mapsto \Big[t\mapsto \phi_{x'}^{-1}\Big(\phi_{x'}(x)+ t \,\xi_{\bv}(t)\,\bv\Big) \Big]\;,
	\end{flalign*}
	where~$\xi_\bv \in C_0^{\infty}(\mathds{R})$ is a smooth cutoff function with $0\leq \xi_v \leq 1$.
	Moreover, $\mathrm{supp}(\xi_\bv)\subseteq (-\varepsilon,\varepsilon)$ and $\xi_\bv|_{(-\varepsilon/2,\varepsilon/2)}\equiv 1$ with $\varepsilon>0$ chosen so small that
\[	B_{\varepsilon\|\bv\|}\big( \phi_{x'}(x) \big) \subseteq W_{x'} \:. \] 

Note that in~\eqref{gamma0} the tangent vector at~$\gamma(0)$ was expressed
in the specific chart $(\phi_{\gamma(0)}, \Omega_{\gamma(0)})$. However, the tangent vector can
also be represented in another chart as follows.
Let~$x \in \F^\reg$ and $[x,\bv,z] \in T_x\F^\reg$ be arbitrary. We say that a curve $\gamma \in C^{\infty}(\mathds{R},\F^\reg)$ represents $[x,\bv,z]$ if  in one chart~$\phi_y$ with $x \in \Omega_y$ (and thus any chart, as one can show using the chain rule just as before) it holds that
	\begin{flalign}
	\label{curve rep. v}
	[x,\bv,z] = [\gamma(0),(\phi_y \circ \gamma)'|_0, y]\;.
	\end{flalign}
	In order to show independence of~$y$, let~$w\in \F^\reg$ with $x \in \Omega_w$. Then
	\begin{flalign*}
	(\phi_w\circ \gamma)'|_0 = (\phi_w \circ \phi_y^{-1} \circ \phi_y \circ \gamma)'|_0 = (\phi_w \circ \phi_y^{-1})'|_{\phi_y(x)}(\phi_y\circ \gamma)'|_0\;,
	\end{flalign*}
	and thus
	\begin{flalign*}
	[\gamma(0), (\phi_w\circ \gamma)'|_0, w] = [\gamma(0),(\phi_y \circ \gamma)'|_0, y] = [x,\bv,z] \;.
	\end{flalign*}
	Hence if~\eqref{curve rep. v} holds in one chart, it also holds in any other chart around $x$.
	}} 
$\text{ }$	\QEDrem 
\end{Remark}
\begin{Remark} {\bf{(directional derivatives)}} {\em{
Let $\gamma \in C^{\infty}(\mathds{R},\F^\reg)$ be a curve that represents $[x,\bv,z]$. We define the directional derivative of a Fr{\'e}chet-differentiable function $f: \F^\reg \rightarrow \mathds{R}$ at $x$ in the direction $[x,\bv,z]$ as
	\begin{flalign*}
	D_{[x,\bv,z]}f|_x := \frac{d}{dt}(f\circ \gamma)|_{t=0}\;.
	\end{flalign*}
	This definition is independent of the choice of the curve $\gamma$. Indeed, for any chart $\phi_w$ around $x$, we have
	\begin{flalign*}
	&\frac{d}{dt}(f\circ \gamma)|_{t=0} = (f\circ \phi_w^{-1} \circ \phi_w \circ \phi_z \circ \phi_z^{-1} \gamma)'(0) \\
	&= D(f\circ \phi_w^{-1})|_{\phi_w(x)}\: D(\phi_w \circ \phi_z^{-1})|_{\phi_z(x)}\:
	(\phi_z \circ \gamma)'(0) \\
	&=D(f\circ \phi_w^{-1})|_{\phi_w(x)}\: D(\phi_w \circ \phi_z^{-1})|_{\phi_z(x)} \:v
	=D(f\circ \phi_w^{-1})|_{\phi_w(x)}\: D\phi_w([x,\bv,z]) \;.
	\end{flalign*}
}} 
\vspace*{-2.7em}

\QEDrem
\end{Remark}
We close this subsection with one last definition:
\begin{Def} {\bf{(Tangent vector fields)}}
	A tangent vector field on a Banach manifold is \hbox{-- similar to the finite-dimensional case --} a Fr{\'e}chet-smooth map $\bv: \F^\reg \rightarrow T\F^\reg$ such that $\bv(x) \in T_x\F^\reg$ (i.e. $\pi(\bv(x)) = x$) for all $x \in \F^\reg$.
	We denote the set of all tangent vectors fields of $\F^\reg$ by~$\Gamma(\F^\reg,T\F^\reg)$.
\end{Def}

We note that, according to this definition, multiplying a vector field by Fr{\'e}chet-smooth real-valued function
gives again a vector field. In other words,
the space of all tangent vector fields forms a module over the ring of Fr{\'e}chet-smooth functions from~$\F^\reg$
to~$\R$.

\subsection{A Riemannian Metric}
\label{SecRiemMetric}
In this section we show that the Hilbert-Schmidt scalar product gives rise to a canonical Riemannian
metric on~$\F^\reg$. For the constructions, it is most convenient to recover~$\F^\reg$ as a
Banach submanifold of the real Hilbert space~$\Shil(\H)$ of all selfadjoint Hilbert-Schmidt operators on~$\H$
endowed with the scalar product ($\Shil$ because of the second Schatten class;
for details see~\cite[Section~XI.6]{dunford2})
\[ \la A, B\ra_{\Shil(\H)} := \tr \big(A B \big) \:. \]

\begin{Thm} \label{thm311}
$\F^\reg$ is a smooth Fr{\'e}chet submanifold of~$\Shil(\H)$ in the following sense.
Given~$x \in \F^\reg$, we choose~$\psi_0 \in \Symm(S_x) \oplus \Lin(J, I)$
with~$x = -\psi_0^* \psi_0$. Then the mapping
\begin{align*}
\mycal{R} \::\: \big( \Symm(S_x) \oplus \Lin(J, I) \big) \oplus \Shil(J) &\rightarrow \Shil(\H) \\
(\;\;\psi\;\;,\;\;  B\;\;)\: &\mapsto -\psi^* \psi + \begin{pmatrix} 0 & 0 \\ 0 & B \end{pmatrix}
\end{align*}
(where the last matrix denotes a block operator on~$\H=I \oplus J$)
is a local Fr{\'e}chet-diffeomorphism at~$(\psi_0, 0)$. Its local inverse takes the form
\begin{align*}
\Phi := ({\mycal{R}}|_{\hat{W}})^{-1} \::\: \Shil(\H) \cap \hat{\Omega}_x &\rightarrow \hat{W} \subseteq 
\big( \Symm(S_x) \oplus \Lin(J, I) \big) \oplus \Shil(J) \\
E &\mapsto \bigg( \phi_x(\pi_x E), \pi_J \Big( E + \big(\phi_x(\pi_x E) \big)^* \phi_x(\pi_x E) \Big) \Big|_J \bigg) \:,
\end{align*}
where~$\hat{W} = W_x \oplus \Shil(J)$, $\hat{\Omega}_x:=\mycal{R}(\hat{W})=\Omega_x+\Shil(J)$ (with~$W_x$ and~$\Omega_x$ as in Theorem~\ref{thmchart}), and~$\phi_x(\pi_x E)$ is
defined in analogy to~\eqref{phixy} by
\[ \phi_x \big(\pi_x E \big) := \Big(X^{-1} \,\pi_x E |_{S_x}\Big)^{-\frac{1}{2}}X^{-1}\,\pi_x E \:\in\: \Symm(S_x) \oplus \Lin(J, I) \]
(the fact that this maps to the symmetric operators on~$S_x$ is verified as in~\eqref{phixysymm}).
\end{Thm}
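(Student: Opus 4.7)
The strategy is to verify the explicit formulas given in the statement: $\mycal{R}$ is Fr\'echet-smooth and bijective onto its image, and the candidate $\Phi$ is a Fr\'echet-smooth inverse. Because an explicit inverse is already written down, no abstract inverse function theorem is needed; a direct two-sided check will yield both bijectivity and smoothness.

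First I would verify that $\mycal{R}$ maps into $\Shil(\H)$ and is Fr\'echet-smooth. For any $\psi \in \Symm(S_x) \oplus \Lin(J,I)$ the operator $\psi^*\psi$ has rank at most $2n$, so $\|\psi^*\psi\|_{\Shil(\H)} \le \sqrt{2n}\,\|\psi^*\psi\|_{\Lin(\H)} \le \sqrt{2n}\,\|\psi\|^2$. Hence $\psi \mapsto -\psi^*\psi$ is a bounded real-bilinear map into $\Shil(\H)$, and therefore Fr\'echet-smooth by Lemma~\ref{Properties F-derivative}. The second summand is the bounded linear inclusion of $\Shil(J)$ into $\Shil(\H)$, so $\mycal{R}$ itself is Fr\'echet-smooth. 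Selfadjointness of the image follows from symmetry of $\psi^*\psi$ (established in~\eqref{phixysymm}) and of $B$.

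Next I would establish $\Phi \circ \mycal{R} = \id$ on $\hat{W}$ by direct computation. For $E := \mycal{R}(\psi,B)$ the block term is annihilated by $\pi_x$, so $\pi_x E = -\pi_x\psi^*\psi$; then the formula defining $\phi_x(\pi_x E)$ coincides with the formula~\eqref{phixy} applied to $-\psi^*\psi = R_x^\symm(\psi) \in \F^\reg$, which was shown in the proof of Theorem~\ref{thmchart} to return $\psi$. With the first component of $\Phi(E)$ equal to $\psi$, the second component reduces to $\pi_J(\mycal{R}(\psi,B) + \psi^*\psi)|_J = B$. Since $\hat{\Omega}_x := \mycal{R}(\hat{W})$ by definition, this identity simultaneously gives injectivity of $\mycal{R}$ on $\hat{W}$ and the reverse relation $\mycal{R}\circ \Phi = \id$ on $\hat{\Omega}_x$.

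Finally I would show that $\Phi$ is Fr\'echet-smooth on an open neighborhood of $x$ in $\Shil(\H)$, which simultaneously gives openness of $\hat{\Omega}_x$. After possibly shrinking the neighborhood to enforce $\|\id_{S_x} - X^{-1}\pi_x E|_{S_x}\| < 1/2$ (an open condition, since $E \mapsto X^{-1}\pi_x E|_{S_x}$ is bounded linear into the finite-dimensional algebra $\Lin(S_x)$), the inverse square root appearing in $\phi_x(\pi_x E)$ is given by a convergent power series in $\Lin(S_x)$, so the first component $E \mapsto \phi_x(\pi_x E)$ is Fr\'echet-smooth by the same reasoning used in Theorem~\ref{thmfrechet}. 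The second component of $\Phi$ is then a composition of this smooth map with the real-bilinear pairing $\psi \mapsto \psi^*\psi$ and the bounded linear projection $\pi_J(\,\cdot\,)|_J$, hence also smooth. The one subtle point, which I expect to be the main obstacle, is confirming that the first component actually lands in the correct closed subspace $\Symm(S_x) \oplus \Lin(J,I)$ even for $E$ that is not of the form $R_x^\symm(\psi)$; this requires reusing the symmetry calculation~\eqref{phixysymm} and observing that it depended only on the algebraic identity~\eqref{Astar}, which remains valid for general $E \in \Shil(\H)$ within the shrunken neighborhood.
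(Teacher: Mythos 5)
Your proposal is correct and follows essentially the same route as the paper: both rest on the identity $\phi_x(-\psi^*\psi)=\psi$ from Theorem~\ref{thmchart} for the first component, the reduction of the second component to $B$, and the equivalence of the operator norm and the Hilbert--Schmidt norm on operators of rank at most $2n$ for the Fr{\'e}chet-smoothness of $\mycal{R}$ and $\Phi$. The only (harmless) deviation is that you obtain $\mycal{R}\circ\Phi=\id$ on $\hat{\Omega}_x$ formally from $\Phi\circ\mycal{R}=\id$ together with the definition $\hat{\Omega}_x:=\mycal{R}(\hat{W})$, whereas the paper verifies it by a direct block computation splitting $E_{JJ}$; both arguments establish the identity only on the image $\mycal{R}(\hat{W})$, so the two are equivalent in scope.
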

\Proof A direct computation shows that~${\mycal{R}}$ and~$\Phi$ are inverses of each other:
In order to compute~$\mycal{R}\circ \Phi$, we use the block operator notation
\[E= \begin{pmatrix} E_{II} & E_{IJ} \\ E_{JI} & E_{JJ} \end{pmatrix}  \in \Shil(\H) \cap \hat{\Omega}_x \:. \]
Then there exist operators $\tilde{E}_J, \hat{E}_J \in \Shil(J)$ such that $E_{JJ}=\tilde{E}_J+\hat{E}_J$, and the operator
\[ \tilde{E}:= \begin{pmatrix} E_{II} & E_{IJ} \\ E_{JI} & \tilde{E}_J \end{pmatrix} \]
is contained in $\Omega_x$. Note that $\phi_x E = \pi_x\tilde{E}$ and therefore $-\phi_x(\pi_x E)^*\phi_x(\pi_x E)=\tilde{E}$. We conclude that
\[\mycal{R}\circ \Phi (E) = \begin{pmatrix} E_{II} & E_{IJ} \\ E_{JI} & \tilde{E}_J+E_{JJ}-\tilde{E}_J \end{pmatrix} =E \:. \]
In order to compute~$\Phi \circ \mycal{R}$, we take $(\psi, B)\in \hat{W}$ arbitrary and note that, due to the definition of $\phi_x$ in \eqref{phixy} and Theorem~\ref{thmchart}, we have
\[ \phi_x(\mycal{R}(\psi,B)) = \phi_x(-\pi_x\psi^*\psi) = \phi_x(-\psi^*\psi)=\psi  \]
(note that the first two mappings~$\phi_x$ are the ones defined in this theorem, whereas the third mapping is the one from \eqref{phixy}). We thus obtain
\[ \phi_x(\mycal{R}(\psi,B)) = \Big(\psi, \, \pi_J\big(-\psi^*\psi + \begin{pmatrix} 0 & 0 \\ 0 & B \end{pmatrix} + \psi^*\psi\big)\big|_J \Big) = (\psi, B)\:.\]

Next, the mappings~${\mycal{R}}$ and~$\Phi$ are Fr{\'e}chet-smooth because for operators of finite rank (namely rank at most $2n$),
the operator norm is equivalent to the Hilbert-Schmidt norm.
Indeed, for an operator~$A \::\: H \rightarrow I$ mapping to a finite-dimensional Hilbert space~$I$,
\[ \|A\|^2 \leq \| A^\dagger A \| \leq \tr (A^\dagger A) = \|A\|_{\Shil(\H, I)}^2 \leq \dim(I)\: \|A\|^2 \:. \]
This concludes the proof.
\QED

We consider a smooth curve
\[ \gamma \::\: (-\delta, \delta) \rightarrow \F^\reg \qquad \text{with} \qquad \gamma(0)=x \:.\]
\[ \frac{d}{d\tau} \big(\phi_y \circ \gamma(\tau) \big)\big|_{\tau=0} = \bv \in V_y \:. \]
The corresponding equivalence class defines a tangent vector~$[x,\bv,y] \in T_x \F^\reg$.
On the other hand, considering~$\gamma$ as a curve in~$\Shil$, it has the tangent vector
\[ \frac{d \gamma(\tau)}{d\tau} \Big|_{\tau=0} \in \Shil \:. \]
In the chart~$\phi_x$ and setting~$\psi_0 = \phi_x(x)$, the curve is parametrized by~$\psi(\tau)
:= \phi_x \circ \gamma(\tau)$ with
\[ \gamma(\tau) = \phi_x^{-1} \circ \psi (\tau)=-\psi(\tau)^* \psi(\tau) \]
and thus
\[  \frac{d \gamma(\tau)}{d\tau} \Big|_{\tau=0} = D\phi_x^{-1}|_{\psi_0} \bv= -\bv^* \psi_0 - \psi_0^* \bv \qquad \text{with} \qquad \bv \in V_x \:. \]
As $\psi_0=\phi_x(x)=\pi_x$, a direct computation (for details see the proof of Lemma~\ref{Lemg_xinnerproduct}
in Appendix~\ref{appsymm}) that the map $V_x \ni \bv \mapsto -\bv^*\psi_0-\psi^*_0\bv= -\bv^*\pi_x-\pi^*_x \bv$ is injective.This makes it possible to write the tangent space as
\beq \label{Txid}
T_x \F^\reg \simeq T_x^\Shil \F^\reg := \big\{ -\psi^* \psi_0 - \psi_0^* \psi \:\big|\: \psi \in \Symm(S_x) \oplus \Lin(J, I) \big\}  \subseteq \Shil(\H) \:.
\eeq

\begin{Thm} \label{thmriemann} Using the identification~\eqref{Txid}, the mapping
\[ 
g_x \::\: T_x^\Shil \F^\reg \times T_x^\Shil \F^\reg \rightarrow \R \:,\qquad
g_x(A,B) := \tr(AB) \:. \]
defines
a Fr{\'e}chet-smooth Riemannian metric on~$\F^\reg$. Moreover, the topology on~$\F^\reg$ induced by the operator norm coincides with the topology induced by the Riemannian metric.
\end{Thm}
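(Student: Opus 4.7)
The plan is to verify that $g_x$ is a well-defined, positive-definite, symmetric bilinear form that depends Fr{\'e}chet-smoothly on~$x$, and then to compare the resulting distance topology with the operator norm topology.

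The algebraic properties are pointwise. Since $T_x^\Shil \F^\reg \subseteq \Shil(\H)$, for any $A,B \in T_x^\Shil \F^\reg$ the product $AB$ is trace class (product of two Hilbert--Schmidt operators), so $g_x(A,B) = \tr(AB) = \langle A, B\rangle_{\Shil(\H)}$ is well defined and symmetric by cyclicity. For selfadjoint $A$ one has $g_x(A,A) = \|A\|_{\Shil(\H)}^2 \geq 0$ with equality iff $A=0$, which together with the injectivity of the map $V_x \ni \bv \mapsto -\bv^*\pi_x - \pi_x^* \bv$ stated just before the theorem yields positive definiteness on each $T_x^\Shil \F^\reg$.

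For Fr{\'e}chet-smoothness I would work in a single symmetric wave chart $(\phi_y, \Omega_y)$ and write everything in terms of $\psi = \phi_y(x) \in W_y$ and tangent vectors $\bv, \bw \in V_y$. The inverse chart $\phi_y^{-1}(\psi) = -\psi^*\psi$, viewed as a mapping into $\Shil(\H)$, is polynomial of degree two in $\psi$, and its Fr{\'e}chet derivative at $\psi$ sends $\bv$ to the ambient tangent vector $-\bv^*\psi - \psi^*\bv \in \Shil(\H)$. Hence in the chart
\[ g_{\phi_y^{-1}(\psi)}(\bv, \bw) = \tr\bigl(({-}\bv^*\psi - \psi^*\bv)({-}\bw^*\psi - \psi^*\bw)\bigr), \]
which is polynomial in $\psi$ and bilinear in $(\bv, \bw)$. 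Since the trace is continuous on trace-class operators and the argument has finite rank (so all relevant norms are equivalent), this expression is jointly Fr{\'e}chet-smooth on $W_y \times V_y \times V_y$, exhibiting $g$ as a smooth section of the bundle of bounded symmetric bilinear forms on $T\F^\reg$.

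For the topological statement, the first input is that any difference $x_1 - x_2$ of elements of $\F^\reg$ has rank at most~$4n$, so $\|x_1 - x_2\|_{\Lin(\H)} \leq \|x_1 - x_2\|_{\Shil(\H)} \leq \sqrt{4n}\,\|x_1 - x_2\|_{\Lin(\H)}$; hence the topology inherited from $\Shil(\H)$ agrees with the operator norm topology on $\F^\reg$. To connect this to the Riemannian distance I would argue locally: in a symmetric wave chart around $y$, combining smoothness of the coordinate expression above with the pointwise positive-definiteness at $\psi = \pi_y$ yields two-sided bounds
\[ c_1 \|\bv\|_{V_y} \leq \sqrt{g_{\phi_y^{-1}(\psi)}(\bv,\bv)} \leq c_2 \|\bv\|_{V_y} \]
on a sufficiently small neighborhood of $\pi_y$, and integrating these along chart-straight paths translates into a comparison of $d_g(x_1,x_2)$ with $\|\phi_y(x_1) - \phi_y(x_2)\|_{V_y}$ for $x_1, x_2$ near~$y$. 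The main obstacle I expect is precisely this uniform lower bound: the pointwise positive-definiteness at $\psi = \pi_y$ comes from the injectivity statement in Lemma~\ref{Lemg_xinnerproduct}, but propagating it to a quantitative estimate uniformly on a neighborhood requires a perturbation argument controlling how the quadratic form $\bv \mapsto g_{\phi_y^{-1}(\psi)}(\bv,\bv)$ varies with $\psi$, using that the Hilbert--Schmidt and operator norms are equivalent on the finite-rank image.
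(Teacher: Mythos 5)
Your proposal is correct in substance but takes a genuinely different, more computational route than the paper. The paper's proof is a one-liner: it invokes Theorem~\ref{thm311}, which realizes $\F^\reg$ as a smooth Fr{\'e}chet submanifold of the Hilbert space $\Shil(\H)$ via the local diffeomorphism $\mycal{R}$, and then observes that $g_x$ is simply the restriction of the ambient scalar product $\la A,B\ra_{\Shil(\H)}=\tr(AB)$ to the tangent spaces of that submanifold. Symmetry, positive definiteness and Fr{\'e}chet-smoothness are then inherited for free, and the topology statement follows from the submanifold charts together with the equivalence of the operator and Hilbert--Schmidt norms on operators of rank at most $4n$ (an estimate already recorded in the proof of Theorem~\ref{thm311}). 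You instead verify everything by hand in a symmetric wave chart: your pointwise algebra and your observation that the coordinate expression $\tr\bigl((-\bv^*\psi-\psi^*\bv)(-\bw^*\psi-\psi^*\bw)\bigr)$ is polynomial in $\psi$ and bounded bilinear in $(\bv,\bw)$ are both correct, and this is in fact close in spirit to the paper's Appendix~\ref{appsymm}, where the same bilinear form is computed explicitly in the chart $\phi_x$ and shown to be positive definite (Lemma~\ref{Lemg_xinnerproduct}). What the submanifold viewpoint buys is precisely the step you flag as the ``main obstacle'': the uniform two-sided comparison between $\sqrt{g(\bv,\bv)}$ and $\|\bv\|_{V_y}$ on a neighborhood of $\pi_y$. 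Since $V_y$ is infinite-dimensional, pointwise positive definiteness (injectivity) does not by itself give a coercivity constant; you need the quantitative lower bound at the center, which does hold because in block form $\bv^\dagger x+x\bv$ has entries $2x\bv\pi_I$, $x\bv\pi_J$, $\pi_J\bv^\dagger x$ and $x|_I$ is invertible on the $2n$-dimensional space $I$, after which continuity of the form in $\psi$ (using the norm equivalence on finite-rank operators) propagates it to a neighborhood. Your sketch identifies the right ingredients, but this step should be carried out explicitly for the argument to be complete; in the paper it is absorbed into the statement that $\Phi=(\mycal{R}|_{\hat W})^{-1}$ is a local Fr{\'e}chet-diffeomorphism.
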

\Proof Follows immediately because~$g_x$ is
the restriction of the Hilbert space scalar product to the smooth Fr{\'e}chet submanifold~$\F^\reg$.
\QED

We finally remark that the symmetric wave charts are related to Gaussian charts
(see the formulas in~\cite[Sections~5 and 6.2]{gaugefix}, which apply to the infinite-dimensional case
as well). Detailed computations for the Riemannian metric in symmetric wave charts are given in
Appendix~\ref{appsymm}.

\section{Differential Calculus on Expedient Subspaces} \label{secexpedient}
If all functions arising in the analysis were Fr{\'e}chet-smooth, all the methods and notions
from the finite-dimensional setting could be adapted in a straightforward way to the
infinite-dimensional setting. However, this procedure is not sufficient for our purposes,
because the Lagrangian is not Fr{\'e}chet-smooth.
Therefore, we need to develop a differential calculus on Banach spaces for functions
which are only H\"older continuous.
Clearly, in general such functions are not even Fr{\'e}chet-differentiable,
but the G{\^a}teaux derivative may exist in certain directions.
The disadvantage of G{\^a}teaux derivatives is that the differentiable directions
in general do not form a vector space. As a consequence, the usual computation rules
like the linearity of the derivative or the chain and product rules cease to hold.
Our strategy for preserving the usual computation rules is to work on
suitable linear subspaces of the star-shaped set of all G{\^a}teaux-differentiable directions,
referred to as the {\em{expedient differentiable subspace}}.

\subsection{The Expedient Differentiable Subspaces}

In this section $E$ and $F$ denote Banach spaces.
\begin{Def} \label{defV}
Let~$U \subseteq E$ be open and~$f : U \rightarrow F$ an $F$-valued function.
Moreover, let~$V$ be a subspace of~$E$.
The function~$f$ is $k$ times {\bf{$V$-differentiable at~$x_0 \in U$}}
if for every finite-dimensional subspace~$H \subseteq V$,
the restriction of~$f$ to the affine subspace~$H+x_0$ denoted by
\[ g^H : H \rightarrow F \:,\qquad g^H(h) = f(x_0+h) \]
is $k$-times continuously differentiable at~$h=0$.
If this condition holds, the subspace~$V$ is called {\bf{$k$-admissible}} at~$x_0$.
\end{Def}
Thus a function~$f$ is once $V$-differentiable at~$x_0$ if for every finite-dimensional subspace~$H \subseteq V$,
for every~$h_0$ in a small neighborhood of the origin,
\[ g^H(h) = g^H(h_0) + Dg^H|_{h_0} (h-h_0) + o(h-h_0) \qquad \text{for all~$h \in H$} \:, \]
and if~$Dg^H|_{h_0}$ is continuous in the variable~$h_0$ at~$h_0=0$.
Equivalently, choosing a basis~$e_1, \ldots, e_L$ of~$H$, this condition can be stated that
all partial derivatives
\[ \frac{\partial}{\partial \alpha_i} g^H \big( \alpha_1 e_1 + \cdots + \alpha_L e_L \big) \]
exist and are continuous at~$\alpha_1,\ldots, \alpha_L=0$.
The higher differentiability of~$g^H$ can be defined inductively or, equivalently, by demanding
that all partial derivatives up to the order~$k$, i.e.\ all the functions
\[ \frac{\partial^p}{\partial \alpha_{i_1} \cdots \alpha_{i_p}} g^H(\alpha_1 e_1 + \cdots + \alpha_L e_L) \]
with~$i_1,\ldots, i_p \in \{1,\ldots, L\}$ and~$p \leq k$,
exist and are continuous at~$\alpha_1,\ldots, \alpha_L=0$.
%
%
%

An admissible subspace~$V$ is {\em{maximal}} if there are no admissible
proper extensions~$\tilde{V} \supsetneq V$.
The existence of maximal admissible subspaces is guaranteed by Zorn's lemma, but maximal
subspaces are in general not unique. In order to obtain a canonical subspace, we take
the intersection of all maximal admissible subspaces:
\begin{Def} \label{defexpedient}
The {\bf{expedient $k$-differentiable subspace}}~$\E^k(f,x_0)$ of~$f$ at~$x_0$
is defined as the intersection
\[ \E^k(f,x_0) := \bigcap \big\{ V \:\big|\: \text{\rm{$V \subseteq E$ $k$-admissible at~$x_0$ and maximal}} \big\} \:. \]
\end{Def}

Since the expedient differentiable subspace is again admissible at~$x_0$, we obtain a corresponding derivative
as follows. Given~$k\in \N$ and vectors~$h_1, \ldots, h_k \in \E(f,x_0)$, we choose~$H$ as
a finite-dimensional subspace which contains these vectors.
We set
\beq \label{Dfdef}
D^{k,\E} f|_{x_0}(h_1,\ldots, h_k)  := D^k g^H|_0(h_1, \ldots, h_k)
\eeq
(where again $g^H(h):=f(x_0+h)$).
\begin{Lemma} This procedure defines~$D^{k,\E} f|_{x_0}$ canonically as a symmetric, multilinear
mapping
\[ D^{k,\E} f|_{x_0} \::\:
\underbrace{\E^k(f,x_0) \times \cdots \times \E^k(f,x_0)}_{\text{$k$ factors}} \rightarrow F \:. \]
\end{Lemma}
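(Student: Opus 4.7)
The plan is to reduce everything to the finite-dimensional situation by choosing, for each tuple of vectors, a finite-dimensional subspace~$H$ of~$\E^k(f,x_0)$ containing all of them, and then to verify that the answer does not depend on this choice.

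First I would show that~$\E^k(f,x_0)$ is itself $k$-admissible at~$x_0$, so that the definition~\eqref{Dfdef} makes sense for any finite-dimensional subspace~$H \subseteq \E^k(f,x_0)$. This is essentially automatic: if~$H \subseteq V$ is finite-dimensional and~$V$ is $k$-admissible, then by Definition~\ref{defV} the map~$g^H$ is $k$-times continuously differentiable at~$0$; hence if~$\E^k(f,x_0) = \bigcap_i V_i$ is the intersection of maximal admissible subspaces, any finite-dimensional~$H \subseteq \E^k(f,x_0)$ lies in each~$V_i$ and therefore~$g^H$ is $k$-times continuously differentiable at~$0$.

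Next I would establish well-definedness of~$D^{k,\E}f|_{x_0}(h_1,\ldots,h_k)$. Suppose~$H, H' \subseteq \E^k(f,x_0)$ are two finite-dimensional subspaces both containing~$h_1, \ldots, h_k$. Let~$H'' := H + H'$, which is also finite-dimensional and contained in~$\E^k(f,x_0)$. Writing~$\iota : H \hookrightarrow H''$ for the inclusion, we have~$g^H = g^{H''} \circ \iota$, so by the finite-dimensional chain rule
\[
D^k g^H|_0(h_1,\ldots,h_k) = D^k g^{H''}|_0(\iota h_1,\ldots,\iota h_k) = D^k g^{H''}|_0(h_1,\ldots,h_k) \:,
\]
and analogously with~$H'$ in place of~$H$. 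Thus both choices yield the same value, and~$D^{k,\E}f|_{x_0}$ is unambiguously defined as a map from~$\E^k(f,x_0)^k$ to~$F$.

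Finally, multilinearity and symmetry follow by choosing, for any finitely many given vectors, a single finite-dimensional subspace~$H \subseteq \E^k(f,x_0)$ containing all of them: on~$H$ the derivative~$D^k g^H|_0$ is a symmetric multilinear map by Lemma~\ref{SymmF-deriv} applied to the finite-dimensional (and hence Fréchet-smooth in the required sense) function~$g^H$, and by the well-definedness just proved, these algebraic identities transfer to~$D^{k,\E}f|_{x_0}$. I do not anticipate any serious obstacle; the only subtlety is to notice that admissibility passes to subspaces (which is what makes~$\E^k(f,x_0)$ admissible in the first place) and that all identities — additivity in each slot, scalar homogeneity, and symmetry in the arguments — only ever involve finitely many vectors, so that a single enlarged finite-dimensional subspace~$H''$ suffices to verify each of them simultaneously.
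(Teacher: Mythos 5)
Your proposal is correct and follows essentially the same route as the paper: reduce to a finite-dimensional subspace, check that the value is independent of that choice, and then read off symmetry and multilinearity from the finite-dimensional derivative on a single enlarged subspace containing all the vectors involved. The only cosmetic difference is that for well-definedness the paper passes \emph{down} to the span of $h_1,\dots,h_k$ and identifies the derivative with iterated partial derivatives of $f(x_0+\alpha_1 h_1+\cdots+\alpha_k h_k)$, whereas you pass \emph{up} to $H+H'$ and invoke the chain rule for the inclusion --- the two are interchangeable.
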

\Proof In order to show that $D^{k,\E} f|_{x_0}$ is well-defined, let~$H$ and~$\tilde{H}$ be two finite-dimensional subspaces of~$\E(f,x_0)$ which
contain the vectors~$h_1, \ldots, h_k$. Then, expressing the partial derivatives in terms
of partial derivatives, it follows that
\begin{align*}
D^k g^H|_0(h_1, \ldots, h_k) &= \frac{\partial^p}{\partial \alpha_1 \cdots \alpha_k} f(x_0 + \alpha_1 h_1 + \cdots + \alpha_k h_k) 
\Big|_{\alpha_1=\cdots=\alpha_k=0} \\
&=  D^k g^{\tilde{H}}|_0(h_1, \ldots, h_k) \:.
\end{align*}
This shows that the definition~\eqref{Dfdef} does not depend on the choice of~$H$.

The symmetry and homogeneity follow immediately from
the corresponding properties of~$D^k g^H$ in~\eqref{Dfdef}.
In order to prove additivity, we let~$h_1, \ldots, h_k \in \E^k(f,x_0)$
and~$\tilde{h}_1, \ldots, \tilde{h}_k \in \E^k(f,x_0)$.
We let~$H$ be the span of all these vectors and use that the corresponding
operator~$D^k g^H|_0$ in~\eqref{Dfdef} applied to~$h_1+\tilde{h}_1, \ldots, h_k +\tilde{h}_k$
 is multilinear.
 \QED
Note that the operator~$D^{k,\E} f|_{x_0}$ is in general not bounded. Moreover, 
$\E^k(f,x_0)$ will in general not be
a closed subspace of~$E$, nor will it in general be dense.

\subsection{Derivatives Along Smooth Curves}
We now analyze under which assumptions directional derivatives exist.
To this end, we let~$I$ be an interval and~$\gamma : I \rightarrow E$ a smooth curve
(here the notions of Fr{\'e}chet and G{\^a}teaux smoothness coincide).
Moreover let~$t_0 \in I$ with~$x_0:=\gamma(t_0) \in U$ and~$U\subseteq E$ open.
Given a function~$f : U \rightarrow F$, we consider the composition
\[ f \circ \gamma \::\: I \rightarrow F \:. \]

\begin{Prp} {\bf{(chain rule)}} \label{prpchain}
Assume that~$f$ is locally H\"older continuous at~$x_0$, meaning that
there is a neighborhood~$V \subseteq U$ of~$x_0$ as well as constants~$\alpha, c>0$ such that
\beq \label{fhoelder}
\| f(x) - f(x') \|_F \leq c\: \|x-x'\|_E^\alpha \qquad \text{for all~$x,x' \in V$}\:.
\eeq
Moreover, assume that all the derivatives of~$\gamma$ at~$x_0$ up to the order
\beq \label{pdef}
p := \bigg\lceil \frac{1}{\alpha} \bigg\rceil
\eeq
(where~$\lceil \cdot \rceil$ is the ceiling function)
lie in the expedient differentiable subspace at~$x_0$, i.e.
\[ \gamma^{(n)}(t_0) \in \E(f,x_0) \qquad \text{for all~$n \in \{1, \ldots, p\}$} \:. \]
Then the function~$f \circ \gamma$ is differentiable at~$t_0$ and
\[ (f\circ \gamma)'(t_0) = D^{\E}f|_{x_0}\, \gamma'(t_0) \:. \]
\end{Prp}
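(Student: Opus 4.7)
The plan is to reduce the derivative computation to a finite-dimensional calculation, using a Taylor expansion of $\gamma$ of carefully chosen order to separate a ``polynomial part'' that stays inside an admissible finite-dimensional subspace from a remainder small enough to be absorbed by Hölder continuity. Concretely, set
\[
P(h) := \sum_{n=1}^{p}\frac{h^n}{n!}\,\gamma^{(n)}(t_0)\,, \qquad R(h) := \gamma(t_0+h) - x_0 - P(h)\,,
\]
so that smoothness of $\gamma$ gives $\|R(h)\|_E = o(|h|^p)$, and by hypothesis $P(h)$ lies in the finite-dimensional subspace
\[
H := \mathrm{span}\bigl(\gamma'(t_0),\ldots,\gamma^{(p)}(t_0)\bigr) \;\subseteq\; \E(f,x_0)\,.
\]

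I would then split the difference quotient as
\[
\frac{f(\gamma(t_0+h)) - f(x_0)}{h} = \frac{f(\gamma(t_0+h)) - f(x_0+P(h))}{h} \;+\; \frac{f(x_0+P(h)) - f(x_0)}{h}
\]
and treat the two summands separately. For the first, continuity of $\gamma$ and smallness of $P(h)$ ensure that for $|h|$ small both $\gamma(t_0+h)$ and $x_0+P(h)$ lie in the Hölder neighborhood $V$, so \eqref{fhoelder} applies and gives
\[
\frac{\|f(\gamma(t_0+h)) - f(x_0+P(h))\|_F}{|h|} \;\leq\; \frac{c\,\|R(h)\|_E^\alpha}{|h|} \;=\; \frac{o(|h|^{p\alpha})}{|h|}\,,
\]
which tends to zero because the choice $p = \lceil 1/\alpha \rceil$ guarantees $p\alpha \geq 1$.

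For the second summand, $H$ is a finite-dimensional subspace of the admissible space $\E(f,x_0)$, so by Definition~\ref{defV} the restriction $g^H : H \rightarrow F$, $g^H(u) := f(x_0 + u)$, is continuously differentiable at $0$, with derivative agreeing with $D^\E f|_{x_0}$ on $H$ by \eqref{Dfdef}. A first-order Taylor expansion of $g^H$ at $0 \in H$, combined with $P(h) = h\,\gamma'(t_0) + O(h^2)$ and linearity of $D^\E f|_{x_0}$ on $H$ (which is bounded as an operator on the finite-dimensional space $H$), yields
\[
f(x_0+P(h)) - f(x_0) \;=\; h\,D^\E f|_{x_0}\,\gamma'(t_0) + o(|h|)\,.
\]
Dividing by $h$ and letting $h \to 0$ gives the claim.

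The structural difficulty and main obstacle is the tension between the two pieces of regularity available: Hölder continuity of $f$ on an infinite-dimensional neighborhood, versus ordinary continuous differentiability of $f$ only after restriction to finite-dimensional subspaces of $\E(f,x_0)$. Neither suffices alone to differentiate $f\circ\gamma$. The quantitative choice $p = \lceil 1/\alpha\rceil$ is precisely what marries the two: it is the smallest order of Taylor expansion for which the Hölder exponent $\alpha$ applied to an $o(|h|^p)$ remainder produces an $o(|h|)$ contribution, while still requiring only finitely many derivatives of $\gamma$ to lie in the expedient subspace.
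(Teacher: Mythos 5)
Your proof is correct and follows essentially the same route as the paper: you approximate $\gamma$ by its order-$p$ Taylor polynomial, use continuous differentiability of $f$ restricted to the finite-dimensional admissible subspace spanned by $\gamma'(t_0),\ldots,\gamma^{(p)}(t_0)$ to extract the main term, and absorb the $o(|h|^p)$ remainder via H\"older continuity, exactly as in the paper's argument. The only (cosmetic) difference is that you split the difference quotient explicitly rather than phrasing the main term as the derivative of $f\circ\gamma_p$ at $t_0$.
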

\Proof We consider the polynomial approximation of~$\gamma$
\beq \label{gammapdef}
\gamma_p(t) := \sum_{n=0}^p \frac{\gamma^{(n)}(t_0)}{n!}\: (t-t_0)^n \:.
\eeq
By assumption, this curve lies in the affine subspace~$\E(f,x_0)+x_0$.
Using that the restriction of~$f$ to this subspace is continuously differentiable, it follows that
\[  (f\circ \gamma_p)'(t_0) = D^\E f|_{x_0}\, \gamma'(t_0) \:. \]

It remains to control the error term of the polynomial approximation. Using that~$f$ is
locally H\"older continuous, we know that
\[ \big\| (f\circ \gamma)(t) - (f\circ \gamma_p)(t) \big\|_F
\leq c\: \|\gamma(t) - \gamma_p(t)\|_E^\alpha \:. \]
Using that~$\gamma$ is smooth, it follows that
\beq \label{error}
\big\| (f\circ \gamma)(t) - (f\circ \gamma_p)(t) \big\|_F
\leq \big\| o \big( (t-t_0)^p \big) \big\|_E^\alpha 
= o \big( (t-t_0)^{\alpha p} \big) \:.
\eeq
According to~\eqref{pdef}, we know that~$\alpha p \geq 1$. Therefore,
the error term is of the order~$o(t-t_0)$, which shows that also the function~$t\mapsto (f\circ \gamma)(t) - (f\circ \gamma_p)(t)$ is differentiable with vanishing derivative. This proves the desired result.
\QED

This result immediately generalizes to higher derivatives:

\begin{Prp} {\bf{(higher order chain rule)}} \label{prpchain2}
Assume that~$f$ is locally H\"older continuous at~$x_0$ (see~\eqref{fhoelder}).
Moreover, assume that all the derivatives of~$\gamma$ at~$x_0$ up to the order
\beq \label{pdef2}
p := \bigg\lceil \frac{q}{\alpha} \bigg\rceil
\eeq
lie in the expedient differentiable subspace at~$x_0$, i.e.
\[ \gamma^{(n)}(t_0) \in \E^q(f,x_0) \qquad \text{for all~$n \in \{1, \ldots, p\}$} \:. \]
Then the function~$f \circ \gamma$ is $q$-times differentiable at~$t_0$,
and the derivative can be computed with the usual product and chain rules
(formula of Fa{\`a} di Bruno).
\end{Prp}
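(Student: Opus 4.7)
The plan is to extend the proof of Proposition~\ref{prpchain} to higher order by comparing $f\circ\gamma$ with the composition $f\circ\gamma_p$, where $\gamma_p$ is the Taylor polynomial of $\gamma$ at $t_0$ of order $p=\lceil q/\alpha\rceil$ defined by \eqref{gammapdef}. As in the case $q=1$, the derivatives at $t_0$ will be extracted from a Peano-type Taylor expansion of $f\circ\gamma$.

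First, I would establish that $f\circ\gamma_p$ is $q$-times continuously differentiable in a neighborhood of $t_0$ and that its derivatives at $t_0$ are given by the classical Fa\`a di Bruno formula. The key observation is that $\gamma_p(t)-x_0$ takes values in the finite-dimensional subspace
\[
H := \operatorname{span}\bigl\{\gamma^{(n)}(t_0) : 1\le n\le p\bigr\} \subseteq \E^q(f,x_0).
\]
Since $\E^q(f,x_0)$ is itself $q$-admissible (any finite-dimensional subspace of it sits inside every maximal $q$-admissible subspace), the restriction $g^H(h):=f(x_0+h)$ is $C^q$ on a neighborhood of $0$ in the finite-dimensional space~$H$. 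As $t\mapsto\gamma_p(t)-x_0$ is a polynomial $H$-valued curve, the classical finite-dimensional Fa\`a di Bruno formula applies to $g^H\circ(\gamma_p-x_0)$ and yields, for $1\le k\le q$, the $k$-th derivative $(f\circ\gamma_p)^{(k)}(t_0)$ as the usual sum over partitions of $k$ of terms $D^{j,\E}f|_{x_0}\bigl(\gamma^{(m_1)}(t_0),\ldots,\gamma^{(m_j)}(t_0)\bigr)$ with $m_1+\cdots+m_j=k$. All arguments $\gamma^{(m_i)}(t_0)$ appearing here satisfy $m_i\le k\le q\le p$ and thus lie in $\E^q(f,x_0)$ by hypothesis, so the multilinear operators $D^{j,\E}f|_{x_0}$ from the lemma after Definition~\ref{defexpedient} are defined on them and agree by construction with the finite-dimensional derivatives of $g^H$.

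The error estimate proceeds exactly as in the proof of Proposition~\ref{prpchain}: by the local H\"older bound \eqref{fhoelder} and Taylor's theorem for the smooth curve $\gamma$ (which gives $\|\gamma(t)-\gamma_p(t)\|_E=o((t-t_0)^p)$), one obtains
\[
\bigl\|(f\circ\gamma)(t)-(f\circ\gamma_p)(t)\bigr\|_F \le c\,\|\gamma(t)-\gamma_p(t)\|_E^\alpha = o\bigl((t-t_0)^{p\alpha}\bigr) = o\bigl((t-t_0)^q\bigr),
\]
the last equality holding because the choice \eqref{pdef2} forces $p\alpha\ge q$. Combined with the $C^q$-expansion of $f\circ\gamma_p$ at $t_0$ from the previous step, this shows that $f\circ\gamma$ admits a Peano Taylor expansion of order $q$ at $t_0$ whose coefficients are precisely the Fa\`a di Bruno combinations described above.

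The subtle point I would flag is the interpretation of ``$q$-times differentiable at $t_0$''. The hypotheses are localized at $x_0$ and $t_0$ and so do not supply classical iterated differentiability of $f\circ\gamma$ in a full neighborhood of $t_0$; rather, they yield derivatives at $t_0$ in the Peano/Taylor sense, in exactly the spirit of the proof of Proposition~\ref{prpchain} (where $(f\circ\gamma)'(t_0)$ was read off from an $o(t-t_0)$ error). Read in this Peano sense, the two ingredients above constitute a complete proof, and the formula of Fa\`a di Bruno holds with the expedient derivatives $D^{k,\E}f|_{x_0}$ in place of the usual Fr\'echet derivatives.
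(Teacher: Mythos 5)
Your proposal follows essentially the same route as the paper's proof: approximate $\gamma$ by its Taylor polynomial $\gamma_p$ lying in a finite-dimensional subspace of $\E^q(f,x_0)+x_0$, apply the classical Fa\`a di Bruno formula to $f\circ\gamma_p$ there, and use the H\"older bound to show the error is $o\big((t-t_0)^q\big)$. Your closing remark that for $q\ge 2$ the conclusion should be read in the Peano/Taylor-expansion sense is a fair and slightly more careful observation than the paper's own phrasing, which asserts the same conclusion from the same $o\big((t-t_0)^q\big)$ estimate.
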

\Proof We again consider~$f$ along the polynomial approximation~$\gamma_p$~\eqref{gammapdef}
of the curve~$\gamma$.
By assumption, this curve lies in a finite-dimensional subspace of the affine space
\[ \E^q(f,x_0)+x_0 \;\subset\; F \:. \]
Using that the restriction of~$f$ to this subspace is continuously differentiable,
we know that~$f\circ \gamma_p$ is $q$ times continuously differentiable at~$t=t_0$,
and the derivatives can be computed with the formula of Fa{\`a} di Bruno,
\begin{align*}
(f\circ \gamma_p)^{(q)}(t_0) &= D^{\E,q} f|_{x_0} \big(\gamma'(t_0), \ldots, \gamma'(t_0) \big) \\
&\quad\, + \frac{q(q-1)}{2}\: D^{\E,q-1} f|_{x_0} \big(\gamma''(t_0), \gamma'(t_0), \ldots, \gamma'(t_0) \big) + \cdots \:.
\end{align*}

Using~\eqref{error} and~\eqref{pdef2}, we conclude that
\[ (f\circ \gamma)(t) - (f\circ \gamma_p)(t) = o \big( (t-t_0)^q \big) \:. \]
It follows that also this function is $q$-times differentiable and that all its derivatives vanish.
This concludes the proof.
\QED

\section{Application to Causal Fermion Systems in Infinite Dimensions} \label{secnosmooth}
\subsection{Local H\"older Continuity of the Causal Lagrangian}
The goal of this section is to prove the following result.
\begin{Thm} \label{thmhoelder}The Lagrangian is locally H\"older continuous in the sense that
for all~$x,y_0 \in \F$ there is a neighborhood~$U \subseteq \F$ of~$y_0$ and a constant~$c>0$ such that
\beq \label{Lages}
\big| \L(x,y) - \L(x,\tilde{y}) \big| \leq c\: \|y-\tilde{y}\|^{\frac{1}{2n-1}} \qquad
\text{for all~$y,\tilde{y} \in U$}\:,
\eeq
where~$n$ is the spin dimension. Moreover, the integrand of the boundedness constraint is
locally Lipschitz continuous in the sense that
\beq \label{xyes}
\Big| |x y|^2 - |x\tilde{y}|^2 \Big| \leq c\: \|y-\tilde{y}\|^{\frac{1}{2n}} \qquad
\text{for all~$y,\tilde{y} \in U$}\:.
\eeq
\end{Thm}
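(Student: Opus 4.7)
The plan is to prove Theorem~\ref{thmhoelder} by reducing to classical finite-dimensional eigenvalue perturbation theory. The key observation is that since $x \in \F$ has rank at most $2n$, the subspace $I := x(\H)$ has dimension at most $2n$ and is invariant under $xy$ for every $y \in \F$, because $xy(I) \subseteq x(\H) = I$. Moreover, the nonzero spectrum of $xy$ on $\H$, counted with algebraic multiplicities, coincides with the spectrum of the restriction $M(y) := (xy)|_I \in \Lin(I)$. The whole question therefore reduces to perturbation theory for the family of $k \times k$ matrices $M(y)$ with $k \leq 2n$, and the elementary estimate
\[ \bigl\|M(y) - M(\tilde y)\bigr\|_{\Lin(I)} \;\leq\; \|x\|\, \|y - \tilde y\|_{\Lin(\H)} \]
shows that $y \mapsto M(y)$ is globally Lipschitz continuous.

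Next, I would invoke a quantitative matrix-perturbation theorem of Ostrowski/Elsner type: for $k \times k$ matrices $A, B$, the eigenvalues can be matched by a suitable permutation $\sigma$ such that
\[ \max_i \bigl| \lambda_i(A) - \lambda_{\sigma(i)}(B) \bigr| \;\leq\; \bigl(\|A\| + \|B\|\bigr)^{1 - 1/k} \, \|A - B\|^{1/k} \:.\]
For $k = 2n$, applied to $M(y), M(\tilde y)$ on a neighborhood $U$ of $y_0$ where the operator norms are uniformly bounded, this yields for some matching $\sigma$
\[ \bigl| |\lambda^{xy}_i| - |\lambda^{x\tilde y}_{\sigma(i)}| \bigr| \;\leq\; C \, \|y - \tilde y\|^{1/(2n)} \qquad (i = 1, \ldots, 2n) \:.\]
Summing these $2n$ bounds, the spectral weight $|xy| = \sum_i |\lambda^{xy}_i|$ is Hölder continuous in $y$ with exponent $1/(2n)$. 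The factorization $|xy|^2 - |x\tilde y|^2 = (|xy| - |x\tilde y|)(|xy| + |x\tilde y|)$ combined with the local boundedness of $|xy| + |x\tilde y|$ on $U$ then gives the estimate~\eqref{xyes}.

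For the Lagrangian, I would use the algebraic identity
\[ \L(x,y) \;=\; \sum_{i=1}^{2n} |\lambda^{xy}_i|^2 \,-\, \frac{1}{2n}\, |xy|^2 \:,\]
obtained by expanding the double sum in~\eqref{Lagrange}. The second summand is controlled directly by~\eqref{xyes} with exponent $1/(2n)$. For the first summand, one rewrites each difference $|\lambda^{xy}_i|^2 - |\lambda^{x\tilde y}_{\sigma(i)}|^2$ as a product of $|\lambda^{xy}_i| - |\lambda^{x\tilde y}_{\sigma(i)}|$ and $|\lambda^{xy}_i| + |\lambda^{x\tilde y}_{\sigma(i)}|$, and redistributes the magnitude prefactor $(\|A\|+\|B\|)^{1-1/k}$ of Elsner's bound between the two factors. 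This redistribution, exploiting that the additive factor $|\lambda^{xy}_i| + |\lambda^{x\tilde y}_{\sigma(i)}|$ itself carries spectral information, is designed to convert the combined exponent from $1/(2n)$ to $1/(2n-1)$.

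The main obstacle is precisely this last sharpening: the naive combination of the Elsner bound with the difference-of-squares trick only yields exponent $1/(2n)$ for the Lagrangian, matching that of $|xy|^2$. Obtaining $1/(2n-1)$ requires either a refined interpolation inside Elsner's prefactor or, more likely, exploiting additional structure of $xy$ — for instance the fact that $(xy)^* = yx$ has the same nonzero spectrum as $xy$, so that the spectrum is symmetric under complex conjugation and the effective number of independent real eigenvalue parameters drops from $2n$ to $2n-1$. All remaining ingredients (reduction to finite dimensions, Lipschitz continuity of $M(y)$, and the $1/(2n)$ bound on $|xy|^2$) are comparatively routine.
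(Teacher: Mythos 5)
There is a genuine gap, and you have correctly located it yourself: the passage from exponent $\tfrac{1}{2n}$ to $\tfrac{1}{2n-1}$ for the Lagrangian is the entire content of the theorem beyond routine perturbation theory, and your proposal does not supply it. The two routes you sketch for closing it do not work. A uniform Elsner/Ostrowski bound with exponent $1/k$ cannot be improved by ``redistributing'' the magnitude prefactor, since that prefactor carries no information about how the perturbation splits among the roots. And the conjugation symmetry of the nonzero spectrum of $xy$ (which does hold, since $\sigma(xy)\setminus\{0\}=\sigma(yx)\setminus\{0\}=\overline{\sigma(xy)}\setminus\{0\}$) gives nothing when all eigenvalues are real, which is the generic case for the closed chain; so the ``$2n-1$ independent parameters'' heuristic fails. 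Your algebraic identity $\L(x,y)=\sum_i|\lambda_i^{xy}|^2-\tfrac{1}{2n}|xy|^2$ is correct but actively harmful here: each of the two summands is separately only H\"older with exponent $\tfrac{1}{2n}$ (take a nonzero eigenvalue of full multiplicity $2n$), and the improved exponent for $\L$ comes precisely from a cancellation between them that the split destroys.

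The missing ingredient in the paper's argument is a \emph{multiplicity-graded} root-perturbation bound (Lemma~\ref{lemma-hoelder}, proved via Rouch\'e's theorem): a root $\lambda_i$ of the characteristic polynomial with multiplicity $p_i$ moves by at most $C\,\|\tilde{\mathcal P}-\mathcal P\|^{1/p_i}$, not merely $\|\cdot\|^{1/g}$. One then keeps the Lagrangian in its pairwise form~\eqref{Lagrange} and estimates each term $\bigl(|\lambda_i|-|\lambda_j|\bigr)^2$ via $2\,|\Delta\lambda_i-\Delta\lambda_j|\,|\lambda_i-\lambda_j|+|\Delta\lambda_i-\Delta\lambda_j|^2$. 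The quadratic remainder is $O\bigl(\|\tilde y-y\|^{2/(2n)}\bigr)=O\bigl(\|\tilde y-y\|^{1/n}\bigr)$, which is better than $\tfrac{1}{2n-1}$. In the cross term, the factor $|\lambda_i-\lambda_j|$ vanishes whenever $\lambda_i=\lambda_j$, so only pairs of \emph{distinct} roots contribute; for such a pair both multiplicities are at most $2n-1$, and the graded bound gives $|\Delta\lambda_i|,|\Delta\lambda_j|\le C\,\|\tilde y-y\|^{1/(2n-1)}$. This is the mechanism producing the exponent $\tfrac{1}{2n-1}$, and it has no counterpart in your proposal. The remaining parts of your outline (reduction to the invariant finite-dimensional subspace, Lipschitz dependence of the restricted operator on $y$, and the bound~\eqref{xyes} with exponent $\tfrac{1}{2n}$) are sound and essentially coincide with the paper's.
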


We begin with a preparatory lemma.
\begin{Lemma} {\bf{(H\"older continuity of roots)}} \label{lemma-hoelder}
Let
\[ {\mathcal{P}}(\lambda) := \lambda^g + c_{g-1}\, \lambda^{g-1} + \cdots + c_0
= \prod_{i=1}^g (\lambda - \lambda_i) \]
be a complex monic polynomial of degree~$g$ with roots~$\lambda_1, \ldots, \lambda_g$. 
Then there are constants~$C, \varepsilon>0$ such that any
complex monic polynomial~$\tilde{\mathcal{P}}(\lambda) = \lambda^g + \tilde{c}_{g-1}\, \lambda^{g-1} + \cdots + \tilde{c}_0 $ of degree~$g$ which is close to~${\mathcal{P}}$ in the sense that
\[ \|\tilde{\mathcal{P}} - {\mathcal{P}}\| := \max_{\ell \in \{0,\ldots,g-1\} } \big| \tilde{c}_\ell - c_\ell \big| < \varepsilon \]
can be written as~$\tilde{\mathcal{P}}(\lambda) = \prod_{i=1}^g (\lambda - \tilde{\lambda}_i)$
with
\[ 
|\lambda_i - \tilde{\lambda}_i| \leq C\, \|\tilde{\mathcal{P}} - {\mathcal{P}} \|^\frac{1}{p_i}
\qquad \text{for all~$i=1,\ldots, g$}\:, \]
where~$p_i$ is the multiplicity of the root~$\lambda_i$.
\end{Lemma}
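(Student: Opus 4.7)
The plan is to prove this classical root-perturbation bound by applying Rouch\'e's theorem with carefully chosen disk radii that scale like the appropriate root of the coefficient perturbation. First I would group the roots by value: write
\[ \mathcal{P}(\lambda) = \prod_{j=1}^s (\lambda - \mu_j)^{m_j} \]
where $\mu_1,\ldots,\mu_s$ are the \emph{distinct} roots, with multiplicities $m_j$, so that $\sum_j m_j = g$. For each index $i$ with $\lambda_i = \mu_j$ we have $p_i = m_j$, so it suffices to produce, for every $j$, exactly $m_j$ roots of $\tilde{\mathcal{P}}$ within distance $C\,\|\tilde{\mathcal{P}}-\mathcal{P}\|^{1/m_j}$ of $\mu_j$.

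Next I would fix $r_0 > 0$ small enough that the closed disks $\overline{B_{r_0}(\mu_j)}$ are pairwise disjoint, and pick $R > 0$ so large that all $\mu_j$ lie well inside $B_R(0)$. Two uniform estimates on the circle $\partial B_r(\mu_j)$ for $r \leq r_0$ would then drive the argument: a lower bound
\[ |\mathcal{P}(\lambda)| \;=\; r^{m_j} \prod_{k \neq j} |\lambda-\mu_k|^{m_k} \;\geq\; \delta_j \, r^{m_j} \]
with $\delta_j := \tfrac12 \prod_{k \neq j} \dist(\mu_j,\mu_k)^{m_k} > 0$, and the trivial upper bound
\[ \bigl|\tilde{\mathcal{P}}(\lambda)-\mathcal{P}(\lambda)\bigr| \;\leq\; \|\tilde{\mathcal{P}}-\mathcal{P}\| \sum_{\ell=0}^{g-1} R^{\ell} \;=:\; C_1\,\|\tilde{\mathcal{P}}-\mathcal{P}\| \]
valid on $\overline{B_R(0)}$.

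Now set $r_j := \bigl(2 C_1 \|\tilde{\mathcal{P}}-\mathcal{P}\|/\delta_j\bigr)^{1/m_j}$ and choose the threshold $\varepsilon$ so small that $r_j \leq r_0$ for all $j$ whenever $\|\tilde{\mathcal{P}}-\mathcal{P}\| < \varepsilon$; the disks $B_{r_j}(\mu_j)$ then remain pairwise disjoint and sit inside $B_R(0)$. By construction $|\tilde{\mathcal{P}}-\mathcal{P}| < |\mathcal{P}|$ on each $\partial B_{r_j}(\mu_j)$, so Rouch\'e's theorem yields exactly $m_j$ zeros of $\tilde{\mathcal{P}}$ (counted with multiplicity) inside $B_{r_j}(\mu_j)$. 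Since $\sum_j m_j = g$ accounts for all zeros of $\tilde{\mathcal{P}}$, I can define the labelling $\tilde{\lambda}_i$ by pairing, for each $j$, the $m_j$ indices $i$ with $\lambda_i = \mu_j$ bijectively with the $m_j$ zeros of $\tilde{\mathcal{P}}$ in $B_{r_j}(\mu_j)$. This gives $|\lambda_i - \tilde{\lambda}_i| \leq r_j = C\,\|\tilde{\mathcal{P}}-\mathcal{P}\|^{1/p_i}$ with $C := \max_j (2C_1/\delta_j)^{1/m_j}$.

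The argument is essentially a bookkeeping exercise around Rouch\'e's theorem; the one point that needs care is the scaling of $r_j$ — choosing $r_j$ comparable to $\|\tilde{\mathcal{P}}-\mathcal{P}\|^{1/m_j}$ (rather than something coarser such as $\|\tilde{\mathcal{P}}-\mathcal{P}\|^{1/g}$) is exactly what records the sensitivity of a higher-multiplicity root and produces the sharp exponent $1/p_i$. Everything else (disjointness of disks, absorption of constants into $C$ and $\varepsilon$) is routine.
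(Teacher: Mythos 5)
Your proposal is correct and follows essentially the same route as the paper: a Rouch\'e argument on disks around the (distinct) roots whose radii are chosen to scale like $\|\tilde{\mathcal{P}}-\mathcal{P}\|^{1/p_i}$, with the lower bound on $|\mathcal{P}|$ coming from the separation of distinct roots. The only differences are cosmetic (the paper first rescales all roots into the unit ball and treats one root at a time, while you keep a general radius $R$ and make the pairing of the $g$ zeros explicit).
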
 \noindent
This lemma is proven in a more general context in~\cite[Theorem~2]{brink}.
For self-consistency we here give a simple proof based on Rouch{\'e}'s theorem:

\Proof[Proof of Lemma~\ref{lemma-hoelder}] After the rescaling~$\lambda \rightarrow \nu \lambda$
and~$\lambda_i \rightarrow \nu \lambda_i$ with~$\nu>0$, we can assume that
all the roots~$\lambda_i$ are in the unit ball. Then the polynomial~$\Delta {\mathcal{P}} :=
\tilde{\mathcal{P}} - {\mathcal{P}}$ is bounded in the ball of radius two by
\beq \label{boundDP}
|\Delta {\mathcal{P}}(\lambda)| \leq g\:2^g\: \|\Delta {\mathcal{P}}\| \qquad \text{for all~$\lambda$ with~$|\lambda| \leq 2$}\:.
\eeq
We denote the minimal distance of distinct eigenvalues by
\[ D := \min_{\lambda_i \neq \lambda_j} |\lambda_i - \lambda_j| \:. \]
Since there is a finite number of roots, it clearly suffices to prove the lemma for one of them.
Given~$i \in \{1, \ldots, g\}$ we choose
\beq \label{deldef}
\delta = \bigg( \frac{g\:2^{2g-p_i+1}}{D^{g-p_i}}\: \|\Delta {\mathcal{P}} \| \bigg)^{\frac{1}{p_i}} \:.
\eeq
Next, we choose~$\varepsilon$ so small that~$\delta<D/2$.
We consider the ball~$\Omega = B_\delta(\lambda_i)$. Then for any~$\lambda \in \partial \Omega$,
the polynomial~${\mathcal{P}}$ satisfies the bound
\[ | {\mathcal{P}}(\lambda) | \geq (D/2)^{g-p_i} \,\delta^{p_i} \geq g\:2^{g+1}\: \|\Delta {\mathcal{P}} \|
> |\Delta {\mathcal{P}}(\lambda) | \:, \]
where we used~\eqref{deldef} and~\eqref{boundDP}.
Therefore, Rouch{\'e}'s theorem (see for example~\cite[Theorem~10.36]{rudin})
implies that the polynomials~${\mathcal{P}}$ and~$\tilde{\mathcal{P}}$ have
the same number of roots in the ball~$\Omega$. Thus, after a suitable ordering of the
roots,
\[ |\lambda_i - \tilde{\lambda}_i| \leq \delta \:. \]
Using~\eqref{deldef} gives the result.
\QED

\Proof[Proof of Theorem~\ref{thmhoelder}] Let~$x, y \in \F$.
Since both operators~$x$ and~$y$ vanish on the orthogonal complement of the span their images combined,~$J := \text{span}(S_x, S_y)$, it suffices to compute the eigenvalues on the finite-dimensional
subspace~$J$. Choosing an orthonormal basis of~$S_x=x(\H)$ and extending it to an orthonormal basis of~$J$,
the matrix~$x y|_J- \1_J$ has the block matrix form
\[ \begin{pmatrix} x y \pi_x - \lambda\1 & *  \\
0 & -\lambda \1 \end{pmatrix} \:. \]
Therefore, its characteristic polynomial is given by
\[ \det\nolimits_J (x y- \1_J) = (-\lambda)^{\dim J - \dim x(\H)} \det\nolimits_{x(\H)} \big(x y \pi_x - \lambda \1_{x(\H)} \big) \:. \]

This consideration shows that it suffices to analyze the operators~$x y \pi_x$
and similarly~$x \tilde{y} \pi_x$ on the finite-dimensional Hilbert space~$x(\H)$.
We denote the corresponding characteristic polynomials 
by~${\mathcal{P}}$ and~$\tilde{\mathcal{P}}$, respectively.
They are monic polynomials of degree~$g:= \dim x(\H)$.
The difference of these polynomials can be estimated in terms of operator norms on~$\Lin(\H)$ as follows,
\[ \|\tilde{\mathcal{P}} - {\mathcal{P}}\| \leq c\big(g, \|x\|, \|y\| \big)\: \big\| x \tilde{y} \pi_x - x y \pi_x \big\|
\leq c'\big(g, \|x\|, \|y\| \big) \: \big\| \tilde{y}-y \big\| \:, \]
valid for all~$\tilde{y}$ with~$\|\tilde{y}\| \leq 2 \,\|y\|$.
According to Lemma~\ref{lemma-hoelder}, for sufficiently small~$\|y-\tilde{y}\|$ the eigenvalues of these matrices can be
arranged to satisfy the inequalities
\[ |\lambda_i - \tilde{\lambda}_i| \leq C\, \|\tilde{\mathcal{P}} - {\mathcal{P}} \|^\frac{1}{p_i}
\leq C'\big(x, y \big) \: \big\| \tilde{y}-y \big\|^\frac{1}{p_i} \:. \]

In order to prove~\eqref{xyes}, we consider the estimate
\beq \label{xy2} \begin{split}
\Big| |x y|^2 - |x \tilde{y}|^2 \Big| &\leq \sum_{i=1}^g \Big| |\lambda_i|^2 - |\tilde{\lambda}_i|^2 \Big| \\
&\leq \sum_{i=1}^g |\lambda_i - \tilde{\lambda}_i| \: \big( |\lambda_i| + |\tilde{\lambda}_i| \big)
\leq \tilde{C}(x, y) \: \big\| \tilde{y}-y \big\|^\frac{1}{g}
\end{split}
\eeq
and use that~$g \leq 2n$.

It remains to prove~\eqref{Lages}. In the case~$g<2n$, a simple estimate similar to~\eqref{xy2}
gives the result. In the remaining case~$g=2n$,
using the abbreviation~$\Delta \lambda_i := \tilde{\lambda}_i - \lambda_i$, we obtain
\begin{align*}
\big| \L(x, \tilde{y}) - \L(x,y) \big| &\leq \frac{1}{g} \sum_{i,j=1}^g \Big| |\tilde{\lambda}_i - \tilde{\lambda_j} |^2 -  |\lambda_i - \lambda_j |^2 \Big| \\
&\leq \frac{1}{g} \sum_{i,j=1}^g \Big( 2\:  |\Delta \lambda_i - \Delta \lambda_j |\:|\lambda_i - \lambda_j |
+ |\Delta \lambda_i - \Delta \lambda_j |^2 \Big) \\
&\leq c_2(x, y) \sum_{i,j=1}^g \Big( \big\| \tilde{y}-y \big\|^{\max\big(\frac{1}{p_i}, \frac{1}{p_j} \big) }\:|\lambda_i - \lambda_j | + \big\| \tilde{y}-y \big\|^\frac{2}{g} \Big) \\
&\leq c_3(x, y) \sum_{i,j=1}^g \Big( \big\| \tilde{y}-y \big\|^{\frac{1}{g-1}} + \big\| \tilde{y}-y \big\|^\frac{2}{g} \Big) \:,
\end{align*}
where in the last step we used that, whenever~$\lambda_i \neq \lambda_j$, the multiplicities of
both roots are at most~$g-1$. The inequality
\[ \frac{2}{g} = \frac{1}{n} \geq \frac{1}{2n-1} = \frac{1}{g-1} \:, \]
yields the desired H\"older inequality with exponent~$1/(2n-1)$.
Finally, it is clear from the construction that the constant depends continuously on~$y$.
This concludes the proof.
\QED

In the case of spin dimension one, the Lagrangian is Lipschitz continuous,
in agreement with the findings in~\cite{support}.
If the spin dimension is larger, one still has H\"older continuity, but the H\"older exponent
becomes smaller if the spin dimension is increased. This can be understood from the fact that
the higher the spin dimension is, the higher the degeneracies of the eigenvalues of~$xy$ can be.

We next prove a global H\"older continuity result.
\begin{Thm} {\bf{(Global H\"older continuity)}} \label{thmhoelderglobal}
There is a constant~$c(n)$ which depends only on the spin dimension such that
for all $x,y \in \F$ with~$y \neq 0$ there is a neighborhood $U\subseteq \F$ of~$y$ with
\begin{flalign}
\label{GlobalHoelder}
	| \L(x,y) - \L(x,\tilde{y}) | \leq c(n) \,\|y\|^{2-\frac{1}{2n-1}}\, \|x\|^2\, \| \tilde{y}-y\|^{\frac{1}{2n-1}}
	\qquad \text{for all ~$\tilde{y}\in U$}\:.
\end{flalign}
\end{Thm}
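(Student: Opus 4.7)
The strategy is to reduce the global estimate to a uniform local estimate on the unit sphere via the quadratic homogeneity of the Lagrangian, and then to revisit the proof of Theorem~\ref{thmhoelder} while keeping track of the explicit $\|x\|$- and $\|y\|$-dependence of every constant. Since the eigenvalues of $(\alpha x)(\beta y)=\alpha\beta\, xy$ equal $\alpha\beta\,\lambda^{xy}_i$ for any $\alpha,\beta>0$, the Lagrangian satisfies $\L(\alpha x,\beta y)=\alpha^2\beta^2\,\L(x,y)$. Setting $\hat y := y/\|y\|$ and $\hat z := \tilde y/\|y\|$, this yields
\[
|\L(x,y)-\L(x,\tilde y)|=\|y\|^2\,|\L(x,\hat y)-\L(x,\hat z)|\qquad\text{and}\qquad \|\hat z-\hat y\|=\tfrac{1}{\|y\|}\,\|\tilde y-y\| \:,
\]
so the desired inequality \eqref{GlobalHoelder} is equivalent to proving $|\L(x,\hat y)-\L(x,\hat z)|\le c(n)\,\|x\|^2\,\|\hat z-\hat y\|^{1/(2n-1)}$ for $\hat y$ of unit norm and $\hat z$ in a neighborhood of $\hat y$ in $\F$, with a constant depending only on the spin dimension; the $\|x\|^2$-factor arises analogously from the homogeneity in $x$.

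For this reduced problem I would follow the proof of Theorem~\ref{thmhoelder}, recording the constants explicitly. The characteristic polynomial $\mathcal{P}$ of $xy\pi_x$ on $x(\H)$ has coefficients bounded by $c(g)\,(\|x\|\|y\|)^{g-k}$, so a direct differentiation gives $\|\tilde{\mathcal{P}}-\mathcal{P}\|\le c(g)\,\|x\|^{g}\|y\|^{g-1}\|\tilde y-y\|$; Lemma~\ref{lemma-hoelder} then provides the eigenvalue estimate $|\lambda_i-\tilde\lambda_i|\le C\,\|\tilde{\mathcal{P}}-\mathcal{P}\|^{1/p_i}$; and the Lagrangian sum
\[
|\L(x,y)-\L(x,\tilde y)|\;\lesssim\; \sum_{i,j}\bigl(|\Delta\lambda_i-\Delta\lambda_j|\,|\lambda_i-\lambda_j|+|\Delta\lambda_i-\Delta\lambda_j|^2\bigr)\:,
\]
combined with $|\lambda_i-\lambda_j|\le 2\|x\|\|y\|$ and the exponent reduction $p_i\le g-1$ for $\lambda_i\neq\lambda_j$, produces the H\"older exponent $1/(2n-1)$ and, after the reduction above, the powers of $\|x\|$ and $\|y\|$ in \eqref{GlobalHoelder}.

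The hard part will be the $D$-dependence in Lemma~\ref{lemma-hoelder}: the constant $C$ contains a factor $D^{-(g-p_i)/p_i}$, where $D$ is the minimum separation of distinct eigenvalues of $xy\pi_x$, and $D$ can be arbitrarily small even on the unit sphere. The resolution I would pursue is that this singular factor only ever enters the Lagrangian estimate multiplied by $|\lambda_i-\lambda_j|$ with $\lambda_i\neq\lambda_j$, together with the two-sided bound $D\le|\lambda_i-\lambda_j|\le 2\|x\|\|y\|$. Using suitable powers of $|\lambda_i-\lambda_j|$ to absorb the singular $D^{-(g-p_i)/p_i}$ into a bounded quantity (with a case-split on the multiplicity pattern, and noting that the fully-degenerate case $p_i=g$ is automatic since the second Lagrangian term $|\Delta\lambda_i-\Delta\lambda_j|^2$ then has exponent $2/g\ge 1/(2n-1)$ and $C$ carries no $D$) should, after the reduction to unit norm, yield a constant depending only on $n$ and hence the theorem.
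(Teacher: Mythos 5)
Your proposal is correct in outline but takes a genuinely different route from the paper at the decisive step. The first reduction is identical: both you and the paper exploit the homogeneity $\L(\alpha x,\beta y)=\alpha^2\beta^2\,\L(x,y)$ to normalize $\|x\|=\|y\|=1$. From there the paper does not track constants at all: it conjugates $x$ and $y$ by a unitary into a fixed $4n$-dimensional subspace $I\subseteq\H$ (possible since $\dim x(\H)+\dim y(\H)\le 4n$, and the Lagrangian and norms are invariant under joint unitary conjugation), observes that the normalized pair then lies in the compact set $\overline{B_1(0)}\times\overline{B_1(0)}\subseteq\Lin(I)\times\Lin(I)$, and combines the asserted continuity of the local H\"older constant from Theorem~\ref{thmhoelder} with compactness to obtain a uniform $c(n)$. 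Your explicit bookkeeping is longer but buys a constructive constant and, more importantly, actually substantiates the boundedness of the local constant that the paper's compactness argument presupposes: the cancellation of the $D$-singularities is exactly what must be verified for that continuity claim to be meaningful.

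One caveat on your treatment of the hard step. The factor $|\lambda_i-\lambda_j|$ cannot by itself absorb $D^{-(g-p_i)/p_i}$: the pair $(i,j)$ under consideration need not realize the minimal gap, so $|\lambda_i-\lambda_j|$ may be of order one while $D$ is tiny because of some other pair of roots, and then neither positive nor negative powers of $|\lambda_i-\lambda_j|$ control $D^{-1}$. The bound $D\le|\lambda_i-\lambda_j|$ only helps in the borderline case $p_i=g-1$, where there are exactly two distinct roots, hence $|\lambda_i-\lambda_j|=D$ and $D^{-1/(g-1)}\cdot D=D^{(g-2)/(g-1)}$ is bounded. For $p_i<g-1$ the absorption must instead come from the strict surplus of H\"older exponent $1/p_i-1/(g-1)>0$: shrinking the neighborhood $U$ so that $\|\tilde{\mathcal P}-\mathcal P\|$ (equivalently $\|\tilde y-y\|$) is at most a suitable positive power of $D$ converts $D^{-(g-p_i)/p_i}\,\|\tilde{\mathcal P}-\mathcal P\|^{1/p_i}$ into $c(g)\,\|\tilde{\mathcal P}-\mathcal P\|^{1/(g-1)}$. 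This is legitimate because the statement of Theorem~\ref{thmhoelderglobal} allows $U$ to depend on $x$ and $y$, only the constant in~\eqref{GlobalHoelder} must be uniform. The quadratic term $|\Delta\lambda_i-\Delta\lambda_j|^2$ is handled the same way, and your observation about the fully degenerate case $p_i=g$ is correct. With this correction your argument closes.
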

\Proof
Without loss of generality we can assume that $x \neq 0$.
Moreover, using that both sides of the inequality~\eqref{GlobalHoelder} have the same scaling behavior
under the rescaling
\begin{flalign*}
	x \rightarrow \frac{x}{\|x\|}\;, \quad y \rightarrow \frac{y}{\|y\|}\;, \quad\tilde{y} \rightarrow \frac{\tilde{y}}{\|y\|}\;,
\end{flalign*}
it suffices to consider the case that $\|x\|=\|y\|=1$.

Next, choosing a fixed~$4n$-dimensional subspace of $I \subseteq \H$, we can always find a unitary transformation $U: \H \rightarrow \H$ such that $UxU^{-1}(\H), UyU^{-1}(\H) \subseteq I$.
Since the Lagrangian and the operator norms are invariant under such joint unitary transformations (as they leave the eigenvalues of $xy$ invariant), we can assume that both $x$ and $y$ map into the fixed finite dimensional
subspace~$I$.

After these transformations, the operators~$x$ and~$y$ can be considered as operators in~$\Lin(I)$.
Therefore, they lie in the compact set~$\overline{B_1(0)} \subseteq \Lin(I)$.
Since the H\"older constant for the local H\"older continuity depends continuously on $x$ and $y$,
a compactness argument shows that we can choose the H\"older constant uniformly in~$x$ and~$y$: As the previous arguments show, the local H\"older constant can be written as a continuous function $c: \Lin(I)\times \Lin(I) \rightarrow \R^+,\, (x,y) \mapsto c(x,y)$. Since $\overline{B_1(0)} \times \overline{B_1(0)} \subseteq \Lin(I)\times \Lin(I)$ is compact, the local H\"older constant function $c$ is bounded on this set by a constant $c_{\mathrm{max}}>0$, which can then be taken as the desired global H\"older constant.
\QED

\begin{Remark} {\em{
\bitem
\item[{\rm{(1)}}] Since the Lagrangian is symmetric, Theorem~\ref{thmhoelderglobal}
also gives rise to global H\"older continuity with respect to the other argument. Thus
 for all $x,y \in \F$ with~$x \neq 0$ there is a neighborhood $U \subseteq \F$ of $x$ such that
		\begin{flalign}
		\label{GlobalHoelderx}
			|\L(x,y) - \L(\tilde{x},y)| \leq c(n) \|x\|^{2-\frac{1}{2n-1}} \|y\|^2 \|\tilde{x}-x\|^{\frac{1}{2n-1}}
			\qquad \text{for all ~$\tilde{x}\in U$}\:.
		\end{flalign} 
\item[{\rm{(2)}}] As explained in the proof of Theorem~\ref{thmhoelderglobal},
the Lagrangian $\L(x,y)$ depends only on the nonzero eigenvalues of $xy$ and these coincide with the
eigenvalues of $xy\pi_x$. Thus denoting
\[ 
J:=\mathrm{span}(S_x,S_{\tilde{x}}) \:, \]
we immediately obtain the following strengthened version of~\eqref{GlobalHoelderx}:
Every~$x\neq 0$ has a neighborhood~$U \subset \F$ such that the inequality
		\begin{flalign}
		\label{GlobalHoelderxRefined}
		\begin{split}
			|\L(x,y) - \L(\tilde{x},y)| &= |\L(x,\pi_J \,y\,\pi_J) - \L(\tilde{x},\pi_J \,y\, \pi_J)| \\
			&\leq c(n)\: \|x\|^{2-\frac{1}{2n-1}} \:\|\pi_J \,y\, \pi_J\|^2 \:\|\tilde{x}-x\|^{\frac{1}{2n-1}}
		\end{split}
		\end{flalign}
holds for all~$\tilde{x} \in U$ and all~$y \in \F$.
This estimate will be needed for the proof of the chain rule for the integrated Lagrangian~$\ell$
in Theorem~\ref{thmlchain}.
\item[{\rm{(3)}}] In the case~$y=0$, a direct estimate of the eigenvalues shows that
one has H\"older continuity with the improved exponent two,
\[ \big| \L(x,\tilde{y}) \big| \leq c(n) \, \|x\|^2\, \| \tilde{y}\|^2 \:. \]
This inequality can be combined with the result of Theorem~\ref{thmhoelderglobal} to
the statement that for all $x,y$ there is a neighborhood $U\subseteq \F$ of~$y$ with
\beq \label{GlobalHoelder2}
	| \L(x,y) - \L(x,\tilde{y}) | \leq c(n, y) \|x\|^2\, \| \tilde{y}-y\|^{\frac{1}{2n-1}}
	\qquad \text{for all~$\tilde{y}\in U$}\:.
\eeq
Likewise, \eqref{GlobalHoelderxRefined} generalizes to
\beq
\label{GlobalHoelderxRefined2}
|\L(x,y) - \L(\tilde{x},y)| \leq c(n,x) \:\|\pi_J \,y\, \pi_J\|^2 \:\|\tilde{x}-x\|^{\frac{1}{2n-1}} \:.
\eeq
This inequality will be used in the proof of Theorem~\ref{thmlchain}.
\QEDrem
\eitem }}
\end{Remark}

\subsection{Definition of Jet Spaces} \label{secjet}
For the analysis of causal variational principles, the jet formalism was developed
in~\cite{jet}; see also~\cite[Section~2]{fockbosonic}.
We now generalize the definition of the jet spaces to
causal fermion systems in the infinite-dimensional setting.
Our method is to work with the expedient subspaces, where
for convenience derivatives at~$x$ are always computed in the corresponding chart~$\phi_x$.
For example, for analyzing the
differentiability of a real-valued function~$f$ at a point~$x \in \F^\reg$, we consider the composition
\[ f \circ \phi_x^{-1} \::\: \Omega_x \subseteq \Symm(S_x) \oplus \Lin(J,I) \rightarrow \R \:. \]

We introduce~$\Gdiff_\rho$ as the linear space of all vector fields for which the
directional derivative of the function~$\ell$ exists in the sense of expedient subspaces
(see Definition~\ref{defexpedient}),
\[ \Gdiff_\rho =\Big\{ \bu \in C^\infty(M, T\F^\reg) \;\big|\; \text{$\bu(x) \in \E \big(
\ell \circ \phi_x^{-1}, \phi_x(x) \big)$ for all~$x \in M$} \Big\} \:. \]
This gives rise to the jet space
\beq \label{Jdiffdef}
\Jdiff_\rho := C^\infty(M, \R) \oplus \Gdiff_\rho \;\subseteq\; \J_\rho \:.
\eeq
We choose a linear subspace~$\Jtest_\rho \subseteq \Jdiff_\rho$ with the property
that its scalar and vector components are both vector spaces,
\beq\label{Gammatest}
\Jtest_\rho = \Ctest(M, \R) \oplus \Gtest_\rho \;\subseteq\; \Jdiff_\rho \:,
\eeq
and the scalar component is nowhere trivial in the sense that
\[ 
\text{for all~$x \in M$ there is~$a \in \Ctest(M, \R)$ with~$a(x) \neq 0$}\:. \]
It is convenient to consider a pair~$\fu := (a, \bu)$
consisting of a real-valued function~$a$ on~$M$ and a vector field~$\bu$
on~$T\F^\reg$ along~$M$, and to denote the combination of 
multiplication and directional derivative by
\beq \label{Djet}
\nabla_{\fu} \ell(x) := a(x)\, \ell(x) + \big(D_\bu \ell \big)(x) \:.
\eeq

For the Lagrangian, being a function of two variables~$x,y \in \F^\reg$,
we always work in charts~$\phi_x$ and~$\phi_y$, giving rise to the mapping
\beq \label{Lchart}
\L \circ \big(\phi_x^{-1} \times \phi_y^{-1}\big) =
\L \big(\phi_x^{-1}(.), \phi_y^{-1}(.) \big) \::\: \Omega_x \times \Omega_y \subseteq E \rightarrow \R \:,
\eeq
where~$E$ is the Cartesian product of Banach spaces
\[ 
E := \big( \Symm(S_x) \oplus \Lin(J_x,I_x) \big) \times \big( \Symm(S_y) \oplus \Lin(J_y,I_y) \big) \]
with the norm
\[ \|(\psi_x, \psi_y)\|_E := \max \big( \|\psi_x\|_{\Lin(\H)}, \|\psi_y\|_{\Lin(\H)} \big) \]
(where the subscripts~$x$ and~$y$ clarify the dependence on the base points, i.e.\
$I_x = x(H)$, $J_x = I_x^\perp \subseteq \H$ and similarly at~$y$).
We denote partial derivatives acting on the first and second arguments by subscripts~$1$ and~$2$,
respectively. Throughout this paper, we use the following conventions for partial derivatives and jet derivatives:
\begin{itemize}[leftmargin=2em]
\itemD Partial and jet derivatives with an index $i \in \{ 1,2 \}$, as for example in~\eqref{derex}, only act on the respective variable of the function $\L$.
This implies, for example, that the derivatives commute,
\[ 
\nabla_{1,\fv} \nabla_{1,\fu} \L(x,y) = \nabla_{1,\fu} \nabla_{1,\fv} \L(x,y) \:. \]
\itemD The partial or jet derivatives which do not carry an index act as partial derivatives
on the corresponding argument of the Lagrangian. This implies, for example, that
\[ \nabla_\fu \int_\F \nabla_{1,\fv} \, \L(x,y) \: d\rho(y) =  \int_\F \nabla_{1,\fu} \nabla_{1,\fv}\, \L(x,y) \: d\rho(y) \:. \]
\end{itemize}

\begin{Def} \label{defJvary}
For any~$\ell \in \N_0 \cup \{\infty\}$, the jet space~$\J_\rho^\ell \subseteq \J_\rho$
is defined as the vector space of test jets with the following properties:
\begin{itemize}[leftmargin=2em]
\item[\rm{(i)}] The directional derivatives up to order~$\ell$ exist in the sense that
\begin{align*}
\J^\ell_\rho &\subseteq \Big\{ (b,\bv) \in \J_\rho \:\Big|\: \big(\bv(x), \bv(y) \big) \in \Gamma^\ell_\rho(x,y) \\
&\qquad\quad \text{for all~$y \in M$ and~$x$ in an open neighborhood of~$M \subseteq \F^\reg$} \Big\} \:,
\end{align*}
where
\[ \Gamma^\ell_\rho(x,y) := \E^\ell \Big(\L \circ \big(\phi_x^{-1} \times \phi_y^{-1}\big), 
\big(\phi_x(x), \phi_y(y) \big) \Big) \:. \]
The higher jet derivatives are defined by using~\eqref{Djet} and multiplying out, keeping in mind that the partial derivatives act only on the Lagrangian, i.e.\
\begin{align*}
&\nabla^{p, \E} \L \circ \big(\phi_x^{-1} \times \phi_y^{-1}\big)\big|_{(\phi_x(x), \phi_y(y))}
\Big( \big(\fv_1(x), \fv_1(y) \big), \ldots, \big(\fv_p(x), \fv_p(y) \big) \Big) \\
&:= D^{p, \E} \L \circ \big(\phi_x^{-1} \times \phi_y^{-1}\big)\big|_{(\phi_x(x), \phi_y(y))}
\Big( \big(\bv_1(x), \bv_1(y) \big), \ldots, \big(\bv_p(x), \bv_p(y) \big) \Big) \\
&\quad\; + \big(b_1(x)+b_1(y) \big) \: D^{p-1, \E} \L \circ \big(\phi_x^{-1} \times \phi_y^{-1}\big)\big|_{(\phi_x(x), \phi_y(y))}\\
&\qquad\qquad \times
\Big( \big(\bv_2(x), \bv_2(y) \big), \ldots, \big(\bv_p(x), \bv_p(y) \big) \Big) \\
&\quad\; + \big(b_2(x)+b_2(y) \big)\: D^{p-1, \E} \L \circ \big(\phi_x^{-1} \times \phi_y^{-1}\big)\big|_{(\phi_x(x), \phi_y(y))} \\
&\qquad\qquad \times
\Big( \big(\bv_1(x), \bv_1(y) \big), \big(\bv_3(x), \bv_3(y) \big), \ldots, \big(\bv_p(x), \bv_p(y) \big) \Big) \\
&\quad\; + \cdots + \big(b_1(x)+b_1(y) \big) \cdots \big(b_p(x)+b_p(y) \big)\: \L(x,y)\:.
\end{align*}
\item[\rm{(ii)}] The functions
\begin{align}
&\big( \nabla_{1, \fv_1} + \nabla_{2, \fv_1} \big) \cdots \big( \nabla_{1, \fv_p} + \nabla_{2, \fv_p} \big) \L(x,y) \notag \\
&:= \nabla^{p, \E} \L \circ \big(\phi_x^{-1} \times \phi_y^{-1}\big)\big|_{(\phi_x(x), \phi_y(y))}
\Big( \big(\fv_1(x), \fv_1(y) \big), \ldots, \big(\fv_p(x), \fv_p(y) \big) \Big) \label{derex}
\end{align}
are $\rho$-integrable in the variable~$y$, giving rise to locally bounded functions in~$x$. More precisely,
these functions are in the space
\[ L^\infty_\text{\rm{loc}}\Big( M, L^1\big(M, d\rho(y) \big); d\rho(x) \Big) \:. \]
\item[\rm{(iii)}] Integrating the expression~\eqref{derex} in~$y$ over~$M$
with respect to the measure~$\rho$,
the resulting function~$g$ (defined for all~$x$ in an open neighborhood of~$M$)
is continuously differentiable in the direction of every jet~$\fu \in \Jtest_\rho$, i.e.\
\[ \Gtest_x \subseteq \E(g, x) \qquad \text{for all~$x \in M$}\:. \]
\end{itemize}
\end{Def}

\subsection{Derivatives of~$\L$ and~$\ell$ along Smooth Curves}
In this section we use the chain rule in Proposition~\ref{prpchain}
in order to differentiate the Lagrangian~$\L$ and the function~$\ell$ along smooth curves.

\begin{Thm} \label{thmLchain}
Let~$\gamma_1$ and~$\gamma_2$ be two smooth curves in~$\F^\reg$,
\[ \gamma_1, \gamma_2 \in C^\infty((-\delta, \delta), \F^\reg) \:. \]
Setting~$x=\gamma_1(0)$ and~$y=\gamma_2(0)$, we assume that the tangent vectors
up to the order~$p=2n-1$ denoted by
\begin{align*}
\bv_1^{(1)} &:= (\phi_x \circ \gamma_a)'(0) \:, \ldots, \: \bv_1^{(p)} := (\phi_x \circ \gamma_a)^{(p)}(0) \\
\bv_2^{(1)} &:= (\phi_y \circ \gamma_a)'(0) \:, \ldots, \: \bv_2^{(p)} := (\phi_y \circ \gamma_a)^{(p)}(0)
\end{align*}
are in the expedient differentiable subspace of the Lagrangian, i.e.
\[ \big( \bv^{(1)}_1, \bv^{(1)}_2 \big), \ldots, \big( \bv^{(p)}_1, \bv^{(p)}_2 \big) \in \Gamma_\rho(x,y) \:. \]
Then the function~$\L(\gamma_1(\tau), \gamma_2(\tau))$ is $\tau$-differentiable at~$\tau=0$ and
the chain rule holds, i.e.
\begin{align*}
\frac{d}{d\tau} \L\big(\gamma_1(\tau), \gamma_2(\tau) \big)\big|_{\tau=0}
&= D^{\E} \big(\L \circ \big(\phi_x^{-1} \times \phi_y^{-1}\big)\big)\big|_{(\phi_x(x), \phi_y(y))}
\big(\bv_1, \bv_2 \big) \\
&\equiv \big( D_{1, \gamma_1'(0)} + D_{2, \gamma_2'(0)} \big) \L(x,y) \:.
\end{align*}
\end{Thm}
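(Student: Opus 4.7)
The plan is to deduce the result from the chain rule of Proposition~\ref{prpchain} by viewing the two curves as components of a single smooth curve into a product Banach space. Set $E := \bigl( \Symm(S_x) \oplus \Lin(J_x,I_x) \bigr) \times \bigl( \Symm(S_y) \oplus \Lin(J_y,I_y) \bigr)$ with the norm used in~\eqref{Lchart}, define $f := \L \circ \bigl(\phi_x^{-1} \times \phi_y^{-1}\bigr)$ on $\Omega_x \times \Omega_y \subseteq E$, and set
\[ \tilde{\gamma}(\tau) := \bigl( \phi_x(\gamma_1(\tau)),\, \phi_y(\gamma_2(\tau)) \bigr) \:. \]
Fr\'echet-smoothness of the chart maps (Theorem~\ref{thmfrechet}) makes $\tilde{\gamma}$ a smooth $E$-valued curve in a neighborhood of $\tau = 0$, and its Taylor coefficients there are precisely the pairs $\bigl(\bv^{(k)}_1, \bv^{(k)}_2\bigr)$ listed in the hypothesis.

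The key step is to verify that $f$ is locally H\"older continuous at $\bigl(\phi_x(x), \phi_y(y)\bigr)$ with exponent $\alpha = 1/(2n-1)$. The plan is to split the difference $|f(\psi_1, \psi_2) - f(\tilde{\psi}_1, \tilde{\psi}_2)|$ via an intermediate point into a contribution in which only the first argument of $\L$ varies and one in which only the second does, and to estimate the two contributions separately by the local H\"older continuity of $\L$ in the respective slot (Theorem~\ref{thmhoelder} and its symmetric counterpart~\eqref{GlobalHoelderxRefined}). Since $\phi_x^{-1} = R_x^\symm$ is given explicitly by $\psi \mapsto -\psi^* \psi$ it is Lipschitz on bounded sets, and likewise for $\phi_y^{-1}$; hence, after shrinking the neighborhood of the base point, the differences of $\L$-arguments are controlled by $\|\psi_i - \tilde{\psi}_i\|^{1/(2n-1)}$ with a uniform multiplicative constant.

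With $\alpha = 1/(2n-1)$, the integer $p$ appearing in Proposition~\ref{prpchain} is exactly $\lceil 1/\alpha \rceil = 2n-1$, so the hypothesis $\bigl(\bv^{(k)}_1,\bv^{(k)}_2\bigr) \in \Gamma_\rho(x,y) = \E\bigl(f, (\phi_x(x), \phi_y(y))\bigr)$ for $k = 1,\ldots,p$ matches the requirement of Proposition~\ref{prpchain} applied to $f$ and $\tilde{\gamma}$. That proposition then yields differentiability of $(f \circ \tilde{\gamma})(\tau) = \L(\gamma_1(\tau), \gamma_2(\tau))$ at $\tau = 0$ with derivative $D^{\E} f|_{(\phi_x(x),\phi_y(y))}\bigl(\bv^{(1)}_1, \bv^{(1)}_2\bigr)$, which by the conventions for jet and partial derivatives introduced around~\eqref{Djet} is precisely the asserted expression $\bigl(D_{1,\gamma_1'(0)} + D_{2,\gamma_2'(0)}\bigr) \L(x,y)$.

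The main obstacle is the joint H\"older step: the estimates in Theorem~\ref{thmhoelder} and~\eqref{GlobalHoelderxRefined} are local with constants depending on the base points, so the nontrivial work is to combine them into a single H\"older inequality for $f$ with a \emph{uniform} constant on a common neighborhood in $E$. Once this is in place, the remainder of the argument is a direct invocation of Proposition~\ref{prpchain} together with the definition of $D^{\E}$.
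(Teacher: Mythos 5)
Your proposal is correct and follows essentially the same route as the paper: both pass to the single function $f=\L\circ(\phi_x^{-1}\times\phi_y^{-1})$ on the product Banach space $E$, establish its local H\"older continuity with exponent $1/(2n-1)$ by splitting the difference through an intermediate point and using the one-variable H\"older estimates together with the Fr\'echet-smoothness (hence local Lipschitz property) of $\psi\mapsto-\psi^*\psi$, and then invoke Proposition~\ref{prpchain}. The uniformity of the H\"older constant that you flag as the main obstacle is handled in the paper by the continuous dependence of the local constant on the base points (noted at the end of the proof of Theorem~\ref{thmhoelder}), so no further work is needed there.
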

\Proof We again consider the Lagrangian in the charts~$\phi_x$ and~$\phi_y$, \eqref{Lchart}.
In order to show that this function is locally H\"older continuous on~$E$, we begin with the estimate
\begin{align*}
&\big| \L \big(\phi^{-1}_x(\tilde{\psi}_x), \phi^{-1}_y(\tilde{\psi}_y) \big) - \L(x,y) \big| \\
&\leq \big| \L \big( \phi^{-1}_x(\tilde{\psi}_x), \phi^{-1}_y(\tilde{\psi}_y) \big) - \L \big(\phi^{-1}_x(\tilde{\psi}_x), y \big) \big\|
+ \big\| \L \big(\phi^{-1}_x(\tilde{\psi}_x), y \big) - \L(x,y) \big| \\
&\leq c \Big( \|\phi^{-1}_x(\tilde{\psi}_x) - x\|^\alpha_{\Lin(\H)} +  \|\phi^{-1}_y(\tilde{\psi}_y) - y\|^\alpha_{\Lin(\H)} \Big) \:.
\end{align*}
Noting that the function
\[ \phi_x^{-1}(\tilde{\psi}_x) = -\tilde{\psi}_x^* \tilde{\psi}_x \]
is bilinear and therefore Fr{\'e}chet-smooth, it follows that
\begin{align*}
&\big| \L \big(\phi^{-1}_x(\tilde{\psi}_x), \phi^{-1}_y(\tilde{\psi}_y) \big) - \L(x,y) \big| \\
&\leq c C \Big( \|\tilde{\psi}_x - \psi_x \|^\alpha_{\Lin(\H, I)} +  \|\tilde{\psi}_y - \psi_y\|^\alpha_{\Lin(\H, I)} \Big)
\leq 2 c C \:\big\| (\tilde{\psi}_x, - \tilde{\psi}_y) - (\psi_x, \psi_y) \big\|^\alpha_E \:,
\end{align*}
where~$\psi_x := \phi^{-1}_x(x)$ and~$\psi_y := \phi^{-1}_y(y)$.
This proves local H\"older continuity on~$E$.
Applying
Proposition~\ref{prpchain} gives the result.
\QED
We remark that, using Proposition~\ref{prpchain2}, the above method could be generalized in a straightforward manner to higher derivatives.

\begin{Def}
	We call $\ell$ {\bf{H\"older continuous}} with H\"older exponent~$\alpha$ {\bf{along a smooth curve}} $\gamma: I \rightarrow \F$ (with~$I$ an open interval) if for any $t_0 \in I$ with $x_0 = \gamma(t_0)$ there exists a subspace $E_0 \subseteq  \Symm S_{x_0} \oplus \L(J_{x_0},I_{x_0})$ and $\delta >0$ such that the mapping 
	\begin{flalign*}
		\gamma_{x_0}: (t_0-\delta, t_0+\delta) &\rightarrow E_0\:, \quad t \mapsto \phi_{x_0}\circ \gamma (t) - (\1,0)\;,
	\end{flalign*}
	is well defined and locally H\"older continuous with H\"older exponent~$\alpha$.
\end{Def}

\begin{Thm} \label{thmchaingen}
	Let $\gamma: I \rightarrow \F$ be a smooth curve and $\ell$ H\"older continuous along $\gamma$ with H\"older exponent~$\alpha$. For $t_0 \in I$ with $x_0 = \gamma(t_0)$ we set
	\begin{flalign*}
	\ell_{x_0}: E_0 &\rightarrow \R \:,\quad \ell_{x_0}(x) = \ell \circ \phi_{x_0}^{-1} \big(x+(\1,0) \big) \:.
	\end{flalign*}
	If for any $x_0\in I$ the derivatives of $\gamma_{x_0}$ up to the order $p:=\lceil q/\alpha \rceil $ lie in the expedient differentiable subspace at $x_0$, i.e.\
	\begin{flalign*}
		(\gamma_{x_0})^{(n)}(t_0) \in \E^q\Big(\ell_{x_0}, 0\Big)\quad \mathrm{for\;all\;}n\in \{1, \dots, p\}\;,
	\end{flalign*}
	then the function $\ell \circ \gamma = \ell_{x_0} \circ \gamma_{x_0}$ is $q$-times differentiable at $t_0$. Moreover, the usual product and chain rules hold for~$\ell_{x_0} \circ \gamma_{x_0}$.
\end{Thm}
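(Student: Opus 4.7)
The plan is to imitate the argument of Proposition~\ref{prpchain2}, with the refinement that H\"older continuity of $\ell_{x_0}$ is only assumed on (a neighborhood of the origin in) the subspace $E_0$ containing the curve $\gamma_{x_0}$, rather than on a full neighborhood in the ambient Banach space. Since the claim is local in $t$, one may work in the chart $\phi_{x_0}$, where $\ell\circ\gamma = \ell_{x_0}\circ\gamma_{x_0}$ and $\gamma_{x_0}(t_0)=0$; because $\gamma$ and $\phi_{x_0}$ are smooth, $\gamma_{x_0}$ is in particular smooth as a curve into the ambient Banach space, so that its Taylor expansion at $t_0$ is available.

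The first step is to form the Taylor polynomial of $\gamma_{x_0}$ of order $p=\lceil q/\alpha\rceil$,
\[ P(t) := \sum_{n=1}^p \frac{(\gamma_{x_0})^{(n)}(t_0)}{n!}\,(t-t_0)^n \:. \]
By hypothesis, each coefficient lies in $\E^q(\ell_{x_0},0)$, so $P$ takes values in the finite-dimensional subspace $H\subseteq \E^q(\ell_{x_0},0)$ spanned by the vectors $(\gamma_{x_0})^{(n)}(t_0)$. Definition~\ref{defV} then guarantees that $\ell_{x_0}|_{H}$ is $q$-times continuously differentiable at $0$, so that $\ell_{x_0}\circ P$ is $q$-times differentiable at $t_0$ with derivatives expressed by the Fa\`a~di~Bruno formula in terms of the iterated expedient derivatives $D^{k,\E}\ell_{x_0}|_0$ applied to the $(\gamma_{x_0})^{(n)}(t_0)$; this is the content of the advertised chain and product rules.

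The key remaining step is to control the remainder $R(t):=\ell_{x_0}(\gamma_{x_0}(t))-\ell_{x_0}(P(t))$. Smoothness of $\gamma_{x_0}$ yields, by the standard Taylor estimate in Banach spaces, $\gamma_{x_0}(t)-P(t) = o\bigl((t-t_0)^p\bigr)$. The hypothesis that $\ell$ is H\"older along $\gamma$ provides a neighborhood of $0\in E_0$ on which $\ell_{x_0}$ is $\alpha$-H\"older; since both $\gamma_{x_0}(t)$ and $P(t)$ lie in $E_0$ for $t$ near $t_0$ (the former by the definition, the latter because its coefficients are $(\gamma_{x_0})^{(n)}(t_0)\in E_0$), I may apply this estimate to obtain
\[ |R(t)| \leq C\,\|\gamma_{x_0}(t)-P(t)\|^{\alpha} = o\bigl((t-t_0)^{\alpha p}\bigr) = o\bigl((t-t_0)^q\bigr), \]
where the last step uses $\alpha p\geq q$. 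Writing $\ell_{x_0}\circ\gamma_{x_0} = \ell_{x_0}\circ P + R$ then gives $q$-fold differentiability of $\ell_{x_0}\circ\gamma_{x_0}$ at $t_0$, with all derivatives through order $q$ inherited from $\ell_{x_0}\circ P$.

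The main obstacle is the last step: the points $\gamma_{x_0}(t)$ and $P(t)$ generally lie off the original curve $\gamma$, so the H\"older inequality cannot be applied ``along $\gamma$'' alone -- one really needs H\"older continuity on a genuine linear subspace, which is precisely the role of $E_0$. Verifying that both $\gamma_{x_0}(t)$ and $P(t)$ remain in $E_0$ for $t$ near $t_0$ is automatic once one observes that $E_0$ (containing the smooth curve $\gamma_{x_0}$) contains the derivatives $(\gamma_{x_0})^{(n)}(t_0)$, and hence contains the polynomial $P$; but making this careful is the only non-cosmetic difference from the proof of Proposition~\ref{prpchain2}.
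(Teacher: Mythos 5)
Your proof is correct and follows essentially the same route as the paper, whose entire proof is to invoke Proposition~\ref{prpchain2} for $\ell_{x_0}$ and $\gamma_{x_0}$. You additionally spell out the one point the paper leaves implicit --- that the H\"older estimate is only available on the subspace $E_0$, so one must check that both $\gamma_{x_0}(t)$ and its Taylor polynomial remain in $E_0$ --- which is a useful clarification rather than a different approach.
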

\Proof
	Applying Proposition~\ref{prpchain2} to $\ell_{x_0}$ and $\gamma_{x_0}$ yields the claim as the assumptions for this theorem are clearly fulfilled.
\QED

We now give a sufficient condition which ensures that~$\ell$ is H\"older continuous along~$\gamma$.
This condition needs to be verified in the applications; see for example~\cite{lagrange-hoelder}.
\begin{Thm} \label{thmlchain}
Let $\gamma$ be a smooth curve in $\F$ with
\begin{flalign*}
\int_M \big\| P(\gamma(\tau), y) \big\|^4 \: \big\|Y^{-1} \big\|^2 \:d\rho(y) < C \qquad
\text{for all~$\tau \in (-\delta,\delta)$} \:,
\end{flalign*}
where~$P(x,y)$ is again the kernel of the fermionic projector~\eqref{Pxydef} and~$Y$ is 
(similar to~\eqref{Xdef}) the invertible operator
\[ 
Y := y|_{S_y} \::\: S_y \rightarrow S_y \:. \]
Then the integrated Lagrangian~$\ell$ defined by~\eqref{elldef}
is H\"older continuous along $\gamma$ with H\"older exponent~$\frac{1}{2n-1}$.
\end{Thm}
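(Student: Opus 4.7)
Proof plan. Fix $t_0 \in (-\delta, \delta)$, set $x_0 := \gamma(t_0)$, and work throughout in the symmetric wave chart $(\phi_{x_0}, \Omega_{x_0})$ of Theorem~\ref{thmchart}. By smoothness of $\gamma$, the curve representation
\[ \gamma_{x_0}(\tau) := \phi_{x_0}(\gamma(\tau)) - (\1, 0) \]
is a well-defined smooth curve near $t_0$ in the Banach space $V_{x_0} := \Symm(S_{x_0}) \oplus \Lin(J_{x_0}, I_{x_0})$, and I take $E_0 \subseteq V_{x_0}$ to be a finite-dimensional subspace containing the germ of $\gamma_{x_0}$ at $t_0$. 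The task reduces to showing that
\[ \ell_{x_0}(\psi) := \ell\bigl(\phi_{x_0}^{-1}(\psi + (\1, 0))\bigr) \]
is locally H\"older continuous on $E_0$ near $\psi = 0$ with exponent $\alpha := 1/(2n-1)$.

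Writing $\ell$ via~\eqref{elldef} and applying the triangle inequality under the integral yields, with $\tilde x := \phi_{x_0}^{-1}(\psi + (\1, 0))$,
\[ |\ell(\tilde x) - \ell(x_0)| \leq \int_M |\L(\tilde x, y) - \L(x_0, y)| \, d\rho(y). \]
The refined global H\"older estimate~\eqref{GlobalHoelderxRefined2} controls the integrand, for $\tilde x$ in a small neighborhood of $x_0$, by $c(n, x_0)\, \|\pi_J y \pi_J\|^2\, \|\tilde x - x_0\|^\alpha$ with $J := \mathrm{span}(S_{x_0}, S_{\tilde x})$. Since $\phi_{x_0}^{-1} = R_{x_0}^\symm$ is a quadratic polynomial in $\psi$, one has $\|\tilde x - x_0\|_{\Lin(\H)} \leq C \|\psi\|$ for small $\psi$, so any H\"older bound in $\|\tilde x - x_0\|$ transfers to the chart variable.

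The crux is the uniform estimate
\[ \|\pi_J y \pi_J\|^2 \leq C\, \|P(x_0, y)\|^4\, \|Y^{-1}\|^2 \qquad (\star) \]
valid for $\tilde x$ in a $y$-independent chart neighborhood of $x_0$. The starting point is the identity $\pi_{x_0}|_{S_y} = P(x_0, y)\, Y^{-1}$ as a map $S_y \to S_{x_0}$, obtained from $P(x_0, y) = \pi_{x_0} Y$ and the invertibility of $Y$ on $S_y$; this yields $\|\pi_{x_0}|_{S_y}\| \leq \|P(x_0, y)\|\, \|Y^{-1}\|$. Factoring $\pi_J y \pi_J = A^* Y A$ with $A := \pi_y|_J$ (so that $\|A\| = \|\pi_J|_{S_y}\|$) reduces the task to bounding $\|\pi_J|_{S_y}\|$. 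Decomposing $\pi_J = \pi_{x_0} + \pi_K$ orthogonally via $J = S_{x_0} \oplus K$ with $K := J \cap S_{x_0}^\perp$, invoking the analogous estimate at $\tilde x$, and using Fr{\'e}chet continuity of $x \mapsto P(x, y)$ from Theorem~\ref{thmfrechet} to bound $\|P(\tilde x, y)\| \leq 2 \|P(x_0, y)\|$ uniformly in $y$ over a small chart-neighborhood then yields $(\star)$.

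Combining $(\star)$ with the H\"older bound and integrating against the hypothesis gives $|\ell(\tilde x) - \ell(x_0)| \leq c(n, x_0)\, C\, \|\tilde x - x_0\|^\alpha$, from which the H\"older continuity of $\ell_{x_0}$ on $E_0$ follows via $\|\tilde x - x_0\|_{\Lin(\H)} \leq C' \|\psi\|$. The principal obstacle is the estimate $(\star)$: the naive factorization $A^* Y A$ produces a spurious $\|Y\|$ factor, and one must exploit the identity $\pi_{x_0}|_{S_y} = P(x_0, y)\, Y^{-1}$ (together with its analogue at $\tilde x$) carefully enough to absorb this factor, so that the final bound involves only $\|P\|^4$ and $\|Y^{-1}\|^2$ with the correct powers, rather than spurious factors of $\|Y\|$ or the condition number $\kappa(Y)$.
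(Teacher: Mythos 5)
Your overall strategy coincides with the paper's: integrate the refined estimate~\eqref{GlobalHoelderxRefined2} over $M$, reduce everything to a bound on $\|\pi_J\,y\,\pi_J\|^2$ by the integrand of the hypothesis, and transfer the H\"older bound to the chart variable via the local Lipschitz continuity of $\phi_{x_0}^{-1}$. However, the central estimate $(\star)$ is both unproven and, as stated, false, and this is where the proposal breaks down. First, the bound $\|P(\tilde x,y)\|\le 2\|P(x_0,y)\|$ uniformly in $y\in M$ for $\tilde x$ in a fixed ($y$-independent) neighborhood of $x_0$ cannot be achieved: $P(x_0,y)=\pi_{x_0}y|_{S_y}$ vanishes whenever $S_y\perp S_{x_0}$, while $P(\tilde x,y)$ need not vanish for $\tilde x$ arbitrarily close to $x_0$; more generally the ratio $\|P(\tilde x,y)\|/\|P(x_0,y)\|$ is not uniformly bounded over $y\in M$. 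Consequently $(\star)$, which bounds $\|\pi_J\,y\,\pi_J\|^2$ by a multiple of $\|P(x_0,y)\|^4\,\|Y^{-1}\|^2$ alone, is false. The paper instead keeps \emph{both} terms, $\|\pi_J\,y\,\pi_J\|^2\le 4\,C(n)\,\|Y^{-1}\|^2\big(\|P(x,y)\|^4+\|P(\tilde x,y)\|^4\big)$, and then invokes the hypothesis at both parameter values --- this is precisely why the hypothesis is formulated for all $\tau\in(-\delta,\delta)$ rather than only at $\tau=t_0$.

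Second, even granting both $P$-terms, the route you sketch does not produce the correct powers. Factoring $\pi_J\,y\,\pi_J=A^\dagger Y A$ with $A=\pi_y|_J$ and then using $\|\pi_{x_0}|_{S_y}\|\le\|P(x_0,y)\|\,\|Y^{-1}\|$ (and its analogue at $\tilde x$) yields at best $\|\pi_J\,y\,\pi_J\|^2\lesssim\big(\|P(x_0,y)\|+\|P(\tilde x,y)\|\big)^4\,\|Y^{-1}\|^4\,\|Y\|^2$, and the spurious factor $\|Y\|^2\,\|Y^{-1}\|^2\ge 1$ is not controlled by the hypothesis. You explicitly flag this as ``the principal obstacle'' but offer no mechanism to absorb it. The paper's resolution is a one-line identity: since $y\,Y^{-1}y=y$ on $\H$, one has $\pi_J\,y\,\pi_J=(\pi_J\,y)\,Y^{-1}(y\,\pi_J)$ and hence $\|\pi_J\,y\,\pi_J\|^2\le\|\pi_J\,y\|^4\,\|Y^{-1}\|^2$ directly; the factor $\|\pi_J\,y\|$ is then estimated by $\|\pi_x\,y\|+\|\pi_{\tilde x}\,y\|=\|P(x,y)\|+\|P(\tilde x,y)\|$, with no extra powers of $Y$ or $Y^{-1}$ ever appearing. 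Without this (or an equivalent) identity, the key estimate, and with it the proof, does not close.
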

\Proof
	The idea of the proof is to integrate the estimate~\eqref{GlobalHoelderxRefined2} over $M$.
	To this end, it is crucial to estimate the factor~$\| \pi_J y \pi_J \|$.
	We let~$(\tilde{\phi}_i)_{i\in 1, \dots m}$ be an orthonormal basis of $J$ and denote the orthogonal projection on $\mathrm{span}(\tilde{\phi}_i)$ by $\pi_i$. Since on the finite-dimensional  vector space $L(J)$ all norms are equivalent, we can work with the Hilbert-Schmidt norm of $\pi_J y\pi_J$, i.e.\ for a suitable constant~$C=C(n)$,
\[ \| \pi_J \,y\, \pi_J \|^2 = \| \pi_J \,y\,Y^{-1}\,y \pi_J \|^2 \leq \| \pi_J \,y\|^2 \,\|Y^{-1}\|^2 \, \|y \,\pi_J \|^2 
=  \| \pi_J \,y\|^4 \,\|Y^{-1}\|^2 \:, \]
where in the last step we used that the norm of an operator is the same as the norm of its adjoint.
Combining this inequality with the estimate
\begin{flalign*}
\big\| \pi_J \,y \,\psi \big\|^2 &\leq \big( \big\| \pi_x \,y \,\psi \big\| + \big\| \pi_{\tilde{x}} \,y \,\psi \big\| \big)^2 \leq
2 \,\big\| \pi_x \,y \,\psi \big\|^2 + 2\,\big\| \pi_{\tilde{x}} \,y\, \psi \big\|^2\;,
\end{flalign*}
we obtain
	\begin{flalign*}
		\| \pi_J \,y\, \pi_J \|^2 &\leq 2\,C(n) \:\Big( \big\| \pi_x \,y\, \psi \big\|^2
		+ \big\| \pi_{\tilde{x}} \,y\, \psi \big\|^2 \Big)^2 \:\big\|Y^{-1} \big\|^2 \\
		&\leq 4\, C(n) \:\big\|Y^{-1} \big\|^2 \:\Big( \big\| \pi_x \,y\, \psi \big\|^4
		+ \big\| \pi_{\tilde{x}} \,y\, \psi \big\|^4 \Big) \\
		&= 4\, C(n) \:\big\|Y^{-1} \big\|^2 \: \Big( \big\| P(x,y)\, \psi \big\|^4 + \big\|P(\tilde{x},y) \,\psi \big\|^4 \Big) \:.
	\end{flalign*}
	Using this estimate when integrating~\eqref{GlobalHoelderxRefined2} over $M$ and noting that $\phi_{x}^{-1}$ is locally Lipschitz (since it is Fr{\'e}chet-smooth) yields the claim.
\QED

\appendix
\section{Properties of the Fr{\'e}chet Derivative} \label{appfrechet}
This appendix lists a set of properties and computation rules for Fr{\'e}chet derivatives which are needed for the direct computations in Appendix~\ref{appsymm}. It turns out that most derivation rules known from the finite dimensional case generalize to Fr{\'e}chet derivatives in a straightforward way.

	\begin{Lemma} {\bf{(Properties of the Fr{\'e}chet derivative)}}
	\label{Properties F-derivative}Let $V,W,Z$ be real normed vector. Then the following Fr{\'e}chet derivative rules hold:\bitem
		\item[{\rm{(i)}}] Let $U \subseteq V$ open and $f: V \rightarrow W$ Fr{\'e}chet-differentiable at $x_0\in U$, then $f$ is continuous at $x_0$ and $Df|_{x_0}$ is well defined.
		\item[{\rm{(ii)}}] Let $f\in \Lin(V,W)$ be linear and bounded, then it is Fr{\'e}chet-smooth at any $x_0 \in V$ and $Df|_{x_0} = f$.
		\item[{\rm{(iii)}}] A continuous bilinear map $B: V \times W \rightarrow Z$ is Fr{\'e}chet-smooth at any $(v,w)\in V \times W$ and
		\begin{flalign*}
		DB|_{(v,w)}(h_v,h_w) = B(v,h_w) + B(h_v,w)\;,\;\;\; \forall \, (h_v,h_w)\in V \times W\;.
		\end{flalign*}
		\item[{\rm{(iv)}}] Chain rule: Let $U_V \subseteq V$ and $U_W \subseteq W$ open, $f: U_V \rightarrow W$, $g: U_W \rightarrow Z$ such that $f(U_V)\subseteq U_W$. If $f$ is Fr{\'e}chet-differentiable at~$x_0\in U_V$ and $g$ in $f(x_0)\in U_W$, then also $g\circ f$ is Fr{\'e}chet-differentiable in $x_0$ and
		\begin{flalign*}
		D(g\circ f)|_{x_0} = Dg|_{f(x_0)} \circ Df|_{x_0}\;.
		\end{flalign*}
		\item[{\rm{(v)}}] Let $W_1,...,W_n$ be real normed vector spaces, $W:=W_1 \times W_2 \times...\times W_n$ the product space, $U\subseteq V$ open and $f=(f_1,...,f_n): U \rightarrow W$ with $f_i: V \rightarrow f_i$ for $i=1,\dots,n$. Then~$f$ is Fr{\'e}chet-differentiable at $x_0\in U$ if and only if each $f_i$ is Fr{\'e}chet-differentiable at $x_0$. Moreover, in this case, we have $Df|_{x_0} = (Df_1|_{x_0},...,Df_n|_{x_0})$.
		\item[{\rm{(vi)}}] Let $U_V\subseteq V$ and $U_W \subseteq W$ be open and $f: U_V \rightarrow U_W$ a homeomorphism with inverse $g: U_W \rightarrow U_V$. If $f$ is Fr{\'e}chet-differentiable at~$x_0\in U_V$ and $g$ is Fr{\'e}chet-differentiable in $y_0=f(x_0)\in U_W$, then $Df|_{x_0}$ is an isomorphism with inverse $Dg|_{y_0}$.
\eitem
\end{Lemma}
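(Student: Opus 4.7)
The plan is to verify the six items in turn, since each reduces to a short direct computation from Definition~\ref{deffrechet}; the lemma is essentially standard Banach-space calculus and I would cite \cite{coleman} or \cite{dieudonne1} for most of it, but here I outline my own strategy.

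For (i), continuity at $x_0$ follows by bounding $\|f(x)-f(x_0)\|_F \le \|A\|\,\|x-x_0\|_E + \|r(x)\|_F$ and noting that both terms go to zero as $x \to x_0$. Uniqueness of the derivative is obtained by assuming two candidate operators $A_1, A_2$; their difference $A_1-A_2$ then satisfies $\|(A_1-A_2)h\|_F = o(\|h\|_E)$, which forces $A_1=A_2$ on every ray and hence everywhere by homogeneity. For (ii), given $f\in \Lin(V,W)$, one simply writes $f(x) = f(x_0) + f(x-x_0) + 0$, so $A=f$ works with zero error term, and iterating, $D^2f \equiv 0$, so $f$ is Fr{\'e}chet-smooth.

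For (iii), given a bounded bilinear $B$, expand $B(v+h_v,w+h_w)-B(v,w) = B(h_v,w)+B(v,h_w)+B(h_v,h_w)$; the last term is bounded by $\|B\|\,\|h_v\|\,\|h_w\| = o(\|(h_v,h_w)\|)$, identifying the derivative. Smoothness follows because the derivative is itself linear in $(v,w)$ so (ii) applies to the next order. For (v), use that $\|f(x)-f(x_0)-(A_1,\ldots,A_n)(x-x_0)\|_W$ is equivalent (up to the equivalence of product norms) to $\max_i \|f_i(x)-f_i(x_0)-A_i(x-x_0)\|_{W_i}$; the ``$o$''-condition then decouples componentwise.

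For the chain rule (iv), the main step is the estimate
\[
g(f(x)) - g(f(x_0)) - Dg|_{f(x_0)}\,Df|_{x_0}(x-x_0)
= Dg|_{f(x_0)}\,r_f(x) + r_g\bigl(f(x)\bigr),
\]
where $r_f$ and $r_g$ are the error terms of $f$ and $g$ respectively. Boundedness of $Dg|_{f(x_0)}$ handles the first summand; for the second, one uses $\|f(x)-f(x_0)\|_W \le (\|Df|_{x_0}\|+1)\,\|x-x_0\|_E$ for $x$ close to $x_0$ to bound $\|r_g(f(x))\|_Z = o(\|f(x)-f(x_0)\|_W) = o(\|x-x_0\|_E)$. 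This is the only step that requires genuine care — making sure the composition of ``little-$o$'' estimates interacts correctly with the linear bound on $Df|_{x_0}$. For (vi), one applies (iv) to $g\circ f = \id_{U_V}$ and $f\circ g = \id_{U_W}$, using (ii) to identify the derivatives of the identities; this gives $Dg|_{y_0}\circ Df|_{x_0} = \id_V$ and $Df|_{x_0}\circ Dg|_{y_0} = \id_W$, proving the claim.

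The only genuinely nontrivial ingredient is the chain rule error estimate in (iv); the other five parts are immediate algebraic unfoldings of the definition. I would therefore write the proof in the order (i), (ii), (iii), (v), (iv), (vi), so that the chain rule can invoke the continuity result from (i) and the identity-derivative result from (ii).
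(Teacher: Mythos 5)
Your proposal is correct and follows essentially the same route as the paper: the paper proves (ii), (iii) and (vi) by exactly the arguments you give (zero error term for linear maps, the bilinear expansion with the $B(h_v,h_w)$ remainder, and deriving (vi) from the chain rule applied to $g\circ f=\id$ and $f\circ g = \id$), and simply cites \cite{coleman} and \cite{dieudonne1} for (i), (iv) and (v), which are the standard textbook arguments you sketch. Your error decomposition for the chain rule and the reordering so that (iv) can use (i) and (ii) are both sound.
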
 
\begin{proof}
\begin{itemize}[leftmargin=2.5em]	\item[\rm{(i)}] See \cite[Prop. 2.2, Chapter 2.2]{coleman}.
\item[\rm{(ii)}] $f$ is clearly Fr{\'e}chet-differentiable with $Df|_{x_0}=f$ for any $x_0 \in U$ as  $\|f(x+h) -f(x) -fh\|_W=0$ for all $x,h \in V$ (see also \cite[pp.~149-150]{dieudonne1} and note that the completeness of the vector spaces is not needed for this result). Moreover, as $Df: U \rightarrow L(V,W)$ is constant it is clear that all higher Fr{\'e}chet-derivatives of $f$ vanish (and in particular $f$ is Fr{\'e}chet-smooth).
\item[\rm{(iii)}] $B$ is Fr{\'e}chet differentiable with the stated Fr{\'e}chet derivative as
	\begin{flalign*}
	&\| B(v+h_v,w+h_w)-B(v,w)-B(v,h_w)-B(h_v,w)\|_Z=\| B(h_v,h_w)\|_Z \\
	&\leq C \|h_v\| \cdot\|h_w\|\leq C \big(\mathrm{max}(\|h_v\|,\|h_w\|)\big)^2\;,
	\end{flalign*}
	for a fixed $C>0$ (as $B$ is continuous and bilinear), see also \cite[p. 149-150]{dieudonne1} (again the completeness in not needed). And since \begin{flalign*}
	DB: V \times W &\rightarrow L(V\times W, L(V\times W, Z))\\
	(v,w)&\mapsto \Big( (h_v,h_w) \mapsto B(v,h_w) + B(h_v,w) \Big)\;,
	\end{flalign*}
	is clearly bounded linear, B is Fr{\'e}chet-smooth due to part (ii).
\item[\rm{(iv)}] See \cite[Theorem 2.1, Chapter 2.3]{coleman}.
\item[\rm{(v)}]  See \cite[pp.~149--151]{dieudonne1} (again completeness is not needed).
\item[\rm{(vi)}]  This follows immediately from the chain rule and part (ii) since
	\begin{flalign*}
	\id_V \overset{(ii)}{=} D(\id_V)|_{x_0} = D(g \circ f)|_{x_0} \overset{(iv)}{=} Dg|_{y_0} \circ Df|_{x_0}\;,
	\end{flalign*} 
	and similarly $\id_W = Df|_{x_0} \circ Dg|_{y_0}$.
\end{itemize}

\vspace*{-1.3em}
\end{proof}

	\begin{Lemma}
	\label{F-derivative conecttinating} Let $V$, $W$ and~$Z$ be real normed spaces, $n\in \mathds{N}$ arbitrary and $U_V \subseteq V, U_W \subseteq W$ open subsets, $f: U_V \rightarrow W$  $n$-times Fr{\'e}chet-differentiable (Fr{\'e}chet-smooth) and $g: U_W \rightarrow Z$ $n$-times Fr{\'e}chet-differentiable (Fr{\'e}chet-smooth) such that $f(U_V)\subseteq U_W$. Then $g\circ f$ is also $n$-times Fr{\'e}chet-differentiable (respectively Fr{\'e}chet-smooth).
\end{Lemma}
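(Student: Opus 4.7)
The plan is to proceed by induction on $n \in \mathbb{N}$, with the base case $n=1$ being exactly the chain rule from Lemma~\ref{Properties F-derivative}~(iv). The statement for the Fr\'echet-smooth case then follows immediately by intersecting the conclusions for all $n \in \mathbb{N}$.

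For the inductive step, I would assume the statement holds for $n-1$ and prove it for $n$. By the base case, $g \circ f$ is once Fr\'echet-differentiable on $U_V$, with derivative
\[
D(g\circ f)|_x = Dg|_{f(x)} \circ Df|_x \;\in\; \Lin(V, Z)\:.
\]
The task is to show that this map $U_V \to \Lin(V, Z)$ is $(n-1)$-times Fr\'echet-differentiable. The key is to factor it as
\[
D(g \circ f) = \mathrm{comp} \circ \bigl( Dg \circ f,\; Df \bigr)\:,
\]
where $\mathrm{comp} \colon \Lin(W,Z) \times \Lin(V,W) \to \Lin(V,Z)$, $(A,B) \mapsto A \circ B$, is a bounded bilinear map and hence Fr\'echet-smooth by Lemma~\ref{Properties F-derivative}~(iii).

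Now I would apply the inductive hypothesis twice. First, since $g$ is $n$-times Fr\'echet-differentiable, its derivative $Dg \colon U_W \to \Lin(W,Z)$ is $(n-1)$-times Fr\'echet-differentiable; and $f$ is a fortiori $(n-1)$-times Fr\'echet-differentiable. The inductive hypothesis applied to these two maps yields that $Dg \circ f$ is $(n-1)$-times Fr\'echet-differentiable. Combined with the fact that $Df$ is $(n-1)$-times Fr\'echet-differentiable by assumption, Lemma~\ref{Properties F-derivative}~(v) implies that the product map $(Dg \circ f,\; Df)$ is $(n-1)$-times Fr\'echet-differentiable.

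Finally, I would apply the inductive hypothesis a second time to the composition of this product map with the Fr\'echet-smooth map $\mathrm{comp}$, concluding that $D(g \circ f)$ is $(n-1)$-times Fr\'echet-differentiable, i.e., $g \circ f$ is $n$-times Fr\'echet-differentiable. The main (very mild) obstacle is just bookkeeping: one must recognise the composition of operators as a bounded bilinear map on the appropriate product of operator spaces, so that smoothness of $\mathrm{comp}$ is available and the induction closes cleanly.
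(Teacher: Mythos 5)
Your proof is correct and follows essentially the same route as the paper: induction on $n$ with the chain rule as base case, writing $D(g\circ f)|_x = Dg|_{f(x)}\circ Df|_x$ and using that the composition operator is a bounded bilinear (hence Fr\'echet-smooth) map together with the inductive hypothesis applied to $Dg\circ f$. Your write-up is in fact slightly more careful than the paper's, which glosses over the explicit factorization through $\mathrm{comp}\circ(Dg\circ f, Df)$ and the appeal to Lemma~\ref{Properties F-derivative}~(v).
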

\begin{proof}
	We show the result by induction over $n$ following \cite[p. 183]{dieudonne1}:
	The case~$n=1$ follows from the chain rule. Now let $n \geq2$ be arbitrary and suppose that the claim holds for~$n-1$. Then
	the induction hypothesis yields that the mapping~$x \mapsto D(f\circ g)|_{x} = Df|_{g(x)}\circ Dg|_{x}$ is $(n-1)$-times Fr{\'e}chet-differentiable, because $Df$, $g$ and $Dg$ are at least $(n-1)$-times Fr{\'e}chet-differentiable and the operator $\circ$ is even Fr{\'e}chet-smooth (as it is bounded linear). Thus $f\circ g$ is $n$-times Fr{\'e}chet-differentiable.
	
	The smoothness result follows immediately from the result for $n$-times differentiability.
\end{proof}

		The following lemma gives a useful computation rule for higher Fr{\'e}chet derivatives (see also \cite[p. 179, 181]{dieudonne1}):
\begin{Lemma}
	Let $V,W$ be real normed spaces, $U\subseteq V$ open and $f: U \rightarrow W$ $n$-times differentiable. Then for any $x_0 \in U$ and $v_1,\cdots, v_n \in V$,
	\begin{flalign}
	\label{Trick Dio}
	D\big(D^{(n-1)}f|_{.}(v_1,\cdots,v_{n-1})\big)\big|_{x_0}v_n = D^{(n)}f|_{x_0}(v_1,\cdots,v_{n})\:.
	\end{flalign}
	In particular, the map
	$U\ni x \mapsto D^{(n-1)}f|_{x}(v_1,\cdots,v_{n-1})\in W$ is Fr{\'e}chet-differentiable.
\end{Lemma}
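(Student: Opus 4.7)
The plan is to realize the map $x \mapsto D^{(n-1)}f|_x(v_1,\ldots,v_{n-1})$ as a composition of the iterated-derivative map $D^{(n-1)}f$ with an evaluation map at the fixed tuple $(v_1,\ldots,v_{n-1})$, and to conclude via the chain rule together with the symmetry of higher Fr{\'e}chet derivatives.

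First, I would introduce the evaluation map
\[ \mathrm{ev}_{v_1,\ldots,v_{n-1}} \::\: \Lin\big(V,\Lin(V,\ldots,\Lin(V,W))\big) \rightarrow W \:,\qquad T \mapsto T(v_1)(v_2)\cdots(v_{n-1}) \]
(with $n-1$ nested copies of $\Lin$ in the domain, matching the codomain of $D^{(n-1)}f$). This map is $\mathds{R}$-linear, and bounded with operator norm at most $\|v_1\|\cdots\|v_{n-1}\|$, so by Lemma~\ref{Properties F-derivative}~(ii) it is Fr{\'e}chet-smooth and coincides with its own Fr{\'e}chet derivative at every point.

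Next, since $f$ is $n$-times Fr{\'e}chet differentiable on $U$, the definition of higher Fr{\'e}chet derivatives yields that the map $D^{(n-1)}f$ is Fr{\'e}chet differentiable at $x_0$, with derivative $D(D^{(n-1)}f)|_{x_0} = D^{(n)}f|_{x_0}$ viewed as an iterated linear operator. Applying the chain rule (Lemma~\ref{Properties F-derivative}~(iv)) to
\[ x \,\mapsto\, D^{(n-1)}f|_{x}(v_1,\ldots,v_{n-1}) \;=\; \big( \mathrm{ev}_{v_1,\ldots,v_{n-1}} \circ D^{(n-1)}f \big)(x) \]
then shows that this composition is Fr{\'e}chet differentiable at $x_0$, and that its derivative at $x_0$ applied to $v_n$ equals
\[ \mathrm{ev}_{v_1,\ldots,v_{n-1}}\big( D^{(n)}f|_{x_0}(v_n) \big) \;=\; D^{(n)}f|_{x_0}(v_n)(v_1)\cdots(v_{n-1}) \;=\; D^{(n)}f|_{x_0}(v_n,v_1,\ldots,v_{n-1}) \:, \]
where in the last step I use the recursive identification of iterated linear operators with multilinear maps adopted in the excerpt (compare the definition of $D^2f|_{x_0}(u,v)$). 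This already proves the ``in particular'' assertion.

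The final step is a relabelling: by the symmetry of higher Fr{\'e}chet derivatives (Lemma~\ref{SymmF-deriv}), the expression $D^{(n)}f|_{x_0}(v_n,v_1,\ldots,v_{n-1})$ coincides with $D^{(n)}f|_{x_0}(v_1,\ldots,v_n)$, which yields~\eqref{Trick Dio}. I do not anticipate a serious obstacle; the proof is essentially bookkeeping. The only point that requires care is the ordering convention for the iterated linear operator $D^{(n)}f|_{x_0}$: applying the chain rule naturally places $v_n$ in the first slot, and it is only the appeal to Lemma~\ref{SymmF-deriv} that reconciles this with the stated formula.
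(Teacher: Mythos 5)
Your proposal is correct and follows essentially the same route as the paper's proof: the paper likewise writes the map as the composition of $D^{(n-1)}f$ with a bounded linear insertion map $E_{v_1,\dots,v_{n-1}}$, applies the chain rule together with the fact that bounded linear maps equal their own Fr\'echet derivative, and then invokes the symmetry of $D^{(n)}f|_{x_0}$ (Lemma~\ref{SymmF-deriv}) to move $v_n$ from the first slot to the last. The ordering subtlety you flag is handled in the paper in exactly the same way.
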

\begin{proof}
	We follow the idea of the proof given in \cite[pp.~179, 181]{dieudonne1} and also make use of the symmetry result in Lemma~\ref{SymmF-deriv}.
	We first fix $v_1,\dots,v_n \in V$ and define a linear map by
	\begin{flalign*}
	E_{v_1,\dots,v_{n-1}}: \Lin(V,W)_{n-1} &\rightarrow W\\
	A &\mapsto (\dots((Av_1)v_2)\dots)v_{n-1}\;,
	\end{flalign*}
	which simply inserts all the $v_1,\dots,v_{n-1}$ in an $A\in \Lin(V,W)$. Note that $E_{v_1,\dots,v_{n-1}}$ is clearly bounded linear and thus Fr{\'e}chet-smooth. So if we use the representation of $D^{(n-1)}f$ as element of
	\[\underbrace{\Lin(V,\Lin(V,\dots \Lin(V}_{(n-1)\mathrm{ times}},W)\dots ))\;,\]
	the composition $E_{v_1,\dots,v_{n-1}}\circ D^{(n-1)}f$ is also Fr{\'e}chet-differentiable at any $x_0\in U$ with
	\begin{flalign*}
	\Big( E_{v_1,\dots,v_{n-1}}\circ D^{(n-1)}f \Big)'\Big|_{x_0}v_n &= E_{v_1,\dots,v_{n-1}} \circ D^{n}f|_{x_0}v_n =
	(\dots((D^{n}f|_{x_0}v_n)v_1)\dots)v_{n-1} \\
	&=
	D^{n}f|_{x_0}(v_n, v_1,\dots,v_{n-1}) = D^{n}f|_{x_0}(v_1,\dots,v_{n-1}, v_n)\;,
	\end{flalign*}
	where in the first step we used the chain rule and Lemma~\ref{Properties F-derivative}~(ii). In the second step, we used the definition of $E_{v_1,\dots,v_{n-1}}$, whereas in the third step we re-identified $D^{n}f|_{x_0}$ with the corresponding multilinear mapping from $V^n$ to~$W$.
	Finally, in the last step we used the symmetry of $D^{n}f|_{x_0}$.
\end{proof}
We finally state one last computation rule:
\begin{Lemma}
	\label{F-derivative Trick Af}Let $V$, $W$ and~$Z$ be normed vector spaces, $U\subseteq V$ open, $f:U\rightarrow W$ $n$-times Fr{\'e}chet-differentiable and $A \in \Lin(W,Z)$. Then also the function $A\circ f$ is $n$-times Fr{\'e}chet-differentiable
	and
	\begin{flalign}
	\label{deriv. id}
	D^n(A\circ f)|_{x_0}(v_1,\cdots,v_n)=A\Big( D^n f|_{x_0}(v_1,\cdots,v_n) \Big)\qquad
	\forall \, x_0\in U, v_1,\cdots, v_n \in V\;.
	\end{flalign}
\end{Lemma}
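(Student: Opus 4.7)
The plan is to proceed by induction on $n$, combining the chain rule (Lemma~\ref{Properties F-derivative}~(iv)), the fact that bounded linear maps are their own Fr{\'e}chet derivatives (Lemma~\ref{Properties F-derivative}~(ii)), and the identity~\eqref{Trick Dio} which reduces the computation of higher derivatives to the differentiation of suitable $W$-valued maps.

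For the base case $n=1$, I would argue as follows. Since $A \in \Lin(W,Z)$ is bounded linear, Lemma~\ref{Properties F-derivative}~(ii) ensures that $A$ is Fr{\'e}chet-smooth with $DA|_y = A$ for every $y \in W$. The chain rule then gives that $A\circ f$ is Fr{\'e}chet-differentiable at $x_0$ with
\[ D(A\circ f)|_{x_0} = DA|_{f(x_0)} \circ Df|_{x_0} = A \circ Df|_{x_0}\:, \]
which applied to $v_1 \in V$ yields $A(Df|_{x_0}v_1)$, as claimed.

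For the inductive step, suppose the statement is true for $n-1$ (with an arbitrary Fr{\'e}chet-differentiable $f$). Let $f$ be $n$-times Fr{\'e}chet-differentiable and fix vectors $v_1,\ldots,v_{n-1}\in V$. By the induction hypothesis, $A\circ f$ is at least $(n-1)$-times Fr{\'e}chet-differentiable on $U$, and
\[ D^{n-1}(A\circ f)|_{x}(v_1,\ldots,v_{n-1}) = A\bigl( D^{n-1}f|_{x}(v_1,\ldots,v_{n-1}) \bigr) \qquad \text{for all $x \in U$}. \]
The right-hand side is the composition of the $W$-valued map $x \mapsto D^{n-1}f|_x(v_1,\ldots,v_{n-1})$, which is Fr{\'e}chet-differentiable at $x_0$ by the previous lemma (applied to $f$), with the bounded linear operator $A$. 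Invoking the $n=1$ case with $f$ replaced by this map, we conclude that $x \mapsto D^{n-1}(A\circ f)|_{x}(v_1,\ldots,v_{n-1})$ is Fr{\'e}chet-differentiable at $x_0$, and applying~\eqref{Trick Dio} both to $A\circ f$ and to $f$ itself gives
\[ D^n(A\circ f)|_{x_0}(v_1,\ldots,v_n) = A\bigl( D^n f|_{x_0}(v_1,\ldots,v_n) \bigr)\:. \]
Since $v_1,\ldots,v_n$ were arbitrary, this is the desired identity; the existence of all mixed derivatives as a bounded multilinear map follows from the boundedness of $A$ and the hypothesis on $f$.

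There is no real obstacle here, since both ingredients (the chain rule and the identity~\eqref{Trick Dio}) have already been established. The only mild subtlety is the bookkeeping of what is meant by $D^{n-1}(A\circ f)$ as an element of the iterated space $\Lin(V,\Lin(V,\ldots,\Lin(V,Z)\ldots))$: one must make sure that the evaluation at $(v_1,\ldots,v_{n-1})$ commutes with the final differentiation in the sense of~\eqref{Trick Dio}. This is exactly what the previous lemma provides, and no further argument is needed.
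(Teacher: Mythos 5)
Your induction for the identity~\eqref{deriv. id} is essentially the paper's own argument: base case via the chain rule and Lemma~\ref{Properties F-derivative}~(ii), inductive step via~\eqref{Trick Dio} applied to both $A\circ f$ and $f$ together with the induction hypothesis. That part is fine.

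There is, however, a gap in how you establish that $A\circ f$ is $n$-times Fr{\'e}chet-differentiable in the first place, and this matters because~\eqref{Trick Dio} can only be applied to $A\circ f$ at order $n$ once that is known. What your inductive step actually delivers is that, for each \emph{fixed} tuple $(v_1,\ldots,v_{n-1})$, the $Z$-valued evaluation map $x\mapsto D^{n-1}(A\circ f)|_x(v_1,\ldots,v_{n-1})$ is Fr{\'e}chet-differentiable at $x_0$. By definition, $n$-fold Fr{\'e}chet differentiability requires more: the operator-valued map $D^{n-1}(A\circ f)\colon U\to \Lin(V,\Lin(V,\ldots,\Lin(V,Z)\ldots))$ must itself be Fr{\'e}chet-differentiable into the iterated operator space, and in infinite dimensions differentiability of all the fixed-vector evaluations does not imply this (it is exactly the G{\^a}teaux-versus-Fr{\'e}chet distinction, one level up). Your closing remark that ``the existence of all mixed derivatives as a bounded multilinear map follows from the boundedness of $A$ and the hypothesis on $f$'' asserts the needed conclusion without proving it. The paper closes this gap by first invoking Lemma~\ref{F-derivative conecttinating}: since $A$ is Fr{\'e}chet-smooth by Lemma~\ref{Properties F-derivative}~(ii), the composition $A\circ f$ is $n$-times Fr{\'e}chet-differentiable whenever $f$ is. Alternatively, you could note that $D^{n-1}(A\circ f)=\Lambda_A\circ D^{n-1}f$ with $\Lambda_A\colon T\mapsto A\circ T$ a bounded linear map between the iterated operator spaces, and apply the chain rule at that level. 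Either way, you need one of these observations before your application of~\eqref{Trick Dio} to $A\circ f$ is legitimate; with it inserted, your proof coincides with the paper's.
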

\begin{proof}
	The Fr{\'e}chet-differentiability follows immediately from Lemma~\ref{F-derivative conecttinating},
	using that~$A$ is Fr{\'e}chet-smooth. We show the identity~\eqref{deriv. id} by induction over $n$:
The case~$n=1$ follows immediately by the chain rule and Lemma~\ref{Properties F-derivative}~(ii).
Now let $n\geq 2$ and assume that the statement holds for~$n-1$. 
Using the previous lemma (first step), the induction hypothesis (i.e.~\eqref{deriv. id} for the $(n-1)$-st derivative) in the second step as well as the chain rule, Lemma~\ref{Properties F-derivative}~(ii) and the symmetry of $f^(n)$,
for all $x_0\in U$ and $v_1,\cdots, v_n \in V$ we obtain
	\begin{flalign*}
	D^n(A\circ &f)|_{x_0}(v_1,\cdots,v_n)\overset{\eqref{Trick Dio}}{=}D\Big( D^{(n-1)}(A\circ f)|_{.}(v_1,\cdots,v_{n-1}) \Big)\Big|_{x_0}v_n\\
	\overset{(IH)}{=} &D\Big( A\Big( D^{(n-1)}f|_{.}(v_1,\cdots,v_{n-1})\Big) \Big)\Big|_{x_0}v_n = A\Big( D^n f|_{x_0}(v_1,\cdots,v_{n-1},v_n)\Big) \:.
	\end{flalign*}
\end{proof}

\section{The Riemannian Metric in Symmetric Wave Charts} \label{appsymm}
In this subsection we give a detailed computation of the Riemannian metric introduced in Section~\ref{SecRiemMetric} in terms of the symmetric wave charts. Hereby we adapt the methods in~\cite[Section~4]{gaugefix} to the infinite-dimensional setting.

We begin by defining a distance function on $\F^\reg$ by
\begin{flalign*}
d: \F^\reg \times \F^\reg \rightarrow \mathds{R}^{+}_0\;,\;\;\;
(x,y) \mapsto \sqrt{\mathrm{tr}((x-y)^2)}\;.
\end{flalign*}
The trace operator involved here is well-defined
and can be expressed in any orthonormal basis $(e_i)_{i\in \mathds{N}}$ of $\H$ by
(for details see for example~\cite[Section~30.2]{lax})
\begin{flalign}
\label{formal def. Tr}
\mathrm{tr}(A) = \sum_{i=1}^{\infty} \langle e_i|Ae_i\rangle_{\H}
\qquad \text{for $A \in \Lin(\H)$ of finite rank}\:.
\end{flalign}

Moreover note that $d$ does indeed define a distance function on $\F^\reg$ as for any two $x,y \in \F^\reg$
$d(x,y)=\|x-y\|_{\Shil(\H)}$, where~$\|.\|_{\Shil(\H)}$ denotes the Hilbert-Schmidt norm
(see for example~\cite[Section~XI.6]{dunford2} or~\cite[p. 321--322, 309--310]{wernerFA}).

The following remark is a reminder of a calculation rule for the trace operator acting on operators with finite rank.
\begin{Remark} 
	\label{trace remark} {\em{ Let $A\in \Lin(\H)$ be of finite rank and $V \subseteq \H$ a finite-dimensional subspace $V \subseteq \H$ containing the image of $A$, i.e. $A(\H)\subseteq V$. Moreover, let  $(e_i)_{1\leq i\leq k}$ be an orthonormal basis of $v$ and $(\tilde{e}_i)_{i\in\mathds{N}}$ an orthonormal basis of $V^{\bot}$. Then we obtain an orthonormal basis $(\hat{e}_i)_{i\in \mathds{N}}$ of $\H$ by setting $\hat{e}_i:=e_i$ for $i=1,\cdots, k$ and $\hat{e}_{k+j}:=\tilde{e}_j$ for $j\in \mathds{N}$. Using this basis in~\eqref{formal def. Tr} the trace of $A$ reduces to:
	\begin{flalign*}
	\mathrm{tr}(A) = \sum_{i=1}^k \langle \hat{e}_i|A\hat{e}_i\rangle_{\H} = \sum_{i=1}^k \langle e_i|Ae_i\rangle_{\H}\;.
	\end{flalign*} }} \QEDrem
\end{Remark}

The next lemma is mostly based on~\cite[Satz~VI.5.8]{wernerFA} and states some more properties of the trace operators.

\begin{Lemma} {\bf{(Properties of the trace)}}
\label{trace properties}
\bitem
		\item[\em{(1)}] Linearity: The trace operator $\mathrm{tr}$ is linear.
		\item[\em{(2)}] Boundedness: For a finite-dimensional subspace $V\subseteq \H$ consider the corresponding subspace $V_{\Lin}:=\{A \in \Lin(\H)\, | \, A(\H) \subseteq V \} \subseteq \Lin(\H)$. Then $\mathrm{tr}|_{V_{\Lin}}$ is bounded.
		\item[\em{(3)}] Cyclic Permutation: For $x,y \in \Lin(\H)$ with $x$ of finite rank it holds that:
		\begin{flalign*}
		\mathrm{tr}(xy)=\mathrm{tr}(yx)\;.
		\end{flalign*}
		\item[\em{(4)}] Trace of adjoint: For any $x \in \Lin(\H)$ of finite rank also $x^{\dagger}$ is of finite rank and:
		\begin{flalign*}
		\mathrm{tr}(x^{\dagger})=\overline{\mathrm{tr}(x)}\;.
		\end{flalign*}
\eitem
\end{Lemma}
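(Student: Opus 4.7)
The overall plan hinges on two facts stated immediately before the lemma: Remark~\ref{trace remark} reduces the trace of any finite-rank operator to a finite sum over an ONB of an arbitrary finite-dimensional subspace containing its image, and the basis-independence of~\eqref{formal def. Tr} (asserted right after that formula) lets us compute $\tr$ in whichever ONB is convenient. With these tools, parts (1), (2) and (4) become direct consequences of elementary linear-algebra manipulations; only part (3) requires a nontrivial choice.

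For \emph{linearity} (1), pick $V := A(\H) + B(\H)$, which is finite-dimensional and contains the image of every linear combination $\alpha A + \beta B$. Applying Remark~\ref{trace remark} to $A$, $B$, and $\alpha A + \beta B$ simultaneously with a fixed ONB $(e_i)_{i=1}^m$ of $V$, the claim follows termwise from linearity of $\langle e_i | \cdot \, e_i \rangle_\H$. For \emph{boundedness} (2), Remark~\ref{trace remark} writes $\tr(A) = \sum_{i=1}^{k} \langle e_i | A e_i \rangle_\H$ in any ONB of $V$ with $k = \dim V$, and Cauchy--Schwarz gives $|\tr(A)| \leq k\,\|A\|_{\Lin(\H)}$; hence $\tr|_{V_{\Lin}}$ is continuous with operator norm at most $k$. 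For the \emph{trace of the adjoint} (4), choose $V$ finite-dimensional with $V \supseteq x(\H) + x^\dagger(\H)$ (possible since $x^\dagger$ has the same rank as $x$); applying Remark~\ref{trace remark} to both $x$ and $x^\dagger$ in a common ONB of $V$ and using $\langle e_i | x^\dagger e_i \rangle_\H = \overline{\langle e_i | x e_i \rangle_\H}$, the identity $\tr(x^\dagger) = \overline{\tr(x)}$ follows by pulling the complex conjugation out of the finite sum.

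The main obstacle is \emph{cyclicity} (3): the images $xy(\H) \subseteq x(\H)$ and $yx(\H) \subseteq y(x(\H))$ need not coincide, so we cannot compare the two traces in a single adapted ONB. My strategy is to invoke the singular value decomposition of the finite-rank operator $x$: there exist orthonormal systems $(u_j)_{j=1}^{k}$ and $(v_j)_{j=1}^{k}$ in $\H$ and scalars $\sigma_j > 0$ such that $x = \sum_{j=1}^{k} \sigma_j\, |u_j\rangle \langle v_j|$. For any ONB $(e_i)_{i \in \N}$ of $\H$, the basis-independent representation~\eqref{formal def. Tr} gives
\begin{flalign*}
\tr(xy) = \sum_{i}\sum_{j=1}^{k} \sigma_j \,\langle v_j | y e_i \rangle_\H \langle e_i | u_j \rangle_\H
= \sum_{j=1}^{k} \sigma_j \, \langle v_j | y u_j \rangle_\H\,,
\end{flalign*}
where the interchange of summations is legitimate because the $j$-sum is finite, and the $i$-sum collapses via the completeness identity $\sum_i \langle w | e_i \rangle_\H \langle e_i | z \rangle_\H = \langle w | z \rangle_\H$. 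The entirely analogous computation applied to $\tr(yx) = \sum_i\sum_{j=1}^{k}\sigma_j\,\langle v_j | e_i\rangle_\H \langle e_i | y u_j\rangle_\H$ produces the identical expression $\sum_{j=1}^{k} \sigma_j \langle v_j | y u_j \rangle_\H$, which establishes the cyclic identity and completes the proof.
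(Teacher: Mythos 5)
Your proof is correct. It is, however, substantially more self-contained than the paper's: the paper proves only part (2) directly (with exactly your estimate $|\tr(A)| \leq \sum_{i=1}^k |\langle e_i | A e_i\rangle_\H| \leq k\,\|A\|$ via Remark~\ref{trace remark}) and simply refers parts (1), (3) and (4) to \cite[Satz~VI.5.8]{wernerFA}. Your arguments for (1) and (4) via a common adapted orthonormal basis are the expected ones and are fine; the only genuinely new content is your treatment of cyclicity, where you replace the textbook citation by the singular value decomposition $x = \sum_{j=1}^k \sigma_j\,|u_j\rangle\langle v_j|$ together with the Parseval identity. This works: the interchange of the finite $j$-sum with the $i$-sum is legitimate, and each individual $i$-series $\sum_i \langle y^\dagger v_j | e_i\rangle_\H \langle e_i | u_j\rangle_\H$ converges absolutely by Cauchy--Schwarz, so the collapse to $\langle v_j | y u_j\rangle_\H$ is justified on both sides. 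What your route buys is independence from the reference and an argument that works verbatim for $x$ of finite rank and $y$ merely bounded (which is exactly the generality the lemma claims, and which one would otherwise have to check is covered by the cited Satz); what it costs is the extra input of the SVD for finite-rank operators, which you assert without proof but which is standard.
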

\begin{proof}
	(i): Follows from the definition of tr by \eqref{formal def. Tr}, see also \cite[Satz VI.5.8 (a)]{wernerFA}.\\
	(iii) and (iv): See \cite[Satz VI.5.8 (c),(b)]{wernerFA}.\\
	(ii): Let $A \in V_{\Lin}$. Then, as explained in Remark~\ref{trace remark}, choosing an orthonormal basis $(e_i)_{1\leq i\leq k}$ of $V$ (so $\dim(V)=k$), we can estimate:
	\begin{flalign*}
	|\mathrm{tr}(A)| = \Big|\sum_{i=1}^k\langle e_i| Ae_i\rangle_{\H}\Big| \leq \sum_{i=1}^k|\langle e_i| Ae_i\rangle|_{\H}\leq \sum_{i=1}^k\|A\|_{\H} =k\|A\|_{\H}\;.
	\end{flalign*}
This concludes the proof.
\end{proof}
In the following lemma we consider differentiability properties of a mapping $E$ which corresponds to the square of the distance function $d$. Later we want to use it to introduce the Riemannian metric as second Fr{\'e}chet-derivative of $E$.
\begin{Lemma}
	\label{deriv. d^2}The mappings:
	\begin{flalign*}
	E: \F^\reg \times \F^\reg \rightarrow \mathds{R}\;,\;\;\; (x,y) \mapsto \mathrm{tr}((x-y)^2)\;,
	\end{flalign*}
	and for any fixed $x \in \F^\reg$:
	\begin{flalign*}
	E_x: \F^\reg \rightarrow \mathds{R}\;,\;\;\; y \mapsto \mathrm{tr}((x-y)^2)\;,
	\end{flalign*}
	are Fr{\'e}chet-smooth. Moreover, for all $x,y\in \F^\reg$ with~$x \in \Omega_y$ and all~$\bu, \bv \in V_y$,
\begin{flalign}
	D \big( E_x \circ \phi_y^{-1} \big)\big|_{\phi_y(x)} &=0 \qquad \mathrm{and} \label{first} \\
	D^2\big( E_x \circ \phi_y^{-1} \big) \big|_{\phi_y(x)}(\bv,\bu)
&=4 \re \Big(\mathrm{tr}(y\phi_y(x) \bv^{\dagger}y\phi_y(x)\bu^{\dagger})+\mathrm{tr}(y\bu \bv^{\dagger}y\phi_y(x)\phi_y(x)^{\dagger})\Big) \,. \notag
\end{flalign}
\end{Lemma}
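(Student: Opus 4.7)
The plan is to reduce everything to a computation in the ambient Hilbert space $\Shil(\H)$ of selfadjoint Hilbert--Schmidt operators, where the relevant maps are polynomial and therefore manifestly smooth. Since $E(x,y) = \tr((x-y)^2) = \|x-y\|_{\Shil(\H)}^2$ is the squared distance in this Hilbert space, and since by Theorem~\ref{thm311} the inclusion $\F^\reg \hookrightarrow \Shil(\H)$ is a Fr{\'e}chet-smooth embedding of Banach manifolds, Fr{\'e}chet-smoothness of $E$ and $E_x$ will follow from the chain rule of Lemma~\ref{F-derivative conecttinating}: one composes the squared Hilbert-space norm (a smooth quadratic form) with the smooth inclusion.

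For the derivative formulas I would work in the symmetric wave chart~$\phi_y$ and use the explicit expression $\phi_y^{-1}(\psi) = \psi^\dagger Y\psi$ obtained in the proof of Theorem~\ref{thmchart}, where $Y := y|_{S_y}$. This map is a bounded quadratic form in~$\psi$, hence Fr{\'e}chet-smooth by Lemma~\ref{Properties F-derivative}, and at $\psi_0 := \phi_y(x)$ its first derivative in the direction $\bv \in V_y$ equals
\[
A_\bv := D\phi_y^{-1}|_{\psi_0}(\bv) = \bv^\dagger Y \psi_0 + \psi_0^\dagger Y \bv\:.
\]
On the ambient Hilbert space a direct computation gives $DE_x|_z(w) = -2\tr((x-z)w)$ and the Hessian $D^2E_x|_z(w_1,w_2)=2\tr(w_1w_2)$. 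Applying the chain rule to $E_x \circ \phi_y^{-1}$ at~$\psi_0$ and using $\phi_y^{-1}(\psi_0) = x$, the first-order derivative vanishes because $DE_x|_x = 0$ (which also follows abstractly from the fact that $x$ is the global minimum of~$E_x$), proving~\eqref{first}. The same vanishing makes the contribution of~$D^2\phi_y^{-1}$ in the second-order chain rule drop out, leaving
\[
D^2(E_x\circ\phi_y^{-1})|_{\psi_0}(\bv,\bu) \;=\; D^2E_x|_x(A_\bv, A_\bu) \;=\; 2\tr(A_\bv A_\bu)\:.
\]

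The main obstacle, and the remaining work, is to bring $2\tr(A_\bv A_\bu)$ into the precise form stated in the lemma. Expanding the product yields four monomials under the trace, and these must collapse to twice the real part of just the two displayed terms. I will accomplish this using the cyclic invariance of the trace together with the identities $Y^\dagger = Y$ and $\overline{\tr(B)} = \tr(B^\dagger)$ for finite-rank~$B$ (Lemma~\ref{trace properties}(4)): taking the adjoint inside the trace interchanges two pairs of the four summands up to complex conjugation, so that the four terms combine into $4\re$ of exactly the two monomials appearing in the statement, up to a cyclic rearrangement of factors.
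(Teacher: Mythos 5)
Your proposal is correct, but it takes a genuinely different route from the paper's own proof. The paper works entirely inside the chart by brute force: it expands $\tr\big((x-\psi^{\dagger}y\psi)^2\big)$ into four trace monomials, restricts the trace to the finite-dimensional subspaces where it is a bounded linear functional, observes that each monomial is a composition of bounded multilinear maps (hence Fr{\'e}chet-smooth by the rules of Appendix~\ref{appfrechet}), and then differentiates twice term by term, exhibiting the cancellations that occur at $\psi=\phi_y(x)$. You instead exploit the ambient Hilbert space $\Shil(\H)$ of Theorem~\ref{thm311}: since $E_x$ is exactly quadratic there, $DE_x|_z = -2\tr((x-z)\,\cdot\,)$ vanishes at $z=x$ and the Hessian is the constant form $2\tr(w_1w_2)$, so the second-order chain rule kills the $D^2\phi_y^{-1}$ contribution and reduces everything to $2\tr(A_\bv A_\bu)$ with $A_\bv=\bv^{\dagger}y\psi_0+\psi_0^{\dagger}y\bv$; the four resulting monomials pair off into conjugate pairs under $\overline{\tr B}=\tr(B^{\dagger})$ and cyclicity (the first with the fourth, the second with the third), which I have checked does produce exactly the stated $4\re(\cdots)$. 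Your argument is shorter and more conceptual; what the paper's direct computation buys in exchange is the explicit formula for $D^2\big(E_x\circ\phi_y^{-1}\big)$ at \emph{arbitrary} $\psi$ (its equation~\eqref{eq. Gaussian}), which is reused later to prove Fr{\'e}chet-smoothness of the Riemannian metric, whereas you evaluate only at the critical point. Two small points you should make explicit to close the argument: first, that $\phi_y^{-1}(\psi)=\psi^{\dagger}y\psi$ is Fr{\'e}chet-smooth as a map into $\Shil(\H)$ because the operator and Hilbert--Schmidt norms are equivalent on operators of rank at most $2n$ (this equivalence is stated in the proof of Theorem~\ref{thm311}); second, the second-order chain rule formula itself, which is not recorded in Appendix~\ref{appfrechet} but follows routinely from Lemmas~\ref{Properties F-derivative} and~\ref{F-derivative conecttinating}.
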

\begin{proof}
	First we have to show that  $E\circ (\phi_x^{-1}, \phi_y^{-1})$ is Fr{\'e}chet-smooth for all $ x,y \in \F^\reg$.\\
	To this end first consider the following calculation for arbitrary $\varphi\in W_x, \psi \in W_y$:
	\begin{flalign*}
	&E \circ (\phi_x^{-1}, \phi_y^{-1})|_{(\varphi, \psi)}=\mathrm{tr}\big((\varphi^{\dagger}x\varphi-\psi^{\dagger}y\psi)^2\big)\\
	&= \mathrm{tr}(\varphi^{\dagger}x\varphi\varphi^{\dagger}x\varphi) - \mathrm{tr}(\varphi^{\dagger}x\varphi\psi^{\dagger}y\psi) - 
	\mathrm{tr}(\psi^{\dagger}y\psi\varphi^{\dagger}x\varphi) +
	\mathrm{tr}(\psi^{\dagger}y\psi\psi^{\dagger}y\psi)\\
	&= \mathrm{tr}(x\varphi\varphi^{\dagger}x\varphi\varphi^{\dagger}) - \mathrm{tr}(x\varphi\psi^{\dagger}y\psi\varphi^{\dagger}) - 
	\mathrm{tr}(y\psi\varphi^{\dagger}x\varphi\psi^{\dagger}) +
	\mathrm{tr}(y\psi\psi^{\dagger}y\psi\psi^{\dagger})
	\;,\\
	&=
	\mathrm{tr}|_{{S_x}_{\Lin}}(x\varphi\varphi^{\dagger}x\varphi\varphi^{\dagger}) - \mathrm{tr}|_{{S_x}_{\Lin}}(x\varphi\psi^{\dagger}y\psi\varphi^{\dagger}) - 
	\mathrm{tr}|_{{S_y}_{\Lin}}(y\psi\varphi^{\dagger}x\varphi\psi^{\dagger}) +
	\mathrm{tr}|_{{S_y}_{\Lin}}(y\psi\psi^{\dagger}y\psi\psi^{\dagger})
	\;,
	\end{flalign*}
	where in the second step we used the linearity of the trace and in the third step
	the cyclic permutation property (which can be applied as all factors and summands obviously have finite rank). The last line is clearly a sum of composition of Fr{\'e}chet-smooth mappings in $(\varphi,\psi)$, which proves the Fr{\'e}chet-smoothness of $E\circ (\phi_x^{-1}, \phi_y^{-1})$.

	For calculating the Fr{\'e}chet derivative of $E_x$ consider the expansion
	\begin{flalign*}
	E_x \circ \phi_y^{-1}(\psi) &= \mathrm{tr}((x-\psi^{\dagger}y\psi)^2) \\
	&= \mathrm{tr}(x^2) -\mathrm{tr}(x\psi^{\dagger}y\psi) - \mathrm{tr}(\psi^{\dagger}y\psi x) + \mathrm{tr}(\psi^{\dagger}y\psi\psi^{\dagger}y\psi) \\
	&= \mathrm{tr}(x^2) -2\mathrm{tr}(x\psi^{\dagger}y\psi) + \mathrm{tr}(y\psi\psi^{\dagger}y\psi\psi^{\dagger})\\
	&=  \mathrm{tr}|_{{S_x}_{\Lin}}(x^2) -2\mathrm{tr}|_{{S_x}_{\Lin}}(x\psi^{\dagger}y\psi) + \mathrm{tr}|_{{S_y}_{\Lin}}(y\psi\psi^{\dagger}y\psi\psi^{\dagger}) \:,
	\end{flalign*}
	which is again a sum of compositions of Fr{\'e}chet-smooth functions showing that also $E_x \circ \phi_y^{-1}$ is Fr{\'e}chet-smooth.

	Applying the computation rule from Lemma~\ref{F-derivative Trick Af} together with the Fr{\'e}chet derivative rule for bilinear functions in Lemma~\ref{Properties F-derivative}~(iii) (multiple times and together with the chain rule) we obtain:
	\begin{flalign*}
	D\big( E_x \circ \phi_y^{-1} \big) \big|_{\psi}\bv=& -2\mathrm{tr}|_{{S_x}_{\Lin}}(x\bv^{\dagger}y\psi) -2\mathrm{tr}|_{{S_x}_{\Lin}}(x\psi^{\dagger}y\bv) + \mathrm{tr}|_{{S_y}_{\Lin}}(y\bv\psi^{\dagger}y\psi\psi^{\dagger}) \\
	&+ \mathrm{tr}|_{{S_y}_{\Lin}}(y\psi \bv^{\dagger}y\psi\psi^{\dagger})
	+ \mathrm{tr}|_{{S_y}_{\Lin}}(y\psi\psi^{\dagger}y\bv\psi^{\dagger})
	+ \mathrm{tr}|_{{S_y}_{\Lin}}(y\psi\psi^{\dagger}y\psi \bv^{\dagger})
	\;.
	\end{flalign*}
	Using that Lemma~\ref{trace properties}~(iii) and (iv) this simplifies to
	\begin{flalign}
	&D\big( E_x \circ \phi_y^{-1} \big) \big|_{\psi}\bv \notag \\
	&= -2\mathrm{tr}(x\bv^{\dagger}y\psi)-2\mathrm{tr}(x\psi^{\dagger}y\bv)
	+2\mathrm{tr}(\psi\psi^{\dagger}y\bv\psi^{\dagger}y)
	+2\mathrm{tr}(y\psi \bv^{\dagger}y\psi\psi^{\dagger}) \notag\\
	&= 4\cdot \re\Big(\mathrm{tr}|_{{S_y}_{\Lin}}(y\psi\psi^{\dagger}y\psi \bv^{\dagger}) -  \mathrm{tr}|_{{S_y}_{\Lin}}(x\bv^{\dagger}y\psi) \Big)\;.
	\label{eq320} \\
	&= 4 \cdot \re\Big(\mathrm{tr}(\psi^{\dagger}y\psi \bv^{\dagger}y\psi) - \mathrm{tr}(x\bv^{\dagger}y\psi)\Big)
	\label{1. derivative d^2}
	\end{flalign}
	In the case~$\psi^{\dagger}y\psi=\phi_y^{-1}(\psi)=x$ the terms in~\eqref{1. derivative d^2} cancel each other, showing that
	\begin{flalign*}
	D( E_x \circ \phi_y^{-1} \big) \big|_{\phi_y(x)} = 0\;.
	\end{flalign*}
	Moreover, proceeding from \eqref{eq320} a straightforward computation using the properties of the
	Fr{\'e}chet derivative and the trace operator as before gives
	\begin{flalign}
	&D^2\big( E_x \circ \phi_y^{-1} \big) \big|_{\psi}(\bv,\bu) \notag \\
	&=4 \cdot\re \Big(\mathrm{tr}|_{{S_y}_{\Lin}}(y\bu\psi^{\dagger}y\psi \bv^{\dagger}) + \mathrm{tr}|_{{S_y}_{\Lin}}(y\psi \bu^{\dagger}y\psi \bv^{\dagger})+\mathrm{tr}|_{{S_y}_{\Lin}}(y\psi\psi^{\dagger}y\bu \bv^{\dagger})\notag \\
	&\phantom{= 4 \cdot\re \Big(} -\mathrm{tr}|_{{S_x}_{\Lin}}(x\bv^{\dagger}y\bu)\Big) \label{eq. Gaussian} \\
	&= 4 \cdot \re\Big(\mathrm{tr}(\psi^{\dagger}y\psi \bv^{\dagger}y\bu) + \mathrm{tr}(y\psi \bu^{\dagger}y\psi \bv^{\dagger})+\mathrm{tr}(y\psi\psi^{\dagger}y\bu \bv^{\dagger})-\mathrm{tr}(x\bv^{\dagger}y\bu)\Big) \;.
	\end{flalign}
	As for $\psi^{\dagger}y\psi=\phi_y^{-1}(\psi)=x$ the first and the last term cancel each other we obtain
	\begin{flalign*}
	&D^2 \big( E_x \circ \phi_y^{-1} \big) \big|_{\phi_y(x)}(\bv,\bu)\\
	&= 4 \cdot \re\Big(\mathrm{tr}(y\phi_y(x)\bu^{\dagger}y\phi_y(x) \bv^{\dagger})+\mathrm{tr}(y\phi_y(x)\phi_y(x)^{\dagger}y\bu \bv^{\dagger})\Big) \:,
	\end{flalign*}
which concludes the proof.
\end{proof}
\begin{Lemma}
	\label{coord. inv. d^2''} $D^2( E_x \circ \phi_y^{-1})|_{\phi_y(x)}$ is independent of the choice of chart (i.e. the choice of $y$) as long as $y\in \F^\reg$ is chosen such that $x \in \Omega_y$.
	Moreover, for all tangent vector fields $\bv,\bu\in \Upgamma(\F^\reg,T\F^\reg)$ and any $y\in \F^\reg$ with $x \in \Omega_y$
	\begin{flalign}
	\label{d^2 coord. inv.}
	D_{\bv(x)} \Big( D_{\bu(.)}E_x(.) \Big)=D^2 \big( E_x \circ \phi_y^{-1} \big) \big|_{\phi_y(x)}\Big(D\phi_y(\bu(x))  ,D\phi_y(\bv(x))\Big)\;,
	\end{flalign}
	where the derivatives act on the arguments containing a dot.
\end{Lemma}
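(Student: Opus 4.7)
The idea is to exploit the fact that $x$ is a critical point of $E_x$: by Lemma~\ref{deriv. d^2}, the first Fréchet derivative $D(E_x\circ\phi_y^{-1})|_{\phi_y(x)}$ vanishes in every chart $\phi_y$ with $x \in \Omega_y$. This is the classical reason why the Hessian of a scalar function at a critical point descends to a well-defined tensor on the tangent space, and both claims of the lemma will follow from this observation.

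To verify chart independence, I would pick a second chart $\phi_{\tilde y}$ about $x$ and write $E_x\circ\phi_y^{-1} = (E_x\circ\phi_{\tilde y}^{-1})\circ \Phi$ with transition map $\Phi := \phi_{\tilde y}\circ\phi_y^{-1}$, which is Fréchet-smooth by Theorem~\ref{thmfrechet}. The chain rule of Lemma~\ref{F-derivative conecttinating} combined with the product rule for Fréchet derivatives then gives, at $\psi_0 := \phi_y(x)$,
\begin{align*}
D^2(E_x\circ\phi_y^{-1})|_{\psi_0}(\bv,\bu) &= D^2(E_x\circ\phi_{\tilde y}^{-1})|_{\phi_{\tilde y}(x)}\bigl(D\Phi|_{\psi_0}\bv,\,D\Phi|_{\psi_0}\bu\bigr) \\
&\quad + D(E_x\circ\phi_{\tilde y}^{-1})|_{\phi_{\tilde y}(x)}\cdot D^2\Phi|_{\psi_0}(\bv,\bu).
\end{align*}
The last summand vanishes by \eqref{first}, and $D\Phi|_{\psi_0}$ is precisely the change-of-coordinates map on $T_x\F^\reg$ sending the chart representation $D\phi_y(\bw)\in V_y$ of a vector $\bw\in T_x\F^\reg$ to its representation $D\phi_{\tilde y}(\bw)\in V_{\tilde y}$. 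Hence both sides define the same bilinear form on $T_x\F^\reg$.

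For the identity \eqref{d^2 coord. inv.}, I would set $h(z) := D_{\bu(z)}E_x(z)$ and express it in the chart $\phi_y$ as
$$h\circ\phi_y^{-1}(\psi) = D(E_x\circ\phi_y^{-1})|_\psi\cdot\bigl(D\phi_y\circ\bu\circ\phi_y^{-1}\bigr)(\psi).$$
Both factors are Fréchet-smooth in $\psi$ (the second because $\bu$ is a smooth vector field and $(\phi_y, D\phi_y)$ is a smooth chart of $T\F^\reg$), so the product rule applies. Differentiating at $\psi_0$ in the direction $D\phi_y(\bv(x))$ produces two summands, and the one having $D(E_x\circ\phi_y^{-1})|_{\psi_0}$ as a factor drops out by \eqref{first}. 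After invoking the symmetry of $D^2$ from Lemma~\ref{SymmF-deriv}, the surviving term is exactly the right-hand side of \eqref{d^2 coord. inv.}.

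The only real obstacle is bookkeeping: verifying Fréchet-smoothness of the composite $z \mapsto D\phi_y(\bu(z))$ on $\Omega_y$, and carefully tracking how a single tangent vector is identified with its different chart representations. All the analytic substance, however, is packaged into the vanishing of the first derivative at the critical point $x$.
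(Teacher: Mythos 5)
Your proposal is correct and follows essentially the same route as the paper: both proofs hinge on the vanishing of $D(E_x\circ\phi_y^{-1})|_{\phi_y(x)}$ from \eqref{first}, which kills the extra term produced by the product rule when differentiating $\tilde x \mapsto D(E_x\circ\phi_y^{-1})|_{\phi_y(\tilde x)}\,D\phi_y(\bu(\tilde x))$ a second time, after which the symmetry of the second Fr\'echet derivative gives \eqref{d^2 coord. inv.}. The only cosmetic difference is that you establish chart independence directly via the transition map $\phi_{\tilde y}\circ\phi_y^{-1}$ and the second-order chain rule, whereas the paper obtains it for free because the left-hand side of \eqref{d^2 coord. inv.} is manifestly chart-free; both arguments are valid.
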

This Lemma also shows, that the the order of differentiation of $E_x$ with respect to the two vector fields does not matter. The proof shows that this is due to the fact that the first derivative of $E_x$ vanishes.
\begin{proof}
	Let $\bv,\bu \in \Gamma(\F^\reg,T\F^\reg)$ and $x,y\in \F^\reg$ with $x \in \Omega_y$ be arbitrary. As we have seen before, for the first directional derivative we have
	\begin{flalign*}
	D_{\bu(\tilde{x})} E_x= D \big( E_x \circ \phi_y^{-1} \big) \big|_{\phi_y(\tilde{x})}D\phi_y(\bu(\tilde{x}))\;.
	\end{flalign*}
	It follows for the second directional derivative that
	\begin{flalign*}
	&D_{\bv(\tilde{x})} \Big( D_{\bu(.)}E_x(.) \Big) 
	=D^2 \big(E_x \circ \phi_y^{-1} \big) \big|_{\phi_y(\tilde{x})}\Big(D\phi_y(\bv(\tilde{x})),D\phi_y(\bu(x))\Big) \\
	&\qquad+ D \big(E_x \circ \phi_y^{-1} \big) \big|_{\phi_y(\tilde{x})} \:D \Big( D \phi_y \big(\bu(\phi_y^{-1}(.)) \big) \Big) \Big|_{\phi_y(\tilde{x})} \:D\phi_y(\bv(\tilde{x}))\;,
	\end{flalign*}
where we applied the Fr{\'e}chet derivative rule for $\mathds{R}$-bilinear maps
in Lemma~\ref{Properties F-derivative}~(iii) together with the chain rule.
Evaluating this expression at $\tilde{x}=x$, the second summand vanishes in view of~\eqref{first}.
We thus obtain
	\begin{flalign*}
D_{\bv(\tilde{x})} \Big( D_{\bu(.)}E_x(.) \Big) \Big|_{\tilde{x}=x} 
=D^2 \big(E_x \circ \phi_y^{-1} \big) \big|_{\phi_y(x)}\Big(D\phi_y(\bv(x)),D\phi_y(\bu(x))\Big)\:.
	\end{flalign*}
Using the symmetry of the second Fr{\'e}chet derivatives gives the result.
\end{proof}

\begin{Remark}
	\label{Rem 4.2.17} {\em{
	Equation \eqref{d^2 coord. inv.} also shows that $D_{\bv(x)}( D_{\bu(.)}E_x(.))$ only depends on the value of the vector fields $\bu$ and $\bv$ at the point $x$. Moreover, since to arbitrary $x \in \F^\reg$ and $\bu,\bv \in T_x\F^\reg$ one can always find a smooth tangent vector field with $\bv(x)=\bv$, $\bu(x)=\bu$ (for example by using a suitable bump function in a chart around $x$), we can consider the expression
	\begin{flalign}
	\label{Def D_2^2}
	D_{2}^2E_x|_{x}(\bu,\bv):=D^2(E_x \circ \phi_y^{-1})|_{\phi_y(x)}\Big(D\phi_y(\bu),D\phi_y(\bv)\Big)\;,
	\end{flalign}
	as a well defined, coordinate invariant -- in the sense that the right hand side of equation \eqref{Def D_2^2} returns the same values for any $y \in \F^\reg$ with $x \in \Omega_y$ -- and symmetric bilinear form. }}
	\QEDrem
\end{Remark}
Now it seems convenient to compute~\eqref{Def D_2^2} in the cart $\phi_x$. Then we have $\phi_x(x)=\pi_x$ and thus we obtain for any $\bu,\bv \in V_x$:
\begin{flalign*}
&D^2\big( E_x \circ \phi_x^{-1} \big)|_{\phi_x(x)}(\bv,\bu)=4 \cdot\re \Big(\mathrm{tr}(x\pi_x\bu^{\dagger}x\pi_x \bv^{\dagger})+\mathrm{tr}(x\pi_x\pi_x^{\dagger}x\bu \bv^{\dagger})\Big)\\
&=4 \cdot \re\Big(\mathrm{tr}(x\bu^{\dagger}x v^{\dagger})+\mathrm{tr}(x^2\bu \bv^{\dagger})\Big)
=4 \cdot\re \Big(\mathrm{tr}(x\bv x\bu)+\mathrm{tr}( x\bu\bv^{\dagger}x)\Big)\;.
\end{flalign*}
Motivated by this for any $x \in \F^\reg$ we set:
\begin{flalign*}
\tilde{g}_x: V_x \times V_x \rightarrow \mathds{R}\;,\;\;\;
(\bu,\bv) \mapsto 4 \cdot \re\Big(\mathrm{tr}(x\bv x\bu)+\mathrm{tr}(x\bu\bv^{\dagger}x)\Big)\;.
\end{flalign*}
Due to the properties of the trace operator, $\tilde{g}_x$ defines a symmetric, real-valued bilinear form on $V_x$, which is even positive-definite as the following lemma shows:
\begin{Lemma}
	\label{Lemg_xinnerproduct}
	The symmetric bilinear form $\tilde{g}_x$ is positive definite and thus defines a real valued inner product on $V_x$.
\end{Lemma}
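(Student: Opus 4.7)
The aim is to show $\tilde g_x(\bu,\bu) > 0$ for all $0\neq\bu\in V_x$. The strategy is to rewrite the two summands in $\tilde g_x(\bu,\bu) = 4\,\re\big(\tr(x\bu x\bu) + \tr(x\bu\bu^\dagger x)\big)$ as squared Hilbert--Schmidt norms, which will make both manifestly real and non-negative, with the second already being positive definite.

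First I would handle the second term. Using that $x$ is selfadjoint on $\H$, one has $(x\bu)^\dagger = \bu^\dagger x$, so by cyclicity of the trace
\[
\tr(x\bu\bu^\dagger x) \;=\; \tr\!\big((x\bu)(x\bu)^\dagger\big) \;=\; \|x\bu\|_{\Shil(\H)}^2 \;\geq\; 0,
\]
and this quantity vanishes if and only if $x\bu = 0$. Since $X = x|_{S_x}$ is invertible (recall $x\in \F^{\reg}$) and $\bu$ takes values in $S_x$, the condition $x\bu = 0$ forces $\bu = 0$. Hence this term alone is already positive definite.

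Next I would show that the first term is non-negative, which is where the symmetric wave chart structure enters. Write $\bu = \bu_I + \bu_J$ with $\bu_I\in\Symm(S_x)$ and $\bu_J\in\Lin(J,S_x)$. Since $x$ vanishes on $J$ and acts as $X$ on $S_x$, the composition $x\bu x$ vanishes on $J$ and agrees with $X\bu_I X$ on $S_x$. Because $x\bu x\bu$ maps into $S_x$, Remark~\ref{trace remark} reduces the trace to one over $S_x$, giving
\[
\tr(x\bu x\bu) \;=\; \tr|_{S_x}\!\big(X\bu_I X\bu_I\big).
\]
Now use that $\bu_I$ is symmetric with respect to the spin inner product, which by~\eqref{Astar} is equivalent to $X\bu_I = \bu_I^\dagger X$. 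Substituting on the left factor yields $X\bu_I X\bu_I = \bu_I^\dagger X^2 \bu_I$, and cyclic permutation together with $X^\dagger = X$ gives
\[
\tr|_{S_x}(X\bu_I X\bu_I) \;=\; \tr\!\big((X\bu_I)(X\bu_I)^\dagger\big) \;=\; \|X\bu_I\|_{\Shil(S_x)}^2 \;\geq\; 0.
\]
In particular this is a real number, so taking the real part is redundant.

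Combining the two identities gives $\tilde g_x(\bu,\bu) = 4\,\|X\bu_I\|_{\Shil(S_x)}^2 + 4\,\|x\bu\|_{\Shil(\H)}^2$, which is a sum of two non-negative terms and vanishes precisely when $\bu = 0$ by the analysis of the second term above. This establishes positive definiteness, and together with the symmetry and bilinearity of $\tilde g_x$ noted before the lemma, shows that $\tilde g_x$ is a real inner product on $V_x$. The only delicate point in the argument is the manipulation of $\tr(x\bu x\bu)$: without the spin-symmetry of $\bu_I$ this term could in principle contribute a negative real part, so the restriction to the subspace $\Symm(S_x)$ in the definition of the symmetric wave chart is precisely what is needed here.
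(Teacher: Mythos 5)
Your proof is correct, but it follows a genuinely different route from the paper's. The paper does not split the two traces: it expands $\re\big(\mathrm{tr}(x\bu x\bu)+\mathrm{tr}(x\bu\bu^{\dagger}x)\big)$ in an orthonormal basis and completes the square globally, obtaining $\tfrac{1}{2}\sum_i\|(\bu^{\dagger}x+x\bu)e_i\|^2$; positive semi-definiteness is then immediate, but the degeneracy analysis requires solving $\bu^{\dagger}x+x\bu=0$, which the paper does by a block-operator decomposition over $\H=I\oplus J$ in which the spin-symmetry of $\bu_I$ enters at the very end. You instead identify each trace separately as a Hilbert--Schmidt square: $\mathrm{tr}(x\bu\bu^{\dagger}x)=\|x\bu\|_{\Shil}^2$, which is already positive definite on $V_x\subseteq\Lin(\H,S_x)$ because $X=x|_{S_x}$ is invertible, and $\mathrm{tr}(x\bu x\bu)=\|X\bu_I\|_{\Shil}^2\geq 0$, where the spin-symmetry $X\bu_I=\bu_I^{\dagger}X$ is used only for this non-negativity (and is indeed indispensable there, as you note). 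All the individual steps check out: $\bu_J x=0$ and $\bu_J X=0$ kill the cross terms, Remark~\ref{trace remark} justifies restricting the trace to $S_x$, and the two decompositions are consistent with the paper's single square via $\tfrac12\|x\bu+\bu^{\dagger}x\|_{\Shil}^2=\|x\bu\|_{\Shil}^2+\re\,\mathrm{tr}(x\bu x\bu)$. Your version buys a shorter non-degeneracy argument and the clean closed formula $\tilde g_x(\bu,\bu)=4\|X\bu_I\|_{\Shil}^2+4\|x\bu\|_{\Shil}^2$. What it does not deliver is the by-product the paper extracts from its kernel computation: the main text (before~\eqref{Txid}) cites precisely this proof for the injectivity of $\bv\mapsto-\bv^*\pi_x-\pi_x^*\bv$, i.e.\ the statement that $x\bu+\bu^{\dagger}x=0$ forces $\bu=0$, which your argument bypasses rather than establishes.
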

\begin{proof}
	Let $\bu\in V_x$ be arbitrary, choose an orthonormal basis $(e_i)_{i=1,\cdots,k}$ of the finite-dimensional vector-space $(S_x+\bu^{\dagger}(S_x))$ and compute:\footnote{As explained in Remark~\ref{trace remark}, the trace operator for the finite-rank operators $x\bu x\bu$, $x\bu\bu^{\dagger}x$ and $\bu^{\dagger}x^2\bu$ can indeed be calculated like that as they all map into $(S_x+\bu^{\dagger}(S_x))$.}
	\begin{flalign*}
	&\qquad \quad\;\;\;\,\mathrm{tr}(x\bu x\bu) = \sum_{i=1}^{k}\langle e_i| x\bu x\bu e_i\rangle_{\H} = \sum_{i=1}^{k}\langle \bu^{\dagger}xe_i| x\bu e_i\rangle_{\H}\;, \\
	&\;\,\Rightarrow \re\big( \mathrm{tr}(x\bu x\bu)\big) = \frac{1}{2}\sum_{i=1}^{k}\Big(\langle \bu^{\dagger}xe_i| xue_i\rangle_{\H} + \overline{\langle \bu^{\dagger}xe_i| x\bu e_i\rangle_{\H} }\Big) \\
	&\phantom{\Rightarrow \re\big( \mathrm{tr}(x\bu \bu^{\dagger}x) \big)}\;= \frac{1}{2}\sum_{i=1}^{k}\Big(\langle \bu^{\dagger}xe_i| x\bu e_i\rangle_{\H} + \langle x\bu e_i| u^{\dagger}xe_i\rangle_{\H}\Big)\;,\\
	&\qquad \quad\;\;\mathrm{tr}(x\bu \bu^{\dagger}x) =\sum_{i=1}^{k}\langle e_i| x\bu \bu^{\dagger}xe_i\rangle_{\H} = \sum_{i=1}^{k}\langle \bu^{\dagger}xe_i| \bu^{\dagger}xe_i\rangle_{\H}\;,\\
	&\qquad \quad\;\;\mathrm{tr}(x\bu\bu^{\dagger}x) = \mathrm{tr}(\bu^{\dagger}x^2\bu) = \sum_{i=1}^{k}\langle e_i| \bu^{\dagger}x^2\bu e_i\rangle_{\H} =\sum_{i=1}^{k}\langle x\bu e_i| x\bu e_i\rangle_{\H}\;,\\
	&\Rightarrow \re\big( \mathrm{tr}(x\bu \bu^{\dagger}x) \big) =\re\Big( \frac{1}{2} \sum_{i=1}^{k}\Big( \langle \bu^{\dagger}xe_i| \bu^{\dagger}xe_i\rangle_{\H} + \langle x\bu e_i| x\bu e_i\rangle_{\H}\Big)\Big) \\
	&\phantom{\Rightarrow \re\big( \mathrm{tr}(x\bu \bu^{\dagger}x) \big)}\;= \frac{1}{2} \sum_{i=1}^{k}\Big( \langle \bu^{\dagger}xe_i| \bu^{\dagger}xe_i\rangle_{\H} + \langle x\bu e_i| x\bu e_i\rangle_{\H}\Big)\;,
	\end{flalign*}
	where in the last step we used that $\langle \bu^{\dagger}xe_i| \bu^{\dagger}xe_i\rangle_{\H}= \|\bu^{\dagger}x \|^2_{\H}$ and $\langle x\bu e_i| x\bu e_i\rangle_{\H}= \|x\bu \|^2_{\H}$ are already real (for $i=1,\dots,k$), so we can leave out the "$\re$". \\
	Combining this we obtain:
	\begingroup\setlength\abovedisplayskip{5pt}
	\begin{flalign*}
	\re\Big( \mathrm{tr}(x\bu x\bu) + \mathrm{tr}(x\bu \bu^{\dagger}x) \Big) &= \frac{1}{2}\sum_{i=1}^{k}\Big(\langle (\bu^{\dagger}x +x\bu)e_i| x\bu e_i\rangle_{\H} + \langle (x\bu+\bu^{\dagger}x)e_i| \bu^{\dagger}xe_i\rangle_{\H}\Big) \\
	&=\frac{1}{2}\sum_{i=1}^{k}\langle (\bu^{\dagger}x +x\bu)e_i| (\bu^{\dagger}x +x\bu)e_i\rangle_{\H} \geq 0\;.
	\end{flalign*}
	\endgroup
	This shows the positive semi-definiteness of $\tilde{g}_x$. Moreover we see that $\tilde{g}_x(\bu,\bu)$ vanishes if and only if
	\begin{flalign*}
	0=\big(\bu^{\dagger}x +x\bu\big)|_{S_x+\bu^{\dagger}(S_x)} \;.
	\end{flalign*}
	But as $(\bu^{\dagger}x +x\bu)$ is obviously selfadjoint and its image is contained in $S_x+\bu^{\dagger}(S_x)$, it vanishes on the orthogonal complement of $S_x+\bu^{\dagger}(S_x)$ anyhow,
	so the previous equation is equivalent to
	\begin{flalign}
	0=\label{Bed Riem Metrik 0}
	\bu^{\dagger}x +x\bu\;.
	\end{flalign}
	Moreover, denoting $\pi_I := \pi_x$ as the orthogonal projection on $S_x = I$ and $ \pi_J$ as the orthogonal projection on $J=(S_x)^{\bot}$, 
	we can write
	\begin{flalign*}
	\bu = \bu\pi_I + \bu\pi_J\;.
	\end{flalign*}
	Plugging this in equation \eqref{Bed Riem Metrik 0} yields:
	\begin{flalign}
	\label{Hilfsgleichung}0 = (\bu\pi_I + \bu\pi_J)^{\dagger}x + x(\bu\pi_I + \bu\pi_J) =\pi_I\bu^{\dagger}x + x \bu\pi_I + \pi_J\bu^{\dagger}x + x \bu\pi_J \;,
	\end{flalign}
	Using $\bu|_I \in \mathrm{Symm}(S_x)$ we conclude
	\begin{flalign*}
	X^{-1}\pi_I\bu^{\dagger}X=X^{-1}(\bu|_I)^{\dagger}X = \bu|_I \;,\;\;\;\Rightarrow \pi_I\bu^{\dagger}X = X\bu|_I\;.
	\end{flalign*}
	As $x$ is selfadjoint this also yields
	\begin{flalign*}
	\pi_I\bu^{\dagger}x = x\bu\pi_I\;.
	\end{flalign*}
	Inserting this in \eqref{Hilfsgleichung} gives:
	\begin{flalign}
	\label{Test equation}
	0 = 2x \bu\pi_I + \pi_J\bu^{\dagger}x + x \bu\pi_J  \:.
	\end{flalign}
	Using a block operator notation for the orthogonal decomposition~$\H= I \oplus^{\bot} J$, this equation can be visualized as
	\begin{flalign*}
	\left( \begin{array}{cc}
	0 & 0\\
	0 & 0 
	\end{array} \right)
	= \left( \begin{array}{cc}
	2x \bu\pi_I & x \bu\pi_J\\
	\pi_J\bu^{\dagger}x  & 0 
	\end{array} \right) \:.
	\end{flalign*}
	This notation can be justified by ``testing'' equation \eqref{Test equation} with $(v,0),(0,w)\in \H=I\oplus ^{\bot} J$ with $v\in I$ and $w\in J$ arbitrary.

	Thus we see that each of the operators $2x\bu\pi_I$, $\pi_J\bu^{\dagger}x$ and $x\bu\pi_J$ must vanish individually. Furthermore as $x|_I$ has full rank and $\bu$ maps into $S_x=I$, this yields
	\begin{flalign*}
	\bu\pi_I = 0\;,\;\;\; \bu\pi_J = 0\;,
	\end{flalign*}	
	and therefore also
	\begin{flalign*}
	\bu=\bu(\pi_I+\pi_J)=0\;.
	\end{flalign*}		
	This proves the positive definiteness of $\tilde{g}_x(\bu,\bu)= \re\Big( \mathrm{tr}(x\bu x\bu) + \mathrm{tr}(x\bu\bu^{\dagger}x) \Big)$. 
\end{proof}
Now we can finally introduce a Riemannian metric on $\F^\reg$:
\begin{Lemma}
	Setting pointwise for any $x\in \F^\reg$:
	\begin{flalign*}
	g_x: T_x\F^\reg \times T_x\F^\reg \rightarrow \mathds{R}\;,\;\;\;
	(\bu,\bv) \mapsto D_2^2E_x|_x(\bu,\bv)\;,
	\end{flalign*}
	we obtain a well defined Riemannian metric on $\F^\reg$.\footnote{Where a Riemannian metric on a Banach manifold is defined just as in the finite-dimensional case but with smoothness with respect to the Fr{\'e}chet derivative.}
\end{Lemma}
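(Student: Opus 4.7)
The plan is to verify the three defining features of a Riemannian metric — well-defined symmetric bilinearity, positive definiteness, and Fr\'echet smoothness of the section $x \mapsto g_x$ — by reducing each point to an explicit formula in the distinguished chart $\phi_x$, where $\phi_x(x) = \pi_x$ and the computation preceding Lemma~\ref{Lemg_xinnerproduct} shows that $g_x$ agrees with the bilinear form $\tilde g_x$ on $V_x$ under the canonical identification $T_x\F^\reg \simeq V_x$.

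First, well-definedness and symmetric bilinearity. By Lemma~\ref{deriv. d^2} the function $E_x \circ \phi_y^{-1}$ is Fr\'echet-smooth, so $D^2(E_x \circ \phi_y^{-1})|_{\phi_y(x)}$ is a bounded bilinear form, symmetric by Lemma~\ref{SymmF-deriv}. Remark~\ref{Rem 4.2.17} (which rests on Lemma~\ref{coord. inv. d^2''}) shows that the right side of \eqref{Def D_2^2} does not depend on the choice of chart $\phi_y$ with $x \in \Omega_y$, so $g_x$ is intrinsically defined on $T_x\F^\reg$ and inherits symmetry and $\R$-bilinearity.

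Second, positive definiteness. Choosing specifically $y = x$ and evaluating \eqref{Def D_2^2} in the chart $\phi_x$ — so that $\phi_x(x) = \pi_x$ and the identification $T_x\F^\reg \simeq V_x$ is canonical — the computation carried out just before Lemma~\ref{Lemg_xinnerproduct} yields
\[
g_x(\bu, \bv) \;=\; D^2(E_x \circ \phi_x^{-1})\big|_{\pi_x}(\bu, \bv) \;=\; \tilde g_x(\bu, \bv)
\]
for $\bu, \bv \in V_x$. Positive definiteness of $g_x$ is then exactly the content of Lemma~\ref{Lemg_xinnerproduct}, so $g_x$ is a genuine real inner product on each fiber.

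The main obstacle is the third point, Fr\'echet smoothness of $x \mapsto g_x$ as a section of the appropriate bundle of bilinear forms on $T\F^\reg$. I would verify this by working in the local trivialization of the tangent bundle over $\Omega_{x_0}$ associated with a fixed base chart $\phi_{x_0}$: for $x \in \Omega_{x_0}$ the fiber $T_x\F^\reg$ is identified with the fixed Banach space $V_{x_0}$ via $D\phi_{x_0}$, and I would express $g_x$ via coordinate invariance as
\[
g_x(\bu, \bv) \;=\; D^2(E_x \circ \phi_{x_0}^{-1})\big|_{\phi_{x_0}(x)}\!\Big(D\phi_{x_0}^{-1}|_{\phi_{x_0}(x)}\bu,\; D\phi_{x_0}^{-1}|_{\phi_{x_0}(x)}\bv\Big).
\]
Substituting $y = x_0$ and $\psi = \phi_{x_0}(x)$ into the explicit formula derived in the proof of Lemma~\ref{deriv. d^2}, the right hand side becomes a finite sum of traces of products of the operators $x = \phi_{x_0}^{-1}(\psi)$, $x_0$, $\psi$, $\bu$, $\bv$ and their adjoints. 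Each factor depends Fr\'echet-smoothly on $\psi$: the reconstruction $\psi \mapsto x = \psi^\dagger x_0 \psi$ is bilinear and bounded hence smooth by Lemma~\ref{Properties F-derivative}~(iii); the operations of multiplication, adjoint, and restriction are all bounded multilinear maps between the relevant Banach spaces of finite-rank operators; and by Lemma~\ref{trace properties}~(ii) the trace is bounded linear on the finite-dimensional subspace of operators with range contained in $S_{x_0} + S_x$, which is controlled uniformly for $x$ close to $x_0$. Applying Lemma~\ref{F-derivative conecttinating} to the resulting composition yields Fr\'echet-smoothness of $\psi \mapsto g_{\phi_{x_0}^{-1}(\psi)}(\bu, \bv)$ jointly in $(\psi, \bu, \bv)$, which by definition is Fr\'echet-smoothness of the section $x \mapsto g_x$ in the trivialization. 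Finally, the last assertion of Theorem~\ref{thmriemann} — equivalence of the Riemannian and operator-norm topologies — then follows because on each $\Omega_{x_0}$ both topologies are induced by the restriction of the Hilbert-Schmidt norm on $\Shil(\H)$, as already noted in the proof of Theorem~\ref{thm311}.
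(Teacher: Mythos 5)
Your proposal is correct and follows essentially the same route as the paper: well-definedness and bilinearity from the coordinate-invariance lemma and Remark~\ref{Rem 4.2.17}, positive definiteness by identifying $g_x$ with $\tilde g_x$ in the chart $\phi_x$ and invoking Lemma~\ref{Lemg_xinnerproduct}, and Fr\'echet smoothness by writing $g$ in a fixed chart as a composition of the smooth second derivative of $E$ with smooth insertion/trace maps. The only cosmetic difference is that you phrase smoothness in a local trivialization with a fixed base chart while the paper tests against smooth vector fields composed with $D\phi_y$, but these amount to the same computation.
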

\begin{proof}
	First of all $g_x$ is well defined due to Lemma~\ref{coord. inv. d^2''} as explained in Remark~\ref{Rem 4.2.17}.\\
	Moreover, choosing representatives $[x,\bu,x]$, $[x,\bv,x] \in T_x\F^\reg$ we have:
	\begin{flalign*}
	&g_x([x,\bu,x],[x,\bv,x])=D_{2}^2E_x|_{x}([x,\bu,x],[x,v,x])\\
	=&D^2(E_x \circ \phi_x^{-1})|_{\phi_x(x)}\Big(\underbrace{D\phi_x([x,\bu,x])}_{=\bu},\underbrace{D\phi_x([x,v,x])}_{=\bv}\Big)=\tilde{g}_x(\bu,\bv)\;,
	\end{flalign*}
	and since we have already seen that for any $x \in \F^\reg$, $\tilde{g}_x$ defines a symmetric positive-definite bilinear form, so does $g_x$.\\
	Thus it only remains to show that $g$ is Fr{\'e}chet-smooth. But due to the (coordinate invariant) definition of $D_2^2E_x|_x$ in~\eqref{Def D_2^2} this follows immediately from the Fr{\'e}chet-smoothness of $D^2(E_x \circ \phi_y^{-1})$ (for this see Lemma~\ref{deriv. d^2}, in particular equation \eqref{eq. Gaussian}). More precisely, since for any two smooth vector fields $\bu,\bv \in\Gamma(\F^\reg,T\F^\reg)$ for any chart $\phi_y$ with $x \in \Omega_y$ also $D\phi_y\circ \bu\circ \phi_y^{-1}$ and $D\phi_y\circ \bv\circ \phi_y^{-1}$ are smooth,  we have
	\begin{flalign*}
	&g_{\phi_y^{-1}(\psi)}(\bu\circ \phi_y^{-1}(\psi),\bv\circ \phi_y^{-1}(\psi)) = D_{2}^2E_x|_{x}(\bu,\bv)\\
	=&D^2(E_x \circ \phi_y^{-1})|_{\psi}\Big(D\phi_y\big(\bu\circ \phi_y^{-1}(\psi)\big),D\phi_y\big(\bv\circ \phi_y^{-1}(\psi)\big)\Big)\;,\;\;\;\forall \, \psi \in W_y\;,
	\end{flalign*}
	which is Fr{\'e}chet-smooth as composition of Fr{\'e}chet-smooth maps. More precisely, introducing the mappings
	\begin{flalign*}
	B_1: \Lin(V_y,\mathds{R})_2 \times V_y \rightarrow \Lin(V_y, \mathds{R})\;,&\;\;\; (A,\bv)\mapsto A\bv\;,\\
	B_2: \Lin(V_y,\mathds{R}) \times V_y \rightarrow \mathds{R}\;,&\;\;\;
	(A',\bv')\mapsto A'\bv'\;,
	\end{flalign*}
	which are both obviously $\mathds{R}$-bilinear and continuous (and thus Fr{\'e}chet-smooth), we can rewrite the previous equation to
	\begin{flalign*}
	&g_{\phi_y^{-1}(\psi)}(\bu\circ \phi_y^{-1}(\psi),\bv\circ \phi_y^{-1}(\psi)) \\
	=& B_2\Big(B_1\Big((d^2(x,\phi_y^{-1}(.)))^{(2)}|_{\psi}, \,D\phi_y(\bu\circ \phi_y^{-1}(\psi))\Big),\, D\phi_y(\bv\circ \phi_y^{-1}(\psi))\Big)\;,
	\end{flalign*}
	which is now clearly a composition of Fr{\'e}chet-smooth maps.
\end{proof}

\Thanks{{{\em{Acknowledgments:}} We are grateful to Olaf M\"uller, Marco Oppio, Johannes
Wurm and the referee
for helpful discussions. M.L.\ acknowledges support by the Studienstiftung des deutschen Volkes.


\begin{thebibliography}{10}

\bibitem{cfsweblink}
\emph{Link to web platform on causal fermion systems:
  \href{https://www.causal-fermion-system.com}{www.causal-fermion-system.com}}.

\bibitem{belticua}
D.~Belti\c{t}\u{a}, T.~Goli\'{n}ski, and A.-B. Tumpach, \emph{Queer {P}oisson
  brackets}, arXiv:math-ph/1710.03057 [math.FA], J. Geom. Phys. \textbf{132}
  (2018), 358--362.

\bibitem{lagrange}
Y.~Bernard and F.~Finster, \emph{On the structure of minimizers of causal
  variational principles in the non-compact and equivariant settings},
  arXiv:1205.0403 [math-ph], Adv. Calc. Var. \textbf{7} (2014), no.~1, 27--57.

\bibitem{brink}
D.~Brink, \emph{H\"{o}lder continuity of roots of complex and {$p$}-adic
  polynomials}, Comm. Algebra \textbf{38} (2010), no.~5, 1658--1662.

\bibitem{coleman}
R.~Coleman, \emph{Calculus on {N}ormed {V}ector {S}paces}, Universitext,
  Springer, New York, 2012.

\bibitem{dieudonne1}
J.~Dieudonn{\'e}, \emph{Foundations of {M}odern {A}nalysis}, Academic Press,
  New York-London, 1969, Enlarged and corrected printing, Pure and Applied
  Mathematics, Vol. 10-I.

\bibitem{dunford2}
N.~Dunford and J.T. Schwartz, \emph{Linear {O}perators. {P}art {II}: {S}pectral
  theory. {S}elf adjoint operators in {H}ilbert space}, With the assistance of
  William G. Bade and Robert G. Bartle, Interscience Publishers John Wiley \&
  Sons\ New York-London, 1963.

\bibitem{discrete}
F.~Finster, \emph{A variational principle in discrete space-time: Existence of
  minimizers}, arXiv:math-ph/0503069, Calc. Var. Partial Differential Equations
  \textbf{29} (2007), no.~4, 431--453.

\bibitem{continuum}
\bysame, \emph{Causal variational principles on measure spaces},
  arXiv:0811.2666 [math-ph], J. Reine Angew. Math. \textbf{646} (2010),
  141--194.

\bibitem{cfs}
\bysame, \emph{The {C}ontinuum {L}imit of {C}ausal {F}ermion {S}ystems},
  arXiv:1605.04742 [math-ph], Fundamental Theories of Physics, vol. 186,
  Springer, 2016.

\bibitem{nrstg}
\bysame, \emph{Causal fermion systems: A primer for {L}orentzian geometers},
  arXiv:1709.04781 [math-ph], J. Phys.: Conf. Ser. \textbf{968} (2018), 012004.

\bibitem{review}
F.~Finster and M.~Jokel, \emph{Causal fermion systems: An elementary
  introduction to physical ideas and mathematical concepts}, arXiv:1908.08451
  [math-ph], {P}rogress and {V}isions in {Q}uantum {T}heory in {V}iew of
  {G}ravity (F.~Finster, D.~Giulini, J.~Kleiner, and J.~Tolksdorf, eds.),
  Birkh\"auser Verlag, Basel, 2020, pp.~63--92.

\bibitem{fockbosonic}
F.~Finster and N.~Kamran, \emph{Complex structures on jet spaces and bosonic
  {F}ock space dynamics for causal variational principles}, arXiv:1808.03177
  [math-ph], to appear in Pure Appl. Math. Q. (2021).

\bibitem{dirac}
F.~Finster, N.~Kamran, and M.~Oppio, \emph{The linear dynamics of wave
  functions in causal fermion systems}, arXiv:2101.08673 [math-ph], to appear
  in J. Differential Equations (2021).

\bibitem{gaugefix}
F.~Finster and S.~Kindermann, \emph{A gauge fixing procedure for causal fermion
  systems}, arXiv:1908.08445 [math-ph], J. Math. Phys. \textbf{61} (2020),
  no.~8, 082301.

\bibitem{dice2014}
F.~Finster and J.~Kleiner, \emph{Causal fermion systems as a candidate for a
  unified physical theory}, arXiv:1502.03587 [math-ph], J. Phys.: Conf. Ser.
  \textbf{626} (2015), 012020.

\bibitem{jet}
\bysame, \emph{A {H}amiltonian formulation of causal variational principles},
  arXiv:1612.07192 [math-ph], Calc. Var. Partial Differential Equations
  \textbf{56:73} (2017), no.~3, 33.

\bibitem{intro}
F.~Finster, J.~Kleiner, and J.-H. Treude, \emph{An {I}ntroduction to the
  {F}ermionic {P}rojector and {C}ausal {F}ermion {S}ystems}, in preparation,
  \href{https://causal-fermion-system.com/intro-public.pdf}{www.causal-fermion-system.com/intro-public.pdf}.

\bibitem{noncompact}
F.~Finster and C.~Langer, \emph{Causal variational principles in the
  $\sigma$-locally compact setting: {E}xistence of minimizers},
  arXiv:2002.04412 [math-ph], to appear in Adv. Calc. Var. (2021).

\bibitem{support}
F.~Finster and D.~Schiefeneder, \emph{On the support of minimizers of causal
  variational principles}, arXiv:1012.1589 [math-ph], Arch. Ration. Mech. Anal.
  \textbf{210} (2013), no.~2, 321--364.

\bibitem{hilgert+neeb}
J.~Hilgert and K.-H. Neeb, \emph{Structure and {G}eometry of {L}ie {G}roups},
  Springer Monographs in Mathematics, Springer, New York, 2012.

\bibitem{kriegl+michor}
A.~Kriegl and P.W. Michor, \emph{The {C}onvenient {S}etting of {G}lobal
  {A}nalysis}, Mathematical Surveys and Monographs, vol.~53, American
  Mathematical Society, Providence, RI, 1997.

\bibitem{lax}
P.D. Lax, \emph{Functional {A}nalysis}, Pure and Applied Mathematics (New
  York), Wiley-Interscience [John Wiley \& Sons], New York, 2002.

\bibitem{lee-riemannian}
J.M. Lee, \emph{Riemannian {M}anifolds: {A}n {I}ntroduction to {C}urvature},
  Graduate Texts in Mathematics, vol. 176, Springer-Verlag, New York, 1997.

\bibitem{lagrange-hoelder}
M.~Oppio, \emph{H\"older continuity of the integrated causal {L}agrangian in
  {M}inkowski space}, in preparation.

\bibitem{oppio}
\bysame, \emph{On the mathematical foundations of causal fermion systems in
  {M}inkowski space}, arXiv:1909.09229 [math-ph], Ann. Henri Poincar\'e
  \textbf{22} (2021), no.~3, 873--949.

\bibitem{rudin}
W.~Rudin, \emph{Real and {C}omplex {A}nalysis}, third ed., McGraw-Hill Book
  Co., New York, 1987.

\bibitem{wernerFA}
D.~Werner, \emph{Funktionalanalysis}, eighth ed., Springer-Verlag, Berlin,
  2018.

\bibitem{zeidlerFA4}
E.~Zeidler, \emph{Nonlinear {F}unctional {A}nalysis and its {A}pplications.
  {IV}}, Springer-Verlag, New York, 1988, Applications to mathematical physics,
  Translated from the German and with a preface by J{\"u}rgen Quandt.

\end{thebibliography}
\providecommand{\bysame}{\leavevmode\hbox to3em{\hrulefill}\thinspace}
\providecommand{\MR}{\relax\ifhmode\unskip\space\fi MR }
\providecommand{\MRhref}[2]{%
  \href{http://www.ams.org/mathscinet-getitem?mr=#1}{#2}
}
\providecommand{\href}[2]{#2}

\end{document}